\begin{document}
\title{Quantum Curved Tetrahedron, Quantum Group Intertwiner Space, and Coherent States}


\author{{\bf Chen-Hung Hsiao}}\email{chsiao2017@fau.edu}
\affiliation{Department of Physics, Florida Atlantic University, 777 Glades Road, Boca Raton, FL 33431, USA}

\author{{\bf Qiaoyin Pan}}\email{qpan@fau.edu}
\affiliation{Department of Physics, Florida Atlantic University, 777 Glades Road, Boca Raton, FL 33431, USA}

\date{\today}

\begin{abstract}

In this paper, we construct the phase space of a constantly curved tetrahedron with fixed triangle areas in terms of a pair of Darboux coordinates called the length and twist coordinates, which are in analogy to the Fenchel-Nielsen coordinates for flat connections, and their quantization. The curvature is identified to the value of the cosmological constant, either positive or negative. The physical Hilbert space is given by the $\UQ$ intertwiner space. We show that the quantum trace of quantum monodromies, defining the quantum length operators, form a fusion algebra and describe their representation theory. We also construct the coherent states in the physical Hilbert space labeled by the length and twist coordinates. These coherent states describe quantum curved tetrahedra and peak at points of the tetrahedron phase space. This works is closely related to 3+1 dimensional Loop Quantum Gravity with a non-vanishing cosmological constant. The coherent states constructed herein serve as good candidates for the application to the spinfoam model with a cosmological constant. 

\end{abstract}

\maketitle
\tableofcontents
\renewcommand\thesection{\Roman{section}}

\section{Introduction}

Quantum tetrahedron is the building block in the theory of Loop Quantum Gravity (LQG). 
It is essential for the construction of both the canonical formalism and covariant formalism, the latter also called the spinfoam model. In the canonical formalism, quantum tetrahedra build up the spacial quantum geometries, which evolve dynamically under the Wheeler-DeWitt equation. In the spinfoam model, five quantum tetrahedra are included on the boundary of a four-simplex, whose amplitude describes the elementary dynamics of a discretized four-manifold. 
In the case of vanishing cosmological constant, classically, \textit{Minkowski theorem} \cite{minkowski1897allgemeine} guarantees that a tetrahedron can be reconstructed from a set of four vectors $\vec{J}_i=a_i\hat{n}_i,i=1,...,4$ satisfying flat closure condition, i.e. $\sum^4_{i=1} \vec{J}_i=\vec{0}$, in which the information of the area $a_i$ and normal $\hat{n}_i$ of each face of the tetrahedron are encoded. The space of such vectors modulo rotation has the structure of a symplectic manifold \cite{10.4310/jdg/1214459218,Bianchi:2010gc} and is known as the \textit{Kapovich-Millson phase space} $S_4(a_1,a_2,a_3,a_4)$ specified by four areas, 
\be
S_{4}(a_1,a_2,a_3,a_4)=\{\hat{n}_i\in (S^2)^{\times 4}| \sum^4_{i=1} \vec{J}_i=\vec{0}\}/\SO(3)\;, \quad \vec{J}_i:=a_i\hat{n}_i\in\R^3\,,
\ee
where the modulo $\SO(3)$ comes from the fact that the flat closure condition is invariant under the simultaneous $\SO(3)$ rotation and those four normals do not lie within the same plane. It is a two-dimensional real space parametrized by a pair of variables $(\mu,\theta)$ with the Poisson bracket $\{\mu,\theta\}=1$. $\mu$ is called the diagonal length and $\theta$ is the dihedral angle between the two triangles as illustrated in fig.\ref{fig:4-gon_flat}. 
\begin{figure}[h!]
\centering
\begin{tikzpicture}
    \coordinate (A) at (0,-2.0);
\coordinate (B) at (2,-0);
\coordinate (C) at (-2,0);
\coordinate (D) at (0,2.0);

\draw[->, thick, >=stealth] (A) -- (B);
\draw[->, thick, >=stealth] (B) -- (D);
\draw[->, thick, >=stealth] (D) -- (C);
\draw[->, thick, >=stealth] (C) -- (A);
\draw[red, ->, thick, >=stealth] (A) -- (D);
\draw (0,0) node[red,anchor=west] {$\mu$};
\draw[->, thick, >=stealth] (0,0.2) -- (0.4,0.6);
\draw[->, thick, >=stealth] (0,0.2) -- (-0.4,0.6);
\draw[<-> ,blue, thick] (0.3,0.5) to[out=120, in=60] (-0.3,0.5);
\draw (0,0.9) node[blue,anchor=west] {$\theta$};
\draw (1,-1) node[anchor=west] {$\Vec{J}_1=a_1\hat{n}_1$};
\draw (1,1) node[anchor=west] {$\Vec{J}_2=a_2\hat{n}_2$};
\draw (-1,1) node[anchor=east] {$\Vec{J}_3=a_3\hat{n}_3$};
\draw (-1,-1) node[anchor=east] {$\Vec{J}_4=a_4\hat{n}_4$};
\draw (A) node[anchor=north] {$1$};
\draw (B) node[anchor=west] {$2$};
\draw (C) node[anchor=east] {$4$};
\draw (D) node[anchor=south] {$3$};
\end{tikzpicture}
\caption{A polygon with four vectors in $\mathbb{R}^3$. Here, $\Vec{J}_i=a_i\hat{n}_i,i=1,...,4$ and all the four normals do not lie within the same plane. $\mu$ ({\it in red}) is the diagonal length and $\theta$ ({\it in blue}) is the dihedral angle between the triangle $f_{123}$ bounded by vertices 1,2,3 and triangle $f_{134}$ bounded by vertices 1,3,4.}
    \label{fig:4-gon_flat}
\end{figure}
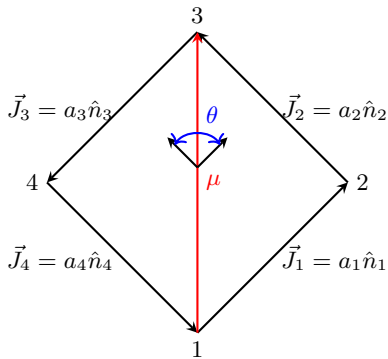
 We refer to it as the space of shapes of a tetrahedron with fixed areas \cite{Bianchi:2010gc}. The quantization of the phase space is the $4$-valent $\SU(2)$ intertwiner space $\mathrm{Inv}_{\SU(2)}(V^{j_1}\otimes V^{j_2}\otimes V^{j_3}\otimes V^{j_4})$, which is the solution space of the quantum flat closure condition. The coherent intertwiner was constructed in \cite{Livine:2007vk}. The expectation values of geometrical operators using these states in the semi-classical limit peak at the point of the phase space \cite{Conrady:2009px}.

In the case of a non-vanishing cosmological constant, the \textit{Kapovich-Millson} description has been generalized to the shapes of constantly curved tetrahedron, where the constant curvature is identified with cosmological constant and the polygon is on $S^3$. The phase space of shapes of a constantly curved tetrahedron is parametrized by the diagonal length $\ln{x}$ and the bending angle $\ln{y}$ (see \eqref{eq:def_theta_phi} and fig.\ref{fig:4-gon}). This phase space is identified with the moduli space of $\SU(2)$ flat connections on a four-punctured sphere, which is the solution space of the classical curved closure condition 
\be
M_{4}M_{3}M_{2}M_{1}=\Id_{\SU(2)},
\ee
where $M_\nu, \;\nu=1,...,4$ is the $\SU(2)$ monodromy around the $\nu$-th face of the homogeneously curved tetrahedron.

The length and the twist coordinates are in analogy to to the complex Fenchel-Nielsen coordinates for flat connections, in that they have the same Poisson bracket given by $\{\ln x,\ln y\}=1$. Inspired by this identification, the diagonal length should be related to the Fenchel-Nielsen length, which, classically, is defined as the Wilson loop along the loop enclosing a pair of punctures.

The quantization of the phase space is the $4$-valent intertwiner space $W^0(K_1,K_2,K_3,K_4)$ of quantum group $\UQ$, which is proved to be the only solution space of the quantum curved closure condition \cite{Han:2023wiu}. The quantum counterpart of the Wilson loop is constructed by the combinatorial quantization \cite{Alekseev:1994au,Alekseev:1994pa,Alekseev:1995rn} (see also \cite{Han:2023wiu}), where the operator algebras, i.e., the graph algebra and loop algebra, on a four-punctured sphere are constructed. The quantum monodromy of the loop enclosing a pair of punctures satisfies the defining relations of a loop algebra, from which one constructs the $q$-deformed Wilson loop operator $c^I_{\ell}$, where $I$ spans all the physical irreducible representations of $\UQ$. However, the elements $c^I_{\ell}$ are no longer central elements, in contrast to those obtained from loops $\ell_\nu, \nu=1,...,4$ around each puncture. Nevertheless, the elements $c^I_{\ell}$ still form a fusion algebra, and due to the first-pinching theorem, the intertwiner space $W^0(K_1,K_2,K_3,K_4)$ is decomposed into $\bigoplus_{J} W^J(K_1,K_2) \otimes W^{\bar{J}}(K_3,K_4)$, where $\Bar{J}$ represents the dual representation to $J$ and the eigenvalue of the element $c^I_{\ell}$ in the intertwiner space is given by the unnormalized $S$-matrix, where $s_{IJ}:=\tr_q^I \otimes \tr_q^J(R'R)$.

Classically, the $\SU(2)$ monodromy matrix is a $2$ by $2$ matrix, i.e., the irreducible representation is taken to be the fundamental representation. In the quantum case, considering the fundamental representation, the eigenvalue $s_{\frac{1}{2} J}$ is taken a simple expression as
 \be
  s_{\f12 J}=e^{\frac{i \pi}{k+2}(2J+1)}+ e^{-\frac{i \pi}{k+2}(2J+1)}\;,
 \ee
 where $k\in \N$. 
 Inspired by the expression above, we define the quantum length operator $\bold{\tilde{x}}$ such that its action on the intertwiner space $W^0(K_1,K_2,K_3,K_4)$ is defined as
\be
\bold{\tilde{x}}|\Psi^J\rangle=e^{\frac{i \pi}{k+2}(2J+1)} |\Psi^J\rangle,\quad \bold{\tilde{y}}|\Psi^J\rangle= |\Psi^{J+1}\rangle\,,
\ee
where $\bold{\tilde{y}}$ is treated as the translational operator. $\bold{\tilde{x}}$ and $\bold{\tilde{y}}$ satisfy the commutation relation:
\be
\bold{\tilde{x}}\bold{\tilde{y}}=q\bold{\tilde{y}}\bold{\tilde{x}}\;.
\ee

 To define the coherent state in the physical Hilbert space $\cH\equiv W^0(K_1,K_2,K_3,K_4)$, we firstly define the auxiliary space, $\mathcal{H}_{\mathrm{aux}}=\mathbb{C}^{2k+4}$, which carries the irreducible representation of the Weyl algebra, $\bold{x}\bold{y}=q^{\frac{1}{2}}\bold{y}\bold{x}$. Their actions are defined as
\be
\bold{x}|\Psi^J\rangle=e^{\frac{i \pi}{k+2}(2J+1)} |\Psi^J\rangle,\quad \bold{y}|\Psi^J\rangle= |\Psi^{J+\frac{1}{2}}\rangle\,.
\ee
It is beneficial to consider the auxiliary space as it can be naturally viewed as the quantization of the phase space of a torus and that a set of coherent states $\psi_{(\Xt_0,\Yt_0)}(x)$ on a torus is well defined \cite{Gazeau:2009zz}.
To get back to the physical Hilbert space, which is a subspace of $\cH_{\text{aux}}$, we define a projector $P:\mathcal{H}_{\mathrm{aux}}\to \cH$. The coherent states in the $\cH$ then are the projected ones from the auxiliary space. In the semi-classical limit, both $k$ and representation labels $K_{\nu}$ increase at the same rate (i.e. $k=\lambda k$, $K_{\nu}=\lambda K_{\nu}$). The expectation of $\bold{\Tilde{x}}$ and $\bold{\Tilde{y}}$ in the projected coherent states peak at a point $(\tilde{X}_0,\tilde{Y}_0)$ of the phase space, with $\tilde{X}_0$ being the logarithm of the length variable and $\tilde{Y}_0$ being the logarithm of the bending angle (up to an imaginary number). More precisely,
\be
\langle\Tilde{\bold{x}}\rangle=e^{i\Xt_0}+O\lb e^{-\lambda}/\sqrt{\lambda}\rb\,,\quad
\langle\Tilde{\bold{y}}\rangle=e^{2i\Yt_0}+ O \lb e^{-\lambda}/\sqrt{\lambda}\rb \;,
\ee
where the position coordinate $\Xt_0$ needs to satisfy the triangle inequality: $2 \max\left(\frac{|K_1-K_2|}{k+2},\frac{|K_3-K_4|}{k+2}\right) \leq \frac{\Xt_0}{\pi} \leq 2 \min\left(\frac{u(K_1,K_2)}{k+2},\frac{u(K_3,K_4)}{k+2}\right)$ with $u(K_i,K_j)= \min(K_i+K_j,k-K_i-K_j)$. Otherwise, the expectation values of $\Tilde{\bold{x}}$ and $\Tilde{\bold{y}}$ exponential decay by $\lambda$: 
\be
\langle\Tilde{\bold{x}}\rangle= O \lb e^{-\lambda}/\sqrt{\lambda}\rb \;,\quad
\langle\Tilde{\bold{y}}\rangle= O \lb e^{-\lambda}/\sqrt{\lambda}\rb \;.
\ee

This paper is organized as follows: Section \ref{sec:FN_coord} reviews the main ideas of the phase space of a constantly curved tetrahedron, which can be described by length and twist coordinates, analogous to complex Fenchel-Nielsen coordinates for flat connections. The length and twist coordinates possess geometrical interpretations in the 4-gon on $S^3$.
In Section \ref{sec:review}, we review some facts about the moduli algebra and its representation theory. For detailed information, we refer to \cite{Fock:1998nu, Alekseev:1994au, Alekseev:1994pa, Alekseev:1995rn} (See also \cite{Han:2023wiu}).
In Section \ref{sec:intro}, we prove that the quantum monodromy of the loop around a pair of punctures satisfies the defining relations of a loop algebra. The fusion algebra $\mathcal{V}(\ell)$ generated by elements $c^I_{\ell}$ can be constructed from these quantum monodromies. However, all the elements $c^I_{\ell}$ remain gauge-invariant; they are no longer central elements.
In Section \ref{sec:rep_fusion}, the intertwiner space has the following decomposition: $W^0(K_1,K_2,K_3,K_4)=\bigoplus_{J} W^J(K_1,K_2)\otimes W^{\bar{J}}(K_3,K_4)$, where $J$ must satisfy the triangle inequality. We calculate the eigenvalue of the elements $c^I_{\ell}$ in the intertwiner space.
In Section \ref{sec:quantize_coordinates}, we define the quantum counterparts of the twist and length operators, which form a Weyl algebra, and their actions on the intertwiner space. The commutation relation of the Weyl algebra serves as the quantization of the Poisson bracket of the length and twist coordinates.
In Section \ref{sec:coherent_state}, we introduce the auxiliary space, which carries the irreducible representations of the Weyl algebra quantizing the torus. Here, we construct the coherent state in the auxiliary space. Moving on to Section \ref{sec:projection_and_exp}, we define a set of projectors, and the coherent states in the intertwiner space are obtained through projections. Lastly, we calculate the expectation values of the length and twist operators in the semiclassical limit, which peak at phase space points.


\section{Phase space of a tetrahedron and the length-twist coordinates}
\label{sec:FN_coord}

In this section, we briefly review the phase space of a convex constantly curved tetrahedron, which we will denote as a curved tetrahedron for conciseness, as the moduli space of $\PSU(2)\cong\SO(3)$ flat connection on a four-punctured sphere and describe the length and twist coordinates as a pair of Darboux coordinates of the phase space. More details of the former can be found in \cite{Han:2023wiu} and references therein. 

We identify the curvature of the curved tetrahedron, which could be positive or negative, to be the cosmological constant $\Lambda$. Denote $s=\sgn(\Lambda)$ and the $n$-dimensional constantly curved space as $\bE^{n,s}$ with $\bE^{n,+}=S^n$ and $\bE^{n,-}=\bH^n$. Each boundary triangle of the tetrahedron is flatly embedded in a 2D subspace $\bE^{2,s}$ of the same $\bE^{3,s}$, as illustrated in fig.\ref{fig:tetrahedra}.  
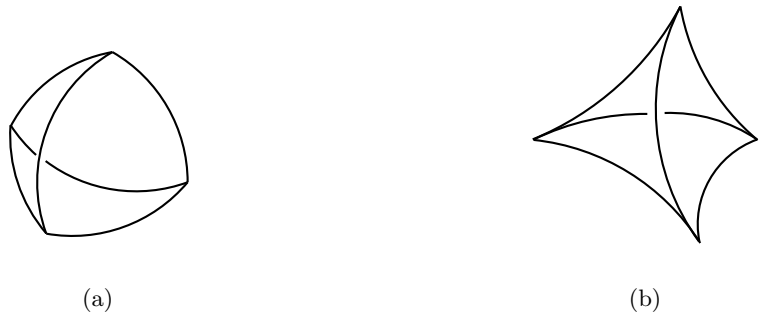
\begin{figure}[h!]
\centering
\begin{subfigure}{0.4\linewidth}
     \begin{tikzpicture}[scale=2]
  \draw[black, thick] (3,3) arc (0:60:1);
 \draw[black, thick] (2.5,3.866) arc (100:152.5:0.95);
 \draw[black, thick] (2.5,3.866) arc (120:200:1);
 \draw[black, thick] (2.06,2.66) arc (-139.5:-184:1);
 \draw[black, thick] (3,3) arc (-40:-100:1);
 \draw[black, thick] (3,3) arc (-70:-127:1);
 \draw[black, thick] (1.823,3.380) arc (-147:-132:1);
 \end{tikzpicture} 

	\subcaption{}
	\label{fig:SphericalTetra}
\end{subfigure}
\begin{subfigure}{0.4\linewidth}
    \begin{tikzpicture}[scale=3]
  \draw[black,thick] (-3,-3) arc (-27:36:-1);
 \draw[black,thick] (-3,-3) arc (-28.2:-67.8:1.3);
 \draw[black,thick] (-3,-3) arc (10:50:-1);
 \draw[black,thick] (-2.66,-3.59) arc (-124:-88:-0.69);
 \draw[black,thick] (-3.65,-3.59) arc (-64.1
 :-90.5:-1.13);
 \draw[black,thick] (-2.66,-3.59) arc (-70:11:-0.4);
 \draw[black,thick] (-2.912,-4.049) arc (-147.5:-96:-1);
  \end{tikzpicture}
	\subcaption{}
	\label{fig:HyperbolicTetra}
 \end{subfigure}
\caption{
{\it (a)} A tetrahedron flatly embedded in $S^3$. {\it (b)} A tetrahedron flatly embedded in $\bH^3$.}
\label{fig:tetrahedra}
\end{figure}
A (non-degenerate) curved tetrahedron can be uniquely reconstructed from the so-called {\it closure condition}. This is described by the {\it curved Minkowski Theorem} \cite{Haggard:2015ima}. 
In short, it states that, given a cosmological constant $\Lambda$  
and four $\PSU(2)$ group elements $\{\Mt_1,\Mt_2,\Mt_3,\Mt_4\}$ satisfying the closure condition
\be
\Mt_4\Mt_3\Mt_2\Mt_1=\Id_{\PSU(2)}\,,
\label{eq:closure}
\ee
we interpret these group elements as holonomies based at the same vertex ${\mathfrak b}$ of a curved tetrahedron and each $\Mt_\nu\, (\nu=1,\cdots4)$ is along a {\it simple path}\footnote{Given the closure condition \eqref{eq:closure}, the simple path for $\Mt_\nu$ with $\nu=1,2,3$ is simply a path starting from ${\mathfrak b}$ along the boundary of the triangle $\Mt_\nu$ encloses. On the other hand, the simple path for $\Mt_4$ is made of three parts. It starts from ${\mathfrak b}$ along the edge shared by the simple paths for $\Mt_1$ and $\Mt_3$, called the special edge, and arrives at a vertex on the remaining triangle, then along the boundary of the triangle, and finally going back to ${\mathfrak b}$ along the special edge. See \eg fig.2 of \cite{Han:2023wiu} for an illustration. } around a triangle in the same orientation of the triangle as a Riemann surface. Then these four holonomies uniquely determine a curved tetrahedron up to isometry and  
\be
\Mt_\nu=\exp\lb  \frac{\Lambda}{3} a_\nu \hat{n}_\nu \cdot\vec{\tau}\rb
\label{eq:param_holo}
\ee 
encodes the area $a_\nu$ and the outward-pointing normal $\hat{n}_\nu$ of the $\nu$-th triangle, where $\vec{\tau}=\f{1}{2i}\vec{\sigma}$ is the vector of the $\su(2)$ generators. We refer to \cite{Haggard:2015ima} (see also \cite{Han:2023wiu}) for a more detailed description of the curved Minkowski theorem. 

On the other hand, the holonomies $\{\Mt_\nu\}_{\nu=1,\cdots,4}$ satisfying the closure condition can be interpreted as the fundamental group of a four-punctured sphere, denoted as $\Sfour$, represented in $\PSU(2)$ group, which describes the moduli space of $\PSU(2)$ flat connection on $\Sfour$:
\be
\cM_{\Flat}^0(\Sfour, \PSU(2))=\{\Mt_1,\Mt_2,\Mt_3,\Mt_4\in\PSU(2):\Mt_4\Mt_3\Mt_2\Mt_1=\Id_{\PSU(2)}\}/\PSU(2)\,,
\label{eq:flat_connection_def}
\ee
where each $\Mt_\nu$ is now the holonomy of the loop $\ell_\nu$ around the $\nu$-th puncture and the quotient is by the conjugate action of $\PSU(2)$. $\cM_{\Flat}^0(\Sfour, \PSU(2))$ is a Poisson manifold but not a symplectic one. 
We fix the conjugacy classes of the holonomies $\{\Mt_\nu\}_{\nu=1,2,3,4}$, each of which is labelled by the eigenvalue $\tilde{\lambda}_\nu$. Then it defines a 2-dimensional symplectic space, denoted as $\cM_{\Flat}(\Sfour, \PSU(2))$. Given the holonomies parametrized as in \eqref{eq:param_holo} in the geometrical interpretation of a curved tetrahedron, $\tilde{\lambda}_\nu$ encodes the area of the $\nu$-th triangle by $\f12\lb \tilde{\lambda}_\nu+\tilde{\lambda}^{-1}_\nu\rb=\cos\lb\f{|\Lambda|}{6}a_\nu \rb$. In this sense, we also call $\cM_{\Flat}(\Sfour, \PSU(2))$ the phase space of a curved tetrahedron with fixed triangle areas. 

For convenience of quantization, we lift each $\Mt_\nu\in\PSU(2)$ to $M_\nu\in\SU(2)$ whose eigenvalue $\lambda_\nu=\epsilon_\nu\sqrt{\tilde{\lambda}_\nu}$ is randomly chosen to be the positive ($\epsilon_\nu=+$) or negative ($\epsilon_\nu=-$) square root of that of $\Mt_\nu$. Then the corresponding symplectic space is
\be
\cM_{0,4}^{\vec{\lambda}}=\left\{M_1,M_2,M_3,M_4\in\SU(2):M_\nu=G_\nu \mat{cc}{\lambda_\nu&0\\0&\lambda_\nu^{-1}}G^{-1}_\nu,\,G_\nu\in\SU(2); \, M_4M_3M_2M_1=\Id_{\SU(2)}\right\}/\SU(2)\,,
\label{eq:flat_connection_def}
\ee
with $\{\lambda_\nu\}_{\nu=1,\cdots,4}$ fixed. The dimension of phase space $\cM_{0,4}^{\vec{\lambda}}$ is 2. 
We are interested in the Darboux coordinates of this phase space, (at least some of) which have interpretations of holonomies on the 4-punctured sphere. Such coordinates can be provided by the {\it length-twist coordinates} based on the pants decomposition of $\Sfour$.  

A $\Sfour$ can be decomposed into a pair of pants by cutting along a closed curve, say $c$, enclosing (any) two punctures, as illustrated in fig.\ref{fig:pants}.
\begin{figure}[h!]
\begin{tikzpicture}
[scale=1.2,one end extended/.style={shorten >=-#1},
 one end extended/.default=1cm]
\coordinate (A) at (0,0.34);
\coordinate (B) at (0,-0.34);
\coordinate (C) at (0,-1.16);
\coordinate (D) at (0,-1.84);
\coordinate (O) at (2,-0.1);
\coordinate (P) at (2,-1.4);
\coordinate (R) at (1.2,-0.75);
\coordinate (S) at (2.8,-0.75);
\coordinate (E) at (4,0.34);
\coordinate (F) at (4,-0.34);
\coordinate (G) at (4,-1.16);
\coordinate (H) at (4,-1.84);
\draw[thick, shift={(0,0)}] (0,0) node{2} ellipse (0.17cm and 0.34cm);
\draw[thick, shift={(0,-1.5)}] (0,0) node{1} ellipse (0.17cm and 0.34cm);
\draw[thick, shift={(4,-1.5)}] (0,0) node{4} ellipse (0.17cm and 0.34cm);
\draw[thick, shift={(4,0)}] (0,0) node{3} ellipse (0.17cm and 0.34cm);
 \draw[thick] (A) to[out=0,in=180](O);
 \draw[thick] (D) to[out=0,in=180]
 (P);
 \draw[thick] (E) to[out=180,in=0]
 (O);
 \draw[thick] (H) to[out=180,in=0]
 (P);
 \draw[thick] (B) to[out=0,in=100]
 (R);
 \draw[thick] (C) to[out=0,in=-100]
 (R);
 \draw[thick] (F) to[out=180,in=80]
 (S);
 \draw[thick] (G) to[out=180,in=-80]
 (S);
\draw[thick, shift={(0,0)},red] (2,-1.4) arc (-90:90:0.25cm and 0.65cm);
  \draw[thick, dashed, shift={(0,0)},red] (2,-0.1) arc (90:270:0.25cm and 0.65cm);
 \draw[->, thick, >=stealth] (5,-0.75) -- (6,-0.75);
 \draw (2.2,-0.6) node[red,right]{$c$};
\coordinate (a) at (7,0.34);
\coordinate (b) at (7,-0.34);
\coordinate (c) at (7,-1.16);
\coordinate (d) at (7,-1.84);
\coordinate (o) at (9,-0.1);
\coordinate (p) at (9,-1.4);
\coordinate (x) at (10.5,-0.1);
\coordinate (y) at (10.5,-1.4);
\coordinate (r) at (8.2,-0.75);
\coordinate (s) at (11.3,-0.75);
\coordinate (e) at (12.5,0.34);
\coordinate (f) at (12.5,-0.34);
\coordinate (g) at (12.5,-1.16);
\coordinate (h) at (12.5,-1.84);
\draw[thick, shift={(0,0)}] (7,0) node{2} ellipse (0.17cm and 0.34cm);
\draw[thick, shift={(0,-1.5)}] (7,0) node{1} ellipse (0.17cm and 0.34cm);
\draw[thick, shift={(5.5,-1.5)}] (7,0) node{4} ellipse (0.17cm and 0.34cm);
\draw[thick, shift={(5.5,0)}] (7,0) node{3} ellipse (0.17cm and 0.34cm);
 \draw[thick] (a) to[out=0,in=180] 
 (o);
 \draw[thick] (d) to[out=0,in=180]
 (p);
 \draw[thick] (e) to[out=180,in=0]
 (x);
 \draw[thick] (h) to[out=180,in=0]
 (y);
 \draw[thick] (b) to[out=0,in=100]
 (r);
 \draw[thick] (c) to[out=0,in=-100]
 (r);
 \draw[thick] (f) to[out=180,in=80]
 (s);
 \draw[thick] (g) to[out=180,in=-80]
 (s);
\draw[thick, shift={(0,0)},red] (9,-1.4) arc (-90:90:0.25cm and 0.65cm);
  \draw[thick, dashed, shift={(0,0)},red] (9,-0.1)  arc (90:270:0.25cm and 0.65cm);
  \draw[thick, shift={(0,0)},red] (10.5,-1.4) arc (-90:90:0.25cm and 0.65cm);
  \draw[thick, shift={(0,0)},red] (10.5,-0.1) arc (90:270:0.25cm and 0.65cm);
  \end{tikzpicture}
\caption{A pants decomposition of a 4-punctured sphere by cutting a closed curve $c$ ({\it in red}) surrounding the 1st and 2nd (or the 3rd and 4th) punctures. 
}
\label{fig:pants}
\end{figure}

An $\SU(2)$ flat connection on $\Sigma_{0,4}$ defines a holonomy $H$ along $c$ with an eigenvalue denoted by $x$. $x$ is called the {\it length coordinate} and can be used as the canonical coordinate of the phase space $\cM_{0,4}^{\vec{\lambda}}$. Its conjugate momentum $y$, is called the {\it twist coordinate}. The Atiyah-Bott-Goldman symplectic 2-form on $\cM_{0,4}^{\vec{\lambda}}$ can be written in terms of $(x,y)$ as
\be
\omega=\f{\delta y}{y}\wedge\f{\delta x}{x}\,,
\ee
where $\delta$ denotes the standard differential on $\cM_{0,4}^{\vec{\lambda}}$. It induces a canonical Poisson bracket on the logarithmic length-twist coordinates:
\be
\{\ln x,\ln y\}=1\,.
\label{eq:Poisson_length_twist}
\ee
The length-twist coordinates are the analogy of the Fenchel-Nielsen coordinates \cite{fenchel2011discontinuous}, which are in general symplectic coordinates of the Techm\"uller space for a Riemann surface associated to a pants decomposition. 

In correspondence to a curved tetrahedron, which is described by $\cM_\Flat(\Sfour,\PSU(2))$, we choose the lifts of the logarithmic length-twist coordinates such that $\ln x^2,\ln y^2\in i(0,2\pi)$ (zero can not be reached when the tetrahedron under consideration is non-degenerate), \ie $\ln x,\ln y\in i(0,\pi)$. 
Then these coordinates possess a clear geometrical interpretation in the 4-gon on $S^3\cong\SU(2)$ whose edges are geodesic curves. 
Consider 4 points $\{v_i\}_{i=1}^4$ on $S^3$ located at 
\be
v_1=\Id_{\SU(2)}\,,\quad
v_2=M_1\,,\quad
v_3=M_2M_1\,,\quad
v_4=M_3M_2M_1\,.
\ee
\begin{figure}[h!]
    \centering
    \includegraphics[width=0.4\textwidth]{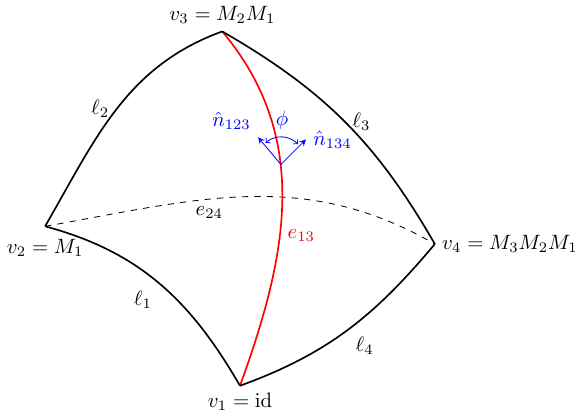}
    \caption{A 4-gon on $\SU(2)\cong S^3$ formed by geodesic curves $\{\ell_i\}_{i=1}^4$ connecting four points $v_1=\Id,v_2=M_1,v_3=M_2M_1,v_4=M_3M_2M_1$ in cyclic order. The geodesic curve ({\it in red}) $e_{13}$ connecting $v_1$ and $v_3$ has length $\theta$. Further connecting $v_2$ and $v_4$ with a geodesic curve $e_{24}$ ({\it dashed}) forms a curved tetrahedron on $S^3$ whose faces are geodesics. $\hat{n}_{123}$ and $\hat{n}_{134}$ ({\it one-way arrows in blue}) are outward-pointing (relative to the tetrahedron) normal vectors of the geodesic triangle $f_{123}$ bounded by $\ell_1,\ell_2,e_{13}$ and the geodesic triangle $f_{134}$ bounded by $\ell_3,\ell_4,e_{13}$ respectively. $\phi\in[0,\pi]$ is the dihedral angle between $f_{123}$ and $f_{134}$ hinged by $e_{13}$. }
    \label{fig:4-gon}
\end{figure}
A 4-gon is formed by 4 geodesic curves $\ell_1\equiv e_{12}\,,\,\ell_2\equiv e_{23}\,,\,\ell_3\equiv e_{34}\,,\,\ell_4\equiv e_{41}$ where $e_{ij}$ is the geodesic connecting $v_{i}$ and $v_{j}$, as shown in fig.\ref{fig:4-gon}. The geodesic length ${\mathfrak l}_\nu\in[0,\pi]$ of $\ell_\nu$ satisfies 
\be
\cos({\mathfrak l}_\nu)=(\lambda_\nu+\lambda_\nu^{-1})/2,
\ee
for $\nu=1,\cdots,4$. Denote $\theta\in[0,\pi]$ to be the length of the diagonal geodesic curve $e_{13}$ connecting $v_1$ and $v_3$, which separates the 4-gon into two (curved) triangles $f_{123}$ bounded by $\ell_1,\ell_2,e_{13}$ and $f_{134}$ bounded by $\ell_3,\ell_4,e_{13}$. 
On the other hand, $\phi\in[0,\pi]$ describes the bending angle between the two triangles. 
Adding the other diagonal geodesic curve, one forms a curved tetrahedron in $S^3$. 

Given fixed lengths $\{{\mathfrak l}_1,{\mathfrak l}_2,{\mathfrak l}_3,{\mathfrak l}_4\}$ of the four geodesic curves of the 4-gon, $\theta$ and $\phi$ uniquely determine the shape of this convex curved tetrahedron embedded in $S^3$ \cite{Nekrasov:2011bc}.  
As side lengths of two triangles, $\theta$ is restricted by the following triangular inequality 
\be
\max(|{\mathfrak l}_1-{\mathfrak l}_2|,|{\mathfrak l}_3-{\mathfrak l}_4|)\leq\theta\leq\min({\mathfrak l}_1+{\mathfrak l}_2,{\mathfrak l}_3+{\mathfrak l}_4)\,,\quad
\label{eq:range_theta_phi}
\ee
while $\phi\in[0,\pi]$ can be freely chosen. When the curved tetrahedron constructed from the 4-gon is non-degenerate, $\{{\mathfrak l}_\nu\},\theta,\phi\in(0,\pi)$. Interpret the length and twist coordinates as
\be
\ln x = i\theta\,,\qquad
\ln y = i\phi\,.
\label{eq:def_theta_phi}
\ee
Then the length and twist coordinates encode the Darboux coordinates $(\phi,\theta)$ of the 4-gon. 
It is shown in \cite{Han:2023wiu} that the phase space $\cM_{0,4}^{\vec{\lambda}}$ can be quantized to be the Hilbert space of the moduli algebra (see next section). 
In the remaining of this paper, we aim to quantize the length and twist coordinates into operators in this Hilbert space and construct a coherent state with these operators. 


\section{Moduli algebra $\mathfrak{M}^{K_1,K_2,K_3,K_4}_{0,4}$ and its representation}
\label{sec:review}


In this section, we provide a concise review of the mathematical tools, namely the moduli algebra on $\Sfour$, denoted by $\mathfrak{M}^{K_1,K_2,K_3,K_4}_{0,4}$, and its representation theory, for constructing the quantum theory of the classical phase space $\cM_{\Flat}(\Sfour,\SU(2))$. We aim to give the necessary formulas for the succeeding sections and skip details and derivations. For details, we refer to the seminal works in \cite{Fock:1998nu,Alekseev:1994au,Alekseev:1994pa,Alekseev:1995rn} and the companion paper \cite{Han:2023wiu}.

Classically, the solution space of the classical closure condition, $M_4M_3M_2M_1=\Id$, is the moduli space of $\SU(2)$ flat connections on $\Sigma_{0,4}$. 
The Poisson structure of $\cM_{\Flat}(\Sfour,\SU(2))$ is defined by the use of the so-called {\it classical $r$-matrix}. A natural quantization is given by quantizing the classical monodromy $M_\nu$ to the quantum monodromy operator $\bM_\nu$, quantizing the commutation relations in terms of the {\it quantum monodromies}, which are operator matrices, and a {\it quantum $\cR$-matrix}, which is the quantum version of the $r$-matrix. 
 The quantum theory requires the notion of quasitriangular ribbon (quasi) Hopf algebra $\UQ$, which is a deformation of the universal enveloping algebra of $\su(2)$ with a deformation parameter $q\in\bC$ and some extra algebraic structures, \eg the $*$-structure. In this paper, we only focus on the case when the deformation parameter $q^{k+2}=1$ is a root-of-unity, where $k\in\N_+$ is the Chern-Simons level. 
 In this case, $\UQ$ has finitely many  
 irreducible representations labeled by spins $I=0,\f12,\cdots, \f{k}{2}$. In other words, the Chern-Simons level produces a truncation on the representations $I\in \N/2$. We call these representations after truncation the {\it physical representations}. 
 
The collection of the matrix elements of the quantum monodromy around one single puncture form an algebra called the {\it loop algebra} denoted by $\cL_{0,1}$. When the quantum monodromy is around the $\nu$-th puncture, we also denote the loop algebra by $\cL_\nu$. 
{Similarly, the collection of the matrix elements of the $\{\bM_\nu\}_{\nu=1,\cdots,4}$ form an algebra on the four-punctured sphere called the {\it graph algebra} denoted by $\cL_{0,4}$. In general, the graph algebra on $m$-punctured sphere is denoted $\cL_{0,m}$.}
On top of that, one can define a ``quantum gauge action'' on the graph algebra. The algebra of quantum observables, called the {\it invariant algebra} denoted as $\cA_{0,4}$, is described by the subalgebra of $\cL_{0,4}$ that is invariant under the quantum gauge action. 
More precisely, $\cA_{0,4}$ is defined as the subalgebra of $\cL_{0,4}$ containing all elements in $\cL_{0,4}$ that are invariant with respect to the action of $\UQ$, \ie 
\be
\cA_{0,4}=\{A\in\cL_{0,4}|\xi(A)=A\epsilon(\xi),\, \xi\in\UQ\}\,,
\ee
where $\epsilon:\UQ\rightarrow\bC$ is the counit of $\UQ$. 
It can be shown that elements of $\cA_{0,4}$ are linear combinations of the form \cite{Alekseev:1995rn} 
\be
\tr_{q}^{I}(C[I_1I_2I_3I_4|I]\bM_{1}^{I_1}\bM_{2}^{I_2}\bM_{3}^{I_3}\bM_{4}^{I_4}C[I_1I_2I_3I_4|I]^*)\,,
\label{eq:generator_A04}
\ee
where $\tr_q^I$ is the quantum trace and $C[I_1I_2I_3I_4|I]$ is the intertwiner that maps the tensor product of representation spaces of $\UQ$ to another representation space of $\UQ$, \ie $C[I_1I_2I_3I_4|I]:V^{I_1}\otimes V^{I_2}\otimes V^{I_3}\otimes V^{I_4} \to V^{I}$ and $C[I_1I_2I_3I_4|I]^*:V^{I}\to V^{I_1}\otimes V^{I_2}\otimes V^{I_3}\otimes V^{I_4}$ is a dual map\footnote{$C[I_1I_2I_3I_4|I]^*$ is called the dual map of $C[I_1I_2I_3I_4|I]$ only in a loose way. Strictly speaking, $C[I_1I_2I_3I_4|I]^*\circ C[I_1I_2I_3I_4|I]\neq \id_{V^I}$. See \eg \cite{Han:2023wiu} for more details.}. 
As a special case, a {\it central element} $c^I_\nu$ defined as
\be
c^I_\nu :=\kappa^I\tr_q^I(\bM_\nu^I)\,,\quad
\kappa^I=q^{-\frac{1}{2}I(I+1)}\,,
\label{eq:central_element}
\ee
and it is an element in $\cA_{0,4}$. 

The moduli algebra $\mathfrak{M}^{K_1,K_2,K_3,K_4}_{0,4}$ is defined from $\cA_{0,4}$ with the help of the {\it quantum character} $\chi^I_{\nu}$, which we now define.  
Firstly, when $q$ is a root-of-unity, one can define a symmetric and invertible ``$S$-matrix'' $S_{IJ}$ in terms of the $\cR$-matrix of $\UQ$, denoted as $R\equiv\sum_a R^{(1)}_a\otimes R^{(2)}_a \in\UQ\otimes\UQ$ and $R'\equiv\sum_a R^{(2)}_a\otimes R^{(1)}_a$: 
\be
S_{IJ}:=\cN (\tr^I_q\otimes\tr^J_q)(R'R)\,,\quad
\cN=\f{1}{\lb\sum_I d_I^2\rb^\f12}\,,
\label{eq:S-matrix_def}
\ee
{where $d_I=[2I+1]_q$ is the quantum dimension of $V^I$ with $[n]_q:=\f{q^{\frac{n}{2}}-q^{-\frac{n}{2}}}{q^{\frac{1}{2}}-q^{-\frac{1}{2}}}$ being a quantum number.} 
The summation in \eqref{eq:S-matrix_def} runs through all the physical representations $I=0,\f12,\cdots,\f{k}{2}$. 
Indeed, if $I$ can take up to $\infty$, which is the case for $q$ not a root-of-unity, $\cN\rightarrow 0$ hence $S_{IJ}$ is ill-defined.  
The $S$-matrices satisfy the following properties \cite{Frohlich:1990ww,Alekseev:1994au}.
\be
S_{IJ}=S_{JI}\,,\quad S_{0J}=\mathcal{N}d_{J}\,,\quad
\sum_{J}S_{IJ}S_{JK}=\delta_{IK}\,,\quad
\sum^{u(I,J)}_{K=|I-J|}S_{KL}=\f{S_{JL}S_{IL}}{\cN d_{L}}\,,
\label{eq:S_matrix}
\ee
where $u(I,J)=\min(I+J,k-I-J)$.
The properties above show that the inverse element of the $S$-matrix is $S_{IJ}$ itself. 

With the help of the $S$-matrix, the {\it quantum characters} $\chi_\nu^J$ in $\cA_{0,4}$ with $\nu=0,1,2,3,4$ is defined as 
\be
\chi_{\nu}^{J}=\cN d_{J}\sum_K S_{JK}c_{\nu}^{\Bar{K}}\equiv\mathcal{N}d_J\sum_K S_{J\Bar{K}}c^{K}_{\nu}\,,
\label{eq:character_def}
\ee
where $\bar{J}$ denotes the dual representation of representation $J$. 
A quantum character is indeed a central element. It is also easy to prove that it is also an orthogonal projector in $\cA_{0,4}$ satisfying
\be
\chi_{\nu}^{I}\chi_{\nu}^{J}=\delta_{IJ}\chi_{\nu}^{I}\,,\quad
(\chi_{\nu}^{J})^{*}=\chi_{\nu}^{J}\,.
\ee
Specially, for $\nu=0$, $\bM_{0}^{J}:=\kappa_{J}^{3}\bM_{4}^{J}\bM_{3}^{J}\bM_{2}^{J}\bM_{1}^{J}$ is the quantum monodromy around all four punctures and  $c^J_0:=\kappa_J\tr_{q}^{J}(\bM_{0}^{J})$. Then, the corresponding quantum character $\chi_0^0$ is. 
\be
\chi_0^0:= \cN d_0 \sum_J S_{0J} c^{\bar{J}}_0= \cN^2\sum_J d_J\kappa_J^4 \tr_q^J\lb\bM_{4}^{J}\bM_{3}^{J}\bM_{2}^{J}\bM_{1}^{J}\rb\,.
\label{eq:bM_0}
\ee

We are now ready to define the moduli algebra for the root-of-unity case.  
The moduli algebra $\fM_{0,4}^{K_\nu}$ on a four-punctured sphere, each puncture of which is associated with a physical representation $K_{\nu}\,(\nu=1,2,3,4)$, is a $*$-algebra defined as \cite{Alekseev:1995rn}
\be
\fM_{0,4}^{K_{\nu}}:=\chi_{0}^{0}\chi_{1}^{K_{1}}\chi_{2}^{K_{2}}\chi_{3}^{K_{3}}\chi_{4}^{K_{4}}\cA_{0,4}\,.
\label{eq:def_moduli_algebra}
\ee
The expression above means that each element in $\fM_{0,4}^{K_{\nu}}$ is obtained by an element in the invariant algebra $\cA_{0,4}$ multiplied by the five quantum characters
$\chi_{0}^{0}\,,\,\chi_{1}^{K_{1}}\,,\,\chi_{2}^{K_{2}}\,,\,\chi_{3}^{K_{3}}\,,\,\chi_{4}^{K_{4}}\in\cA_{0,4}$.
 
The representation of $\cL_{0,4}$ is realized on the tensor product space $V^{K_1}\otimes V^{K_2}\otimes V^{K_3}\otimes V^{K_4}$ which admits a decomposition\footnote{The decomposition \eqref{eq:decompose_reps} is based on the semi-simplicity of the Hopf algebra, which is broken for $q$ a root-of-unity. However, in this case, $\UQ$ is canonically associated with a truncated Hopf algebra which admits semi-simplicity. \eqref{eq:decompose_reps} is realized through such the construction. See \cite{Alekseev:1995rn,MACK1992185} and \cite{Han:2023wiu} for more details. } 
\be
V^{K_1}\otimes V^{K_2}\otimes V^{K_3}\otimes V^{K_4}=\bigoplus_{J} V^J\otimes W^J(K_1,K_2,K_3,K_4)\,,
\label{eq:decompose_reps}
\ee
where $J$ runs through all the admissible physical representations and $W^J(K_1,K_2,K_3,K_4)$ is the multiplicity space, which can be shown to be the carrier space of $\cA_{0,4}$ \cite{Alekseev:1995rn}.

The representation of $\fM^{K_\nu}_{0,4}$ is realized in the invariant subspace $W^0(K_1,K_2,K_3,K_4)$ onto which the five quantum characters project. 

Importantly, 
given four punctures labeled by representations $K_1,\cdots,K_4$ respectively, $W^0(K_1,K_2,K_3,K_4)$ is the {\it only} irreducible $*$-representation space of the moduli algebra $\fM_{0,4}^{K_\nu}$ \cite{Alekseev:1995rn}. The representation space of the moduli algebra can be understood as the quantization of the moduli space of flat connections that satisfy the closure condition. In this sense, we call the intertwiner space $W^0(K_1,K_2,K_3,K_4)$ the {\it physical Hilbert space} as it is the solution space to the ``quantum closure condition''.

\section{Fusion algebra from quantum monodromies around two punctures }
\label{sec:intro}

In the previous section, we have only considered the quantum monodromies around single punctures. However, the length coordinate that we are interested in in this paper is the trace of the monodromy around two punctures, whose quantization is still a quantum monodromy but the algebra it forms needs to be further investigated. 
Quite neatly, we will see that the matrix element of this quantum monodromy still forms a loop algebra and its gauge-invariant elements form a fusion algebra, also called a Verlinde algebra. 

To simplify the calculation, we fix a linear order on the loops around single punctures such that $\ell_1 \prec \ell_2 \prec \ell_3 \prec \ell_4$ by introducing a {\it cilium} between $\ell_1$ and $\ell_4$ as illustrated in fig. \ref{fig:ciliated_graph}. 

Let $\ell=\ell_m \ell_{(m-1)}$ with $m=2,3,4$ be a composed loop formed by two loops neighbouring in the linear order. The (quantum) monodromy $\bM^I_{\ell}$ along $\ell$ surrounds the $m$-th and the $(m-1)$-th punctures and it can be expressed in terms of monodromies $\bM^I_{\ell_m}$ and $\bM^I_{\ell_{m-1}}$ as 
\be
\bM_\ell^{I}=\kappa_I \bM_{\ell_{m}}^I \bM_{\ell_{m-1}}^I\,.
\label{eq:M_ell}
\ee

Let us first study the properties of $\bM_\ell^{I}$. In the following, we will use extensively the standard notations that $\MlIl  :=\bM_{\ell}^{I}\otimes \Id_{V^J}$ and $\MlJr  :=\Id_{V^I}\otimes \bM_{\ell}^{J}$ and similarly for other quantum matrices with a number above. 

\begin{prop}
\label{prop:loop_algebra}

The matrix elements of the quantum monodromy $\{\bM_\ell^{I}\}_I$ defined by \eqref{eq:M_ell}, where $I$ runs through all the physical representations of $\UQ$, generate a loop algebra $\cL_{0,1}$ by satisfying the following defining exchange relations of $\cL_{0,1}$. 
 
\begin{subequations}
\begin{align}
\MlIl  R^{IJ}\MlJr  &=\sum_K C[IJ|K]^{*}\bM_{\ell}^{K}C[IJ|K].
\label{eq:M_ell_property_1}\\
(R^{-1})^{IJ}\MlIl  R^{IJ}\MlJr  &=\MlJr  (R')^{IJ}\MlIl  ((R')^{-1})^{IJ}\;, 
\label{eq:M_ell_property_2}\\
(\bM^I_\ell)^*&=\sigma_\kappa\lb R^I(\bM^I_\ell)^{-1}(R^{-1})^I\rb\;,
\label{eq:M_ell_property_3}
\end{align}
\label{eq:loop_exchange_relations}
\end{subequations}
where $\sigma_{\kappa}(\bM_\ell^I):=\kappa^{-1}\bM^I_\ell\kappa$ with $\kappa\in\UQ$ defined by the ribbon element $v\in\UQ$ through $\kappa^2=v$ and $R^I=(\rho^I\otimes \Id)R\in \End(V^I)\otimes \UQ$. 
\end{prop}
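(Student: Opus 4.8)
The plan is to verify the three defining relations of $\cL_{0,1}$ directly, starting from the definition $\bM_\ell^{I}=\kappa_I \bM_{\ell_{m}}^I \bM_{\ell_{m-1}}^I$ and using the fact, recalled in Section \ref{sec:review}, that each $\bM_{\ell_\nu}^I$ already generates a loop algebra $\cL_\nu$, together with the graph-algebra exchange relations between $\bM_{\ell_m}$ and $\bM_{\ell_{m-1}}$ (these encode that $\ell_{m-1}\prec\ell_m$ in the ciliated order). Concretely, for two neighbouring punctures in the linear order one has relations of the form $\MlmIl\, R^{IJ}\,\MlmmJr = \MlmmJr\,R^{IJ}\,\MlmIl$ (the ``$R$-twisted'' commutation between distinct loops) alongside the intra-loop relations $\MlmIl\,R^{IJ}\,\MlmJr = R^{IJ}\,\MlmJr\,R^{IJ}{}^{-1}\,\MlmIl\,R^{IJ}$ and the fusion identity $\MlmIl\,R^{IJ}\,\MlmJr = \sum_K C[IJ|K]^{*}\bM_{\ell_m}^{K}C[IJ|K]$. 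The whole computation is an exercise in bookkeeping of where the $R$-matrices sit on the four tensor factors $V^I_{\ell_m}\otimes V^I_{\ell_{m-1}}\otimes V^J_{\ell_m}\otimes V^J_{\ell_{m-1}}$.

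For \eqref{eq:M_ell_property_1}, I would substitute \eqref{eq:M_ell} into $\MlIl R^{IJ}\MlJr$, obtaining $\kappa_I\kappa_J\,\MlmIl\,\MlmmIl\,R^{IJ}\,\MlmJr\,\MlmmJr$, then commute $R^{IJ}$ leftward past $\MlmmIl$ (picking up the cross relation between $\ell_m$ and $\ell_{m-1}$) and regroup so that the inner pair $(\ell_{m-1},\ell_{m-1})$ and outer pair $(\ell_m,\ell_m)$ each assemble into a single-loop fusion of the form \eqref{eq:M_ell_property_1} for $\bM_{\ell_{m}}$ and $\bM_{\ell_{m-1}}$ respectively; iterated fusion of the two $C$-intertwiners then collapses to the single $C[IJ|K]$ by associativity of the fusion (pentagon/Biedenharn--Elliott). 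The factor $\kappa_I\kappa_J$ combined with the $\kappa$'s from the two single-loop fusions should reproduce the single $\kappa_I$ hidden in $\bM_\ell^K$; I would track this carefully since the ribbon normalizations are exactly where sign/phase errors creep in. Relation \eqref{eq:M_ell_property_2} is the functoriality/naturality of the braiding and follows by the same reshuffling without needing fusion: one shows both sides equal $\kappa_I\kappa_J$ times a common reordering of $\MlmIl\MlmmIl\MlmJr\MlmmJr$ with $R$'s redistributed, using the hexagon identities for $R$ and the already-known version of \eqref{eq:M_ell_property_2} for each single loop plus the cross relations. For the $*$-relation \eqref{eq:M_ell_property_3}, I would apply the antihomomorphism $*$ to $\bM_\ell^I=\kappa_I\bM_{\ell_m}^I\bM_{\ell_{m-1}}^I$, use \eqref{eq:M_ell_property_3} for each factor, reorder using the cross relation, and simplify the resulting nest of $R^I$, $(R^{-1})^I$ and $\kappa$'s using $\kappa^2=v$ and the ribbon-element identities $(\Delta\kappa)=(R'R)^{-1}(\kappa\otimes\kappa)$; the claim is that the middle $R$-factors telescope, leaving exactly $\sigma_\kappa(R^I(\bM^I_\ell)^{-1}(R^{-1})^I)$.

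The main obstacle will be the combinatorics of the $R$-matrix placements in \eqref{eq:M_ell_property_1}: after inserting \eqref{eq:M_ell} one has products living on a four-fold tensor product, and peeling off the cross relations in the right order so that the two single-loop fusions can be applied — and then checking that the double fusion $C[\cdot]^*\otimes C[\cdot]^* \cdots C[\cdot]$ reassembles into the single fusion channel with the correct coefficient — requires the associativity constraint of the tensor category (equivalently the hexagon plus pentagon for the quasi-Hopf $\UQ$) and the precise conventions for $C[IJ|K]$ versus $C[IJ|K]^*$ from Section \ref{sec:review}. I expect \eqref{eq:M_ell_property_2} and \eqref{eq:M_ell_property_3} to be comparatively mechanical once \eqref{eq:M_ell_property_1} is in place, since they do not involve the fusion intertwiners and are essentially formal consequences of the quasitriangular ribbon structure together with the two single-loop loop-algebra relations and the ciliated cross relations.
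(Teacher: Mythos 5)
Your overall route is the paper's: write $\bM_\ell^I=\kappa_I\bM_{\ell_m}^I\bM_{\ell_{m-1}}^I$, combine the $\cL_{0,4}$ exchange relations between the two neighbouring loops (eqs.\ \eqref{eq:commutation_rep_1}--\eqref{eq:commutation_rep_2}) with the single-loop relations already satisfied by $\bM_{\ell_m}$ and $\bM_{\ell_{m-1}}$, and push the $R$-matrices through. Your sketches of \eqref{eq:M_ell_property_2} and \eqref{eq:M_ell_property_3} match the paper's computations, and are in fact simpler than you anticipate: no hexagon identity enters \eqref{eq:M_ell_property_2}, and \eqref{eq:M_ell_property_3} needs neither a cross relation nor the coproduct identity for $\kappa$ --- after applying the single-loop $*$-relation to each factor the inner $(R^{-1})^IR^I$ cancels at once, and the two conjugations by the same global $\kappa$ combine into a single $\sigma_\kappa$.

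The genuine gap is in the closing step of \eqref{eq:M_ell_property_1}. After using \eqref{eq:commutation_rep_1} and the two single-loop fusions you are left with $\kappa_I\kappa_J\sum_{K,L}C[IJ|K]^{*}\,\bM_{\ell_m}^{K}\,C[IJ|K]\,(R^{-1})^{IJ}\,C[IJ|L]^{*}\,\bM_{\ell_{m-1}}^{L}\,C[IJ|L]$, i.e.\ the two fusion channels are separated by a leftover $(R^{-1})^{IJ}$. Pentagon/Biedenharn--Elliott associativity is not the tool that collapses this: it recouples three representations and says nothing about an $R$-matrix sandwiched between $C[IJ|K]$ and $C[IJ|L]^{*}$, nor does it supply the ribbon factors you flag as delicate. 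What is actually needed is the normalization (orthogonality) relation of the Clebsch--Gordan maps, $C[IJ|K](R^{-1})^{IJ}C[IJ|L]^{*}=\frac{\kappa_K}{\kappa_I\kappa_J}\,\delta_{KL}$, which simultaneously kills the off-diagonal channels, absorbs the residual $R^{-1}$, and provides exactly the factor $\kappa_K/(\kappa_I\kappa_J)$ that converts $\kappa_I\kappa_J\,\bM_{\ell_m}^{K}\bM_{\ell_{m-1}}^{K}$ into $\kappa_K\,\bM_{\ell_m}^{K}\bM_{\ell_{m-1}}^{K}=\bM_\ell^{K}$. Without this identity the step ``the two $C$-intertwiners collapse to a single $C[IJ|K]$'' does not go through as you describe. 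A smaller slip: the cross relations between neighbouring loops are the conjugated ones of \eqref{eq:graph_exchange_relations_lmlmm}, with $R$ and $R^{-1}$ (or $R'$ and $R'^{-1}$) dressing one monodromy, not the bare $R$-commutation you quote; the bookkeeping of where these conjugating $R$'s land is precisely what produces the lone $(R^{-1})^{IJ}$ above.
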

\begin{proof} 
Note that the matrix elements of $\{\bM_{\ell_m}^I,\bM_{\ell_{m-1}}^I\}_I$ are generators of $\cL_{0,4}$. By definition, they satisfy the following exchange relations as $\ell_{m-1}\prec\ell_m$ \cite{Alekseev:1994au,Alekseev:1995rn} (see also \cite{Han:2023wiu} and Lemma \ref{lemma:commu_linear_order} in Appendix \ref{app:detail_calc}).
\begin{subequations}
\begin{align}
(R^{-1})^{IJ}\MlmmIl R^{IJ}\,\MlmJr &= \MlmJr(R^{-1})^{IJ}\MlmmIl R^{IJ}\,,
\label{eq:commutation_rep_1}\\
(R')^{IJ}\MlmIl(R'^{-1})^{IJ}\,\MlmmJr &= \MlmmJr(R')^{IJ}\MlmIl(R'^{-1})^{IJ}\,.
\label{eq:commutation_rep_2}
\end{align}
\label{eq:graph_exchange_relations_lmlmm}
\end{subequations}
On the other hand, the matrix elements of $\{\bM_{\ell_m}^I\}_I$ (\resp $\{\bM_{\ell_{m-1}}^I\}_I$) generate $\cL_{0,1}$, so the exchange relations \eqref{eq:loop_exchange_relations} also hold when one replaces $\bM_\ell$ to $\bM_{\ell_m}$ or $\bM_{\ell_{m-1}}$ therein. Then 
\eqref{eq:loop_exchange_relations} can all be derived through direct calculation.  
Firstly, the {\it l.h.s.} of \eqref{eq:M_ell_property_1} is expanded as 
\be
\begin{split} 
\MlIl R^{IJ} \MlJr 
&=\kappa_I\kappa_{J}\MlmIl\MlmmIl R^{IJ} \MlmJr\MlmmJr\\
&=\kappa_I\kappa_{J}\MlmIl R^{IJ}\MlmJr  (R^{-1})^{IJ}\MlmmIl  R^{IJ}\MlmmJr \\ 
&=\kappa_I\kappa_{J}\sum_{K}C[IJ|K]^{*}\bM_{\ell_m}^KC[IJ|K](R^{-1})^{IJ}\sum_{L}C[IJ|L]^{*}\bM_{\ell_{m-1}}^LC[IJ|L]\\
&=\sum_{K}C[IJ|K]^{*}\kappa_K \bM_{\ell_m}^K\bM_{\ell_{m-1}}^KC[IJ|K]\\
&=\sum_{K}C[IJ|K]^{*}\bM_{l}^KC[IJ|K]\,,
\end{split}
\label{eq:M_ell_property_proof_1}
\ee
where \eqref{eq:commutation_rep_1} is used to obtain the second line, \eqref{eq:M_ell_property_1} is used twice to obtain the third line, one for $M_{\ell_m}^I$ and one for $M_{\ell_{m-1}}^J$, and the normalization condition of Clebsch-Gordan maps, i.e. $C[IJ|K](R^{-1})^{IJ}C[IJ|L]^{*}=\frac{\kappa_K}{\kappa_I\kappa_J}\delta_{KL}$ is used to obtain the second last line. 

We next prove \eqref{eq:M_ell_property_2}.
Explicitly, 
\be
\begin{split}
(R^{-1})^{IJ}\MlIl  R^{IJ}\MlJr  &=\kappa_{I}\kappa_{J}(R^{-1})^{IJ}\MlmIl \MlmmIl R^{IJ}\MlmJr \MlmmJr \\
&=\kappa_{I}\kappa_{J}(R^{-1})^{IJ}\MlmIl R^{IJ}\MlmJr  (R^{-1})^{IJ}\MlmmIl  R^{IJ}\MlmmJr \\
&=\kappa_{I}\kappa_{J}\MlmJr  (R')^{IJ}\MlmIl ((R')^{-1})^{IJ}\MlmmJr  (R')^{IJ}\MlmmIl ((R')^{-1})^{IJ}\\
&=\kappa_{I}\kappa_{J}\MlmJr  \MlmmJr  (R')^{IJ}\MlmIl ((R')^{-1})^{IJ}(R')^{IJ}\MlmmIl ((R')^{-1})^{IJ}\\
&=\MlJr  (R')^{IJ}\MlIl  ((R')^{-1})^{IJ}\;,\\
\end{split}
\label{eq:M_ell_property_proof_2}
\ee
where \eqref{eq:commutation_rep_1} is used to obtain the second line, \eqref{eq:M_ell_property_2} is used twice, one for $M_{\ell_m}^I$ and one for $M_{\ell_{m-1}}^J$, to obtain the third line and \eqref{eq:commutation_rep_2} is used to obtain the fourth line. 

{Lastly, we use the property of the $*$-structure that $(AB)^*=B^*A^*$ to derive \eqref{eq:M_ell_property_3} and get
\be
\begin{split}
\lb\kappa_I\bM^I_{\ell_m}\bM^I_{\ell_{m-1}}\rb^*&=\kappa^{-1}_I(\bM^I_{\ell_{m-1}})^*(\bM^I_{\ell_m})^*
 =\kappa^{-1}\kappa^{-1}_IR^I(\bM^I_{\ell_{m-1}})^{-1}(R^{-1})^IR^I(\bM^I_{\ell_m})^{-1}(R^{-1})^I\kappa
 =\sigma_{\kappa}\lb R^I \bM^I_{\ell} (R^{-1})^I\rb\;,
\end{split}
\label{eq:M_ell_property_proof_3}
\ee}
where \eqref{eq:M_ell_property_3} is used for both $\bM^I_{\ell_m}$ and $\bM^I_{\ell_{m-1}}$ to obtain the second equation. 
\end{proof}
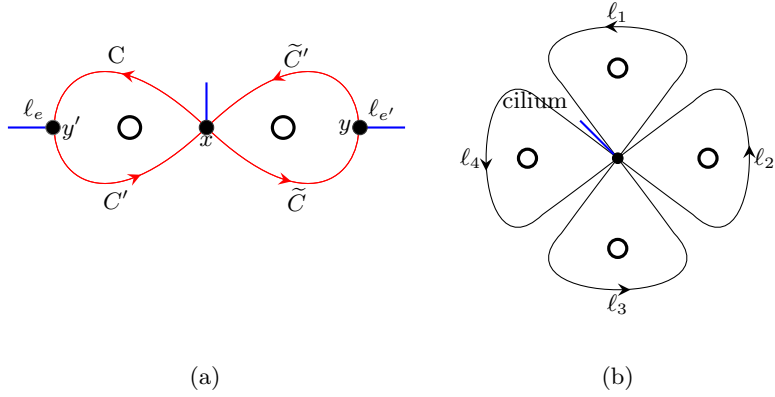
\begin{figure}
\begin{subfigure}[t]{0.3\linewidth}
\begin{tikzpicture}
[scale=1.2]
\draw[blue,thick] (0,0) -- (0,0.5);
\draw[blue,thick] (1.70,0) -- (2.2,0);
\draw[blue,thick] (-1.70,0) -- (-2.2,0);
\draw [red, postaction={decorate},decoration={markings,mark=at position 0.8 with {\arrow[scale=1.5,>=stealth]{>}}}] (0,0) .. controls (-2.25,2.15) and(-2.25,-2.15) .. (0,0) ;
\draw [red, postaction={decorate},decoration={markings,mark=at position 0.25 with {\arrow[scale=1.5,>=stealth]{>}}}] (0,0) .. controls (-2.25,2.15) and(-2.25,-2.15) .. (0,0) ;
\draw[red, postaction={decorate},decoration={markings,mark=at position 0.8 with {\arrow[scale=1.5,>=stealth]{>}}}] (0,0) .. controls (2.25,-2.15) and (2.25,2.15) .. (0,0) ;
\draw[red, postaction={decorate},decoration={markings,mark=at position 0.25 with {\arrow[scale=1.5,>=stealth]{>}}}] (0,0) .. controls (2.25,-2.15) and (2.25,2.15) .. (0,0) ;
\filldraw[color=black!80, fill=black](0,0) circle (0.08);
\draw (0,0) node[anchor=north]{$x$};
\filldraw[color=black!80, fill=black](1.70,0) circle (0.08);
\draw (1.7,0) node[anchor=east]{$y$};
\draw (1.7,0.2) node[anchor=west]{$\ell_{e'}$};
\filldraw[color=black!60, fill=black](-1.70,0) circle (0.08);
\draw (-1.7,0) node[anchor=west]{$y'$};
\draw (-1.7,0.2) node[anchor=east]{$\ell_e$};
\draw (-1,0.8) node[thick]{C};
\draw (-1,-0.8) node[thick]{$C'$};
\draw (1,0.8) node[thick]{$\widetilde{C}'$};
\draw (1,-0.8) node[thick]{$\widetilde{C}$};
\filldraw[color=black, fill=white, very thick](0.85,0) circle (0.12);
\filldraw[color=black, fill=white, very thick](-0.850,0) circle (0.12);
\end{tikzpicture}
\caption{} 
\label{fig_decomp_add_twovertices}
\end{subfigure}
\begin{subfigure}[t]{0.3\linewidth}
 \begin{tikzpicture}
     \pun{0,0}{0.55,0.42}{1,0.75}{2,1.7}{2,-1.7}{1,-0.75}{0.55,-0.42}; 
 \pun{0,0}{-0.42,0.55}{-0.75,1}{-1.7,2}{1.7,2}{0.75,1}{0.42,0.55};
 \pun{0,0}{-0.55,-0.42}{-1,-0.75}{-2,-1.7}{-2,1.7}{-1,0.75}{-0.55,0.42};
 \pun{0,0}{0.42,-0.55}{0.75,-1}{1.7,-2}{-1.7,-2}{-0.75,-1}{-0.42,-0.55}; 
 \draw[thick,blue] (0,0) -- (-0.5,0.5);
 \draw (-0.55,0.55) node[anchor=south east]{cilium};
 \filldraw[black] (0,0) circle(2pt);
 \draw (0,1.7) node[anchor=south] {$\ell_1$};
 \filldraw[black] (0,1.2) circle(1pt);
 \draw (0,-1.7) node[anchor=north] {$\ell_3$};
 \filldraw[black] (0,-1.2) circle(1pt);
  \draw (1.7,0) node[anchor=west] {$\ell_2$};
 \draw (-1.7,0) node[anchor=east] {$\ell_4$};
 \filldraw[color=black, fill=white, very thick](1.2,0) circle (0.12);
\filldraw[color=black, fill=white, very thick](-1.2,0) circle (0.12);
\filldraw[color=black, fill=white, very thick](0,-1.2) circle (0.12);
\filldraw[color=black, fill=white, very thick](0,1.2) circle (0.12);
 \end{tikzpicture} 
 \caption{}
\label{fig:ciliated_graph}
\end{subfigure}
\caption{{\it (a)} Two loops ({\it in red}), each enclosing a single puncture ({\it black hollow circle}). Both loops $\ell_{e}$ and $\ell_{e'}$ start and end in the same vertex $x$. $\ell_{e}$ and $\ell_{e'}$ are decomposed into curves $C$, $C'$, and $\widetilde{C}$, $\widetilde{C}'$ respectively with two additional vertices $y'$ and $y$ added to the loops. A cilium ({\it in blue}) is added to each vertex to fix the linear order, e.g. $\widetilde{C}'\prec -\widetilde{C}\prec C' \prec -C$ at vertex $x$. {\it (b)} Fundamental group generators $\{\ell_\nu\}_{\nu=1,\cdots,4}$ for a four-punctured sphere. A cilium ({\it in blue}) is added between loops $\ell_1$ and $\ell_4$ to fix the linear order $\ell_1\prec\ell_2\prec\ell_3\prec\ell_4$.}
\end{figure}

We can also define the quantum monodromy along $\ell$ in an inverse direction (relative to the orientation of the manifold), denoted as $\bM_{-\ell}$, through 
\be
\bM_{\ell}\bM_{-\ell}= \bM_{-\ell}\bM_{\ell}=e\,,
\ee
where $e$ is the identity element of $\UQ$. 

Moreover, the matrix elements of $\bM_\ell^I$ together with those of the quantum monodromies along loops not included in $\ell$ generate a graph algebra. If other loops are along single punctures, the graph algebra is $\cL_{0,3}$. This can be shown by the following proposition.

\begin{prop}
The quantum monodromies $\bM^I_\ell$  around two punctures satisfy the following defining relations of a graph algebra $\cL_{0,3}$ 
 \be
 \begin{split}
 (R^{-1})^{IJ}\MlIl R^{IJ}\MlmuJr &=\MlmuJr (R^{-1})^{IJ}\MlIl R^{IJ}\,, \quad \text{if }\, \ell \prec \ell_{\mu}\;,\\
 (R')^{IJ}\MlIl (R'^{-1})^{IJ}\MlmuJr &=\MlmuJr (R')^{IJ}\MlIl (R'^{-1})^{IJ}\,, \quad  \text{if }\,  \ell \succ \ell_{\mu}\;.
 \label{eq:graph_exchange_relations}
 \end{split}
 \ee
\end{prop}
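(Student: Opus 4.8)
\emph{Proof proposal.} The plan is to reduce the claim to the already-established graph-algebra exchange relations of $\cL_{0,4}$ among single-puncture monodromies (Lemma~\ref{lemma:commu_linear_order}), in exactly the same spirit as the proof of Proposition~\ref{prop:loop_algebra}. Using \eqref{eq:M_ell} write $\MlIl=\kappa_I\,\MlmIl\,\MlmmIl$, with $\kappa_I$ a scalar that can be carried through the whole computation untouched. Because $\ell$ is the composite of the two loops $\ell_{m-1}\prec\ell_m$ that are consecutive in the fixed linear order, any single-puncture loop $\ell_\mu$ not contained in $\ell$ is either later than both ($\ell_\mu\succ\ell_m$, i.e.\ $\ell\prec\ell_\mu$) or earlier than both ($\ell_\mu\prec\ell_{m-1}$, i.e.\ $\ell\succ\ell_\mu$); no loop can sit strictly between $\ell_{m-1}$ and $\ell_m$, so only these two cases arise and no ``mixed'' relation is needed. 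Recall that by Lemma~\ref{lemma:commu_linear_order} an ordered pair of loops carrying $I$ on the left tensor leg and $J$ on the right obeys a relation of the form \eqref{eq:commutation_rep_1} when the left loop precedes the right one, and of the form \eqref{eq:commutation_rep_2} when it follows it.

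\textbf{Case $\ell\prec\ell_\mu$.} Then $\ell_m\prec\ell_\mu$ and $\ell_{m-1}\prec\ell_\mu$, so each of $\bM_{\ell_m}^I$ and $\bM_{\ell_{m-1}}^I$ satisfies a relation of type \eqref{eq:commutation_rep_1} with $\bM_{\ell_\mu}^J$. Starting from $(R^{-1})^{IJ}\MlIl R^{IJ}\MlmuJr=\kappa_I\,(R^{-1})^{IJ}\MlmIl\MlmmIl R^{IJ}\MlmuJr$, I would insert $R^{IJ}(R^{-1})^{IJ}=\Id$ between $\MlmIl$ and $\MlmmIl$ so as to form the blocks $(R^{-1})^{IJ}\MlmIl R^{IJ}$ and $(R^{-1})^{IJ}\MlmmIl R^{IJ}$, apply the type-\eqref{eq:commutation_rep_1} relation for the pair $(\ell_{m-1},\ell_\mu)$ to pull $\MlmuJr$ past the $\ell_{m-1}$-block, then apply it again for $(\ell_m,\ell_\mu)$ to pull $\MlmuJr$ past the $\ell_m$-block; collapsing the leftover $R^{IJ}(R^{-1})^{IJ}$ and recombining $\kappa_I\MlmIl\MlmmIl=\MlIl$ produces $\MlmuJr(R^{-1})^{IJ}\MlIl R^{IJ}$, which is the first relation in \eqref{eq:graph_exchange_relations}. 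This is the verbatim analogue of the manipulation in \eqref{eq:M_ell_property_proof_2}, except that here the two building-block monodromies are never commuted past one another, so the relations \eqref{eq:graph_exchange_relations_lmlmm} are not even needed.

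\textbf{Case $\ell\succ\ell_\mu$.} Now $\ell_\mu\prec\ell_{m-1}\prec\ell_m$, so each of $\bM_{\ell_m}^I$ and $\bM_{\ell_{m-1}}^I$ satisfies a relation of type \eqref{eq:commutation_rep_2} with $\bM_{\ell_\mu}^J$. The computation is identical after the substitutions $R^{IJ}\mapsto(R')^{IJ}$ and $(R^{-1})^{IJ}\mapsto(R'^{-1})^{IJ}$: insert $(R'^{-1})^{IJ}(R')^{IJ}=\Id$ between the two blocks, push $\MlmuJr$ to the left using the type-\eqref{eq:commutation_rep_2} relation first for $(\ell_{m-1},\ell_\mu)$ and then for $(\ell_m,\ell_\mu)$, collapse the leftover $R$-matrices, and recombine, yielding $\MlmuJr(R')^{IJ}\MlIl(R'^{-1})^{IJ}$, the second relation in \eqref{eq:graph_exchange_relations}. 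The only real difficulty I anticipate is bookkeeping: keeping straight which tensor leg each $R$-matrix acts on, and placing each inserted $R^{IJ}(R^{-1})^{IJ}$ (resp.\ $(R'^{-1})^{IJ}(R')^{IJ}$) so that each exchange relation applies to exactly one monodromy block at a time. Beyond that and the scalarity of $\kappa_I$, no new input is required.
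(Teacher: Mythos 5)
Your proposal is correct and follows essentially the same route as the paper: decompose $\bM^I_\ell=\kappa_I\bM^I_{\ell_m}\bM^I_{\ell_{m-1}}$, note that both constituent loops share the same linear order relative to $\ell_\mu$, and apply the $\cL_{0,4}$ exchange relation twice (once per block) with the inserted $R^{IJ}(R^{-1})^{IJ}$, resp.\ $(R'^{-1})^{IJ}(R')^{IJ}$, factors collapsing at the end — exactly the manipulation in the paper's proof. Your observation that the relations \eqref{eq:graph_exchange_relations_lmlmm} between $\bM_{\ell_m}$ and $\bM_{\ell_{m-1}}$ are never needed is also consistent with the paper's computation.
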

\begin{proof}
 Since the loop $\ell$ contains both loops $\ell_{m}$ and $\ell_{m-1}$, both $\ell_{m}$ and $\ell_{m-1}$ have the same linear order with respect to loop $\ell_{\mu}$.  
 Both monodromies $\bM^I_{\ell_{m}}$ and $ \bM^I_{\ell_{m-1}}$ satisfy the commutation  relations of $\cL_{0,4}$. The commutation relation of $\bM^I_{\ell}$ and $\bM^J_{\ell_{\mu}}$ can be derived by the use of the commutation relations of $\cL_{0,4}$. That is, \eqref{eq:graph_exchange_relations} holds when one replaces $\bM_\ell$ to $\bM_{\ell_m}$ or $\bM_{\ell_{m-1}}$.  Therefore, when $\ell\prec \ell_\mu$, $\ell_m\prec\ell_\mu$ and $\ell_{m-1}\prec\ell_\mu$, and hence
\be
\begin{aligned}
(R^{-1})^{IJ}\MlIl  R^{IJ}\MlmuJr  &=(R^{-1})^{IJ}\kappa_{I}\MlmIl  \MlmmIl  R^{IJ}\MlmuJr  \\
&=(R^{-1})^{IJ}\kappa_{I}\MlmIl  R^{IJ}\MlmuJr  (R^{-1})^{IJ}\MlmmIl  R^{IJ}\\
&=(R^{-1})^{IJ}\kappa_{I}R^{IJ}\MlmuJr  (R^{-1})^{IJ}\MlmIl  R^{IJ}(R^{-1})^{IJ}\MlmmIl  R^{IJ}\\
&=\MlmuJr  (R^{-1})^{IJ}\kappa_{I}\MlmIl  \MlmmIl  R^{IJ}\\
&=\MlmuJr  (R^{-1})^{IJ}\MlIl  R^{IJ}\;.
\end{aligned}
\ee
The exchange relation for $\ell\succ \ell_{\mu}$ is derived in the same way. 
$\ell\succ\ell_{\mu}$ implies $\ell_{m}\succ \ell_{\mu}$ and $\ell_{m-1}\succ \ell_{\mu}$. Then
\be
\begin{aligned}
(R')^{IJ}\MlIl  ((R')^{-1})^{IJ}\MlmuJr  &=(R')^{IJ}\kappa_{I}\MlmIl  \MlmmIl  ((R')^{-1})^{IJ}\MlmuJr  \\
&=(R')^{IJ}\kappa_{I}\MlmIl  (R'^{-1})^{IJ}\MlmuJr  (R')^{IJ}\MlmmIl  \left((R')^{-1}\right)^{IJ}\\
&=(R')^{IJ}\kappa_{I}(R'^{-1})^{IJ}\MlmuJr  (R')^{IJ}\MlmIl  \left((R')^{-1}\right)^{IJ}(R')^{IJ}\MlmmIl  \left((R')^{-1}\right)^{IJ}\\
&=\MlmuJr  (R')^{IJ}\kappa_{I}\MlmIl  \MlmmIl  \left((R')^{-1}\right)^{IJ}\\
&=\MlmuJr  (R')^{IJ}\MlIl  \left((R')^{-1}\right)^{IJ}\;.
\end{aligned}
\ee
\end{proof}

Given the monodromy $\bM_{\ell}^{I}$, an element $c_{\ell}^{I}:=\kappa_I \tr_q^I(\bM_{\ell}^I)$ can be constructed for each physical representation. Different from the central elements \eqref{eq:central_element} for a quantum monodromy around a single puncture, $c_{\ell}^{I}$ is not central. Nevertheless, the set $\{c_\ell^I\}_I$ has similar properties as the central elements, which are collected in Lemma \ref{lemma:c_x} and Proposition \ref{prop:central_element} below.
\begin{lemma}
 The element $c^I_\ell$ is independent of the choice of starting point.  
 \label{lemma:c_x}
\end{lemma}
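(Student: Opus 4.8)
The statement to prove is that $c^I_\ell := \kappa_I \tr^I_q(\bM^I_\ell)$ does not depend on which basepoint we choose for the loop $\ell = \ell_m \ell_{(m-1)}$ around the two punctures. The strategy is to reduce the basepoint-independence of $c^I_\ell$ to the already-known basepoint-independence of the central elements $c^I_\nu$ around single punctures, using the combinatorics of how a change of basepoint acts on the graph algebra $\cL_{0,4}$. Concretely, a change of basepoint from ${\mathfrak b}$ to ${\mathfrak b}'$ is implemented by conjugating every monodromy matrix $\bM^I_{\ell_\nu}$ by a fixed ``gauge'' group-like element (the holonomy of the path from ${\mathfrak b}$ to ${\mathfrak b}'$), i.e. $\bM^I_{\ell_\nu} \mapsto G^I \bM^I_{\ell_\nu} (G^I)^{-1}$ in a suitable leg-wise sense, and one must check that $\tr^I_q$ is invariant under such a conjugation.

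First I would recall the precise statement for single punctures: it is standard (and used implicitly in the definition \eqref{eq:central_element}) that $c^I_\nu = \kappa_I \tr^I_q(\bM^I_\nu)$ is basepoint-independent because $\tr^I_q$ is a \emph{quantum} trace, which is cyclic in the appropriate sense and invariant under conjugation by the relevant group-like elements of $\UQ$ entering the change-of-basepoint map. Then, using the defining relation \eqref{eq:M_ell}, namely $\bM^I_\ell = \kappa_I\, \bM^I_{\ell_m}\bM^I_{\ell_{m-1}}$, I would express a change of basepoint for $\ell$ in terms of the simultaneous change of basepoint for $\ell_m$ and $\ell_{m-1}$: both sub-loops are moved by the \emph{same} connecting path, so both $\bM^I_{\ell_m}$ and $\bM^I_{\ell_{m-1}}$ get conjugated by the same $G^I$, and hence $\bM^I_\ell \mapsto G^I \bM^I_\ell (G^I)^{-1}$ as well (the conjugating elements telescope, exactly as in the proof that a product of holonomies along a composite path transforms by conjugation). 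Taking $\kappa_I \tr^I_q$ of both sides and invoking conjugation-invariance of the quantum trace then gives $c^I_\ell \mapsto c^I_\ell$.

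The main subtlety — and the step I expect to require the most care — is making the phrase ``conjugate by the same group-like element'' rigorous within the ciliated graph formalism, where $\bM^I_\ell$ is an operator-valued matrix and the change of basepoint is realized as an algebra automorphism of $\cL_{0,m}$ rather than literal matrix conjugation. One must check that moving the cilium/basepoint does not reorder $\ell_m$ and $\ell_{m-1}$ in the linear order (they stay neighbors, so \eqref{eq:M_ell} still applies with the same $\kappa_I$), and that the braiding factors $R^{IJ}$ appearing in the exchange relations are untouched by the relabeling. Given Proposition \ref{prop:loop_algebra}, which already shows $\{\bM^I_\ell\}_I$ satisfies the loop-algebra relations of $\cL_{0,1}$, the cleanest route is: observe that $c^I_\ell$ is, by that proposition, a well-defined element of a $\cL_{0,1}$ built from $\bM^I_\ell$, and that the basepoint-independence of $\kappa_I\tr^I_q$ of a loop-algebra monodromy is an intrinsic property of $\cL_{0,1}$ — established for single-puncture monodromies in the literature (\cite{Alekseev:1994au,Alekseev:1995rn}; see also \cite{Han:2023wiu}) — which therefore transfers verbatim to $\bM^I_\ell$. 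I would write the proof this way, with the telescoping-of-conjugations computation as the explicit content and a citation for the single-loop case as the foundational input.
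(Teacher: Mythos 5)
Your reduction misplaces where the difficulty actually lies, and the step you lean on does not survive quantization. The lemma, as it is used in the paper, concerns the choice of the starting vertex on the loop $\ell$ itself: inserting a second vertex $y$ on $\ell$ and writing $\ell_m$, $\ell_{m-1}$ in terms of holonomies along curves $C,C',C''$, one finds $\bM^I_\ell=\kappa_I^{-1}\bold{U}^I_{C}\bold{U}^I_{C'}$ when based at $x$ and $\bM^I_\ell=\kappa_I^{-1}\bold{U}^I_{C'}\bold{U}^I_{C}$ when based at $y$, so the statement to prove is the cyclicity identity $\tr_q^I(\kappa_I^{-1}\bold{U}^I_{C'}\bold{U}^I_{C})=\tr_q^I(\kappa_I^{-1}\bold{U}^I_{C}\bold{U}^I_{C'})$. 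Classically this is conjugation invariance of an ordinary trace (the conjugating object being the holonomy of part of the loop), but quantum mechanically the matrix elements of $\bold{U}^I_C$ and $\bold{U}^I_{C'}$ do not commute --- they obey the $R$-matrix exchange relation \eqref{eq:commutation_U} --- and the quantum trace $\tr^I_q(\cdot)=\tr^I(\cdot\,g^I)$ is not invariant under naive conjugation by an operator-valued matrix with non-commuting entries. Your central move, ``both pieces are conjugated by the same fixed group-like element $G^I$, the conjugations telescope, and $\tr_q$ is conjugation-invariant,'' therefore assumes exactly what must be proved: there is no fixed group-like element of $\UQ$ doing the conjugating, only a quantum holonomy, and the invariance of $\tr_q$ under reordering is the non-trivial content. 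The paper supplies precisely this computation, feeding \eqref{eq:commutation_U} through the intertwining property \eqref{eq:g_intertwining} of $g$, using $(S\otimes \mathrm{id})(R)=R^{-1}$ and $u=\sum_a S(R^{(2)}_a)R^{(1)}_a$, and closing with $gu=v=\kappa^2$ to obtain the trace identity.

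Your fallback --- that basepoint independence is ``intrinsic to $\cL_{0,1}$'' and transfers verbatim from the single-puncture case --- is closer in spirit to the paper's argument, since the single-loop proof in the cited literature is the same trace manipulation. But as written it is an assertion rather than an argument: Proposition \ref{prop:loop_algebra} only establishes that the matrix elements of $\bM^I_\ell$ satisfy the abstract exchange relations of $\cL_{0,1}$; it says nothing about how $\bM^I_\ell$ is assembled from underlying holonomies, which is exactly where a starting-point dependence could enter. To make the transfer legitimate you must verify the one input the single-loop argument needs, namely that the two holonomy pieces $\bold{U}^I_C,\bold{U}^I_{C'}$ into which the two-puncture loop decomposes obey the same exchange relation \eqref{eq:commutation_U} as in the one-puncture case, and then run the quantum-trace computation on it. That verification and computation constitute the actual proof, and they are absent from your proposal.
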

\begin{figure}[h!]
\begin{subfigure}[t]{0.3\linewidth}
\begin{tikzpicture}
[scale=1.4]
\draw[blue,thick] (0,0) -- (0,-0.25);
\draw[brown, postaction={decorate},decoration={markings,mark=at position 0.5 with {\arrow[scale=1.5,>=stealth]{>}}}] (0,0) .. controls (1.15,0.85) and (1.75,1.65) .. (0.03,1.75) ;
\draw[brown, postaction={decorate},decoration={markings,mark=at position 0.5 with {\arrow[scale=1.5,>=stealth]{>}}}] (0,0) .. controls (1.15,0.85) and (1.75,1.65) .. (0.03,1.75) ;
\draw[gray, postaction={decorate},decoration={markings,mark=at position 0.5 with {\arrow[scale=1.5,>=stealth]{<}}}] (0,0) .. controls (-1.15,0.85) and (-1.75,1.65) .. (-0.03,1.75) ;
\draw[gray, postaction={decorate},decoration={markings,mark=at position 0.5 with {\arrow[scale=1.5,>=stealth]{<}}}] (0,0) .. controls (-1.15,0.85) and (-1.75,1.65) .. (-0.03,1.75) ;
\draw [brown, postaction={decorate},decoration={markings,mark=at position 0.6 with {\arrow[scale=1.5,>=stealth]{<}}}] (0.03,0) -- (0.03,1.75); 
\draw [brown, postaction={decorate},decoration={markings,mark=at position 0.6 with {\arrow[scale=1.5,>=stealth]{<}}}] (0.03,0) -- (0.03,1.75);
\draw [gray, postaction={decorate},decoration={markings,mark=at position 0.3 with {\arrow[scale=1.5,>=stealth]{>}}}] (-0.03,0) -- (-0.03,1.75); 
\draw [gray, postaction={decorate},decoration={markings,mark=at position 0.3 with {\arrow[scale=1.5,>=stealth]{>}}}] (-0.03,0) -- (-0.03,1.75);
\filldraw[color=black!60, fill=black](0,0) circle (0.05);
\draw (-0.1,0) node[anchor=east]{$x$};
\filldraw[color=black, fill=white, very thick](0.3,1.050) circle (0.1);
\filldraw[color=black, fill=white, very thick](-0.3,1.050) circle (0.1);
\draw (0.3, 0.9) node[below=-0.1]{$m$};
\draw (-0.3, 0.9) node[below=-0.1]{$m-1$};
\draw (-1,0.8) node[thick]{$\ell_{m-1}$};
\draw (1,0.8) node[thick]{$\ell_{m}$};
\end{tikzpicture}
\caption{}
\label{fig:_cycle_l}
\end{subfigure}
\begin{subfigure}[t]{0.3\linewidth}
\begin{tikzpicture}
[scale=1.4]
[scale=2]
\draw[blue,thick] (0,0) -- (0,-0.25);
\draw[blue,thick] (0,1.75) -- (0,2.0);
\draw[ brown, postaction={decorate},decoration={markings,mark=at position 0.5 with {\arrow[scale=1.5,>=stealth]{>}}}] (0,0) .. controls (1.15,0.85) and (1.75,1.65) .. (0.03,1.75) ;
\draw[brown, postaction={decorate},decoration={markings,mark=at position 0.5 with {\arrow[scale=1.5,>=stealth]{>}}}] (0,0) .. controls (1.15,0.85) and (1.75,1.65) .. (0.03,1.75) ;
\draw[gray, postaction={decorate},decoration={markings,mark=at position 0.5 with {\arrow[scale=1.5,>=stealth]{<}}}] (0,0) .. controls (-1.15,0.85) and (-1.75,1.65) .. (-0.03,1.75) ;
\draw[gray, postaction={decorate},decoration={markings,mark=at position 0.5 with {\arrow[scale=1.5,>=stealth]{<}}}] (0,0) .. controls (-1.15,0.85) and (-1.75,1.65) .. (-0.03,1.75) ;
\draw [red, postaction={decorate},decoration={markings,mark=at position 0.6 with {\arrow[scale=1.5,>=stealth]{<}}}] (0.03,0) -- (0.03,1.75); 
\draw [red, postaction={decorate},decoration={markings,mark=at position 0.6 with {\arrow[scale=1.5,>=stealth]{<}}}] (0.03,0) -- (0.03,1.75);
\draw [red, postaction={decorate},decoration={markings,mark=at position 0.3 with {\arrow[scale=1.5,>=stealth]{>}}}] (-0.03,0) -- (-0.03,1.75); 
\draw [red, postaction={decorate},decoration={markings,mark=at position 0.3 with {\arrow[scale=1.5,>=stealth]{>}}}] (-0.03,0) -- (-0.03,1.75);
\filldraw[color=black!60, fill=black](0,0) circle (0.05);
\draw (-0.1,0) node[anchor=east]{$x$};
\filldraw[color=black!60, fill=black](0,1.750) circle (0.05);
\draw (-0.1,1.790) node[anchor=east]{$y$};
\draw (0,1.30) node[thick,anchor=west]{$C''$};
\draw (0,0.50) node[thick,anchor=east]{$-C''$};
\filldraw[color=black, fill=white, very thick](0.3,1.050) circle (0.1);
\filldraw[color=black, fill=white, very thick](-0.3,1.050) circle (0.1);
\draw (0.3, 0.9) node[below=-0.1]{$m$};
\draw (-0.3, 0.9) node[below=-0.1]{$m-1$};
\draw (-1,0.8) node[thick]{$C'$};
\draw (1,0.8) node[thick]{$C$};
\end{tikzpicture}
\caption{}
\label{fig_c}
\end{subfigure}
\caption{{\it (a)} 
Loop $\ell$ starting at vertex $x$ as a composition of loops $\ell_m$ ({\it in brown}) and $\ell_{m-1}$ ({\it in gray}), which enclose the $m$-th and the $(m-1)$-th punctures ({\it illustrated as black hollow circles}) respectively.  
{\it (b)} {Loops $\ell_m$ and $\ell_{m-1}$ are decomposed into curves $C$ ({\it in brown}) and $C''$ ({\it in red}), and $-C''$ ({\it in red}) and $C'$ ({\it in gray}) respectively by adding an additional vertex $y$ to the loop $\ell$.} A cilium ({\it in blue}) is added at each vertex, pointing out of $\ell$, to fix the linear order.}
\end{figure}
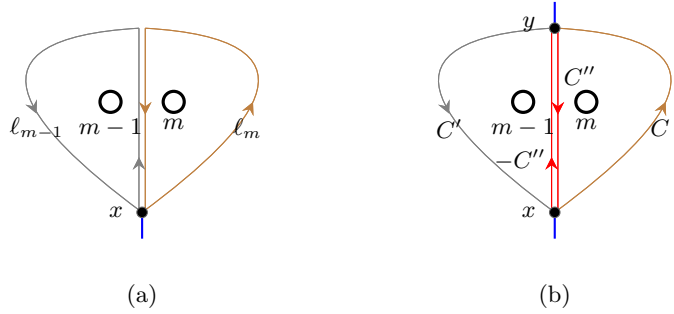
\begin{proof} 
The loop $\ell$ is composed of loops $\ell_{m}$ and $\ell_{m-1}$ as shown in fig.\ref{fig:_cycle_l}. We break the loop $\ell_{m}$ into $C$ and $C''$ and $\ell_{m}$ into $-C''$ and $C'$ by adding another (randomly chosen) vertex $y$ to $\ell$ as illustrated in fig.\ref{fig_c}. The quantum monodromies $\bM^I_{\ell_{m}}$ and $\bM^I_{\ell_{m-1}}$ can be expressed in terms of quantum holonomies along curves $C,C'$ and $C''$: $\bM^I_{\ell_{m}}=\kappa^{-1}_I\bold{U}^I_{C}\bold{U}^I_{C''}$ and $\bM^I_{\ell_{m-1}}=\kappa^{-1}_I\bold{U}^I_{-C''}\bold{U}^I_{C'}$. Combining these two expressions we have $\bM^I_{\ell}=\kappa^{-1}_I\bold{U}^I_{C'}\bold{U}^I_{C}$ if starting at $y$ while $\bM^I_{\ell}=\kappa^{-1}_I\bold{U}^I_{C}\bold{U}^I_{C'}$ if starting at $x$. To prove that $c^I_\ell$ is independent of the starting point, it is adequate to prove that 
\be
 \tr_q^I(\kappa^{-1}_I\bold{U}^I_{C'}\bold{U}^I_{C})=\tr_q^I(\kappa^{-1}_I\bold{U}^I_{C}\bold{U}^I_{C'}) \,.
\ee
Classically, the Poisson bracket of holonomies $U^I_{C}$ and $U^I_{C'}$ is \cite{Alekseev:1994pa},
 \be
 \{\ucl,\ucpr\}=\ucpr (r')^{II}_{x}\ucl -\ucl r^{II}_{y}\ucpr  \,,
 \ee
 which is quantized to the commutation relation of $\bold{U}^I_{C}$ and $\bold{U}^J_{C'}$ 
 \be
\stackrel{2\hspace{1em}}{\bold{U}_{C'}^I}  (R')^{II}_{x} \UCl =\UCl  R^{II}_{y} \stackrel{2\hspace{1em}}{\bold{U}_{C'}^I}\,.
 \label{eq:commutation_U}
 \ee
 There is a group-like element $g\in\UQ$ with the intertwining property 
 \be
 gS(\xi)=S^{-1}(\xi)g,\,\forall\,\xi\in\UQ\,,
 \label{eq:g_intertwining}
 \ee
 where $S^{-1}$ is the inverse of the antipode such that $S^{-1}\circ S(\xi)=\xi$. 
Moreover, $g$ can be decomposed as \cite{Alekseev:1994pa,Alekseev:1994au}. 
 \be
 g=u^{-1}v\,,\quad u:=\sum_aS(R_a^{(2)})R_a^{(1)}\,,
 \ee
 where $v$ is the central element $v$ of $\UQ$. Introduce $g$ as a function of points on $\Sfour$ and let $g_x=g(x)$ and $g_y=g(y)$.
We multiply \eqref{eq:commutation_U} by $\gxl\equiv \rho^I(g_x)\otimes \Id_{V^J}$ from the left and $\gyl\equiv \rho^I(g_y)\otimes \Id_{V^J}$ from the right, then obtain 
 \be
\gxl\,\stackrel{2\hspace{1em}}{\bold{U}_{C'}^I}  (R')^{II}_{x} \UCl\, \gyl=\gxl\,\UCl  R^{II}_{y} \stackrel{2\hspace{1em}}{\bold{U}_{C'}^I}\, \gyl\,.
 \ee
  Here we set $R^{-1}=\sum_i R^1_i\otimes R^2_i$ and $(S\otimes id)(R)=R^{-1}$ , $(S^{-1}\otimes id)(R')=R'^{-1}$. Then
 \be
 \begin{aligned}
\gxl\,\stackrel{2\hspace{1em}}{\bold{U}_{C'}^I}  \left(\sum_{i}S(R^2_i)\otimes R^1_i\right)^{II}_{x} \UCl \,\gyl=\gxl\,\UCl  \left(\sum_{j}S^{-1}(R^1_j)\otimes R^2_j\right)^{II}_{y} \stackrel{2\hspace{1em}}{\bold{U}_{C'}^I} \,\gyl\,.
\end{aligned}
\label{eq:gURUg}
 \ee
Using the intertwining relation \eqref{eq:g_intertwining} of the group-like element $g$, \eqref{eq:gURUg} becomes
 \be
\stackrel{2\hspace{1em}}{\bold{U}_{C'}^I}  \left(\sum_{i}S^{-1}(R^2_i)\otimes R^1_i\right)^{II}_{x} \gxl\,\UCl\, \gyl=\gxl\,\UCl \,\gyl \left(\sum_{j}S(R^1_j)\otimes R^2_j\right)^{II}_{y} \stackrel{2\hspace{1em}}{\bold{U}_{C'}^I} \, .
\ee
Reordering the expression produces
 \be
   \stackrel{2\hspace{1em}}{\bold{U}_{C'}^I} \lb\sum_j \stackrel{2\hspace{1.5em}}{(R_{jx}^{(1)})^I}\stackrel{1}{(S^{-1}(R_{jx}^{(2)}))^I}\rb \gxl\,\UCl\, \gyl 
   =\gxl\,\UCl \,\gyl \lb\sum_i \stackrel{1}{(S(R_{iy}^{(1)}))^I}\stackrel{2\hspace{1.5em}}{(R_{iy}^{(2)})^I}\rb \stackrel{2\hspace{1em}}{\bold{U}_{C'}^I}\, .
   \label{eq:UrrgUg}
 \ee
 To proceed, we first multiply
the two matrix components in the last equation and then multiply both sides with $v^2_I$ (which is merely a complex number), then
 take the trace $\tr^I$. Using the fact that $u^I=v^2_I (R_{jx}^{(1)}S^{-1}(R_{jx}^{(2)}))^I=v^2_I(S(R_{iy}^{(1)})R_{iy}^{(2)})^I$, we get
\be
   \tr^I(\bold{U}^I_{C'} u^I_x g^I_x \bold{U}^I_{C} g_y^I)=\tr^I(g_x^I\bold{U}^I_{C}g^I_yu^I_y \bold{U}^I_{C'}) 
\ee
{Finally, since $gu=ug=v$ as $v$ is central and the quantum trace is defined as $\tr^I_q({\bf U}^I):=\tr({\bf U}^I g^I)$, we multiply both side by $\kappa_I^{-1}$ and get (recall $\kappa_I^2=v^I$).}
\be 
 \tr_q^I(\kappa^{-1}_I\bold{U}^I_{C'}\bold{U}^I_{C})=\tr_q^I(\kappa^{-1}_I\bold{U}^I_{C}\bold{U}^I_{C'}) \,.   
\ee
This is equivalent to 
\be
 \tr_q^I(\kappa_I \bM^I_{\ell_{m-1}}\bM^I_{\ell_m})=\tr_q^I(\kappa_I\bM^I_{\ell_m}\bM^I_{\ell_{m-1}})\,.
\ee
\end{proof}
\begin{prop}
\label{prop:central_element}
$c_\ell^{I}$ as an element of $\cL_{0,4}$  satisfies the following properties.
\begin{enumerate}
	\item They are gauge-invariant but not central elements. \\
\item They form a fusion algebra $\mathcal{V}(\ell)$:
\be
c_\ell^Ic_\ell^J=\sum_K N^{IJ}_Kc_\ell^K \;,\quad
 (c_\ell^I)^{*}=c^{\Bar{I}}_\ell \;.
 \label{eq:c_fusion}
 \ee
\end{enumerate}
\end{prop}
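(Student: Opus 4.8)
The plan is to establish the two items separately, reducing each as far as possible to the corresponding facts about the single-puncture central elements $c^I_\nu$; the only genuinely new ingredient is Proposition \ref{prop:loop_algebra}, which tells us that $\bM^I_\ell$ satisfies the loop-algebra relations \eqref{eq:loop_exchange_relations} and hence behaves like the monodromy of a single loop. For gauge-invariance, observe that $\ell$ is a closed loop based at the marked vertex, so the quantum gauge transformation by $\xi\in\UQ$ acts on $\bM^I_\ell$ by the adjoint-type coaction $\bM^I_\ell\mapsto\rho^I(\xi_{(1)})\,\bM^I_\ell\,\rho^I(S(\xi_{(2)}))$, exactly as on any $\bM^I_\nu$. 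Ad-invariance of the quantum trace, i.e.\ $\tr^I_q\big(\rho^I(\xi_{(1)})\,X\,\rho^I(S(\xi_{(2)}))\big)=\epsilon(\xi)\,\tr^I_q(X)$ — the same property that places $\tr^I_q(\bM^I_\nu)$, and hence $c^I_\nu$, in $\cA_{0,4}$ — then gives $\xi\triangleright c^I_\ell=\epsilon(\xi)\,c^I_\ell$, i.e.\ $c^I_\ell\in\cA_{0,4}$.

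Non-centrality is where the two-puncture nature of $\ell$ is used. A single-puncture loop $\ell_\nu$ can be isotoped off every other generator, which is the geometric reason $c^I_\nu$ is central; by contrast $\ell$ has nonzero geometric intersection number on $\Sfour$ with a loop $\ell'$ around a different pair of punctures, for instance $\ell=\ell_2\ell_1$ around $\{1,2\}$ and $\ell'=\ell_3\ell_2$ around $\{2,3\}$. Classically the Goldman bracket of the corresponding trace-of-holonomy functions (in the fundamental representation, the ordinary Wilson loops) is then nonzero as a function on the moduli space, so its quantization $[c^I_\ell,c^J_{\ell'}]$ cannot vanish; equivalently, the braided exchange relations between $\bM^I_\ell$ and the monodromy of $\ell'$ (derived exactly as in the graph-algebra proposition above, cf.\ \eqref{eq:graph_exchange_relations}) do not degenerate to commutation after taking quantum traces. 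Since $c^J_{\ell'}$ is again gauge-invariant, $c^I_\ell$ fails to commute with an element of $\cA_{0,4}$ and is therefore not central. (Alternatively one may appeal to Section \ref{sec:rep_fusion}: on the irreducible module $W^0(K_1,\dots,K_4)$ of the moduli algebra $c^I_\ell$ is diagonal with eigenvalues $s_{IJ}$ depending on the block label $J$, so by Schur's lemma it cannot be central.)

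For the fusion algebra the strategy is to trace the fusion relation \eqref{eq:M_ell_property_1}, which exhibits $\MlIl R^{IJ}\MlJr$ as the quantum monodromy of $\ell$ in the representation $V^I\otimes V^J$, block-decomposed along $V^I\otimes V^J\cong\bigoplus_K N^{IJ}_K V^K$ by the Clebsch--Gordan maps. Applying the suitably $\kappa$-weighted quantum trace $\tr^I_q\otimes\tr^J_q$ to both sides: on the right-hand side, cyclicity of $\tr_q$ together with the Clebsch--Gordan identities already used in the proof of Proposition \ref{prop:loop_algebra} — $C[IJ|K](R^{-1})^{IJ}C[IJ|L]^*=\frac{\kappa_K}{\kappa_I\kappa_J}\delta_{KL}$ and the completeness $\sum_K C[IJ|K]^*C[IJ|K]=\Id$ (summed over $K$ and over Clebsch--Gordan multiplicities) — collapse each summand to $\kappa_K\,\tr^K_q(\bM^K_\ell)$ occurring with multiplicity $N^{IJ}_K$, i.e.\ to $\sum_K N^{IJ}_K c^K_\ell$; on the left-hand side, using that $g$ is grouplike (so $\rho^{I\otimes J}(g)=g^I\otimes g^J$), together with Lemma \ref{lemma:c_x} (base-point independence of $c^I_\ell$) and the $R$-matrix bookkeeping, the same weighted trace reproduces the algebra product $c^I_\ell c^J_\ell$. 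Equating the two sides yields $c^I_\ell c^J_\ell=\sum_K N^{IJ}_K c^K_\ell$; running the computation with $I\leftrightarrow J$ and invoking $N^{IJ}_K=N^{JI}_K$ gives, as a byproduct, the commutativity of $\mathcal{V}(\ell)$. Finally $(c^I_\ell)^*=c^{\bar I}_\ell$ follows from \eqref{eq:M_ell_property_3}: $\sigma_\kappa$ is invisible under $\tr_q$ ($\kappa$ being central), the conjugation $R^I(\,\cdot\,)(R^{-1})^I$ converts $\tr^I_q$ of the inverse monodromy into $\tr^{\bar I}_q$ of the dual monodromy $\bM^{\bar I}_\ell$ up to a factor $\kappa_I^2$, and $|q|=1$ makes the remaining $\kappa$-factors combine to $\kappa_{\bar I}$, so that $(c^I_\ell)^*=\kappa_{\bar I}\,\tr^{\bar I}_q(\bM^{\bar I}_\ell)=c^{\bar I}_\ell$.

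I expect the main obstacle to be the left-hand side of the fusion computation: verifying that the $\kappa$-weighted quantum trace of the two-strand monodromy $\MlIl R^{IJ}\MlJr$ reproduces exactly the algebra product $c^I_\ell c^J_\ell$, with no residual $R$-matrix or normalization constant surviving. This needs the full package of quantum-trace identities — cyclicity, $\tr^{I\otimes J}_q=\tr^I_q\otimes\tr^J_q$, and the behaviour of partial $q$-traces against the $R$-matrix — dovetailed with the precise $\kappa$-normalization of the Clebsch--Gordan maps, which is exactly what the $\kappa_I\kappa_J$ versus $\kappa_{V^I\otimes V^J}$ weighting is designed to absorb. Everything else is a formal rerun of the single-puncture argument; the structural novelty is only that $\bM^I_\ell$ obeys the loop-algebra relations while encircling two punctures at once, which is also precisely what forces $c^I_\ell$ to be gauge-invariant yet non-central.
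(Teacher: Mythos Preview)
Your proposal is correct, and for the fusion and $*$-relations it coincides with the paper's proof: both simply observe that Proposition~\ref{prop:loop_algebra} puts $\bM^I_\ell$ into a loop algebra $\cL_{0,1}$, whereupon the derivation of $c^I_\ell c^J_\ell=\sum_K N^{IJ}_K c^K_\ell$ and $(c^I_\ell)^*=c^{\bar I}_\ell$ is verbatim the single-loop computation of \cite{Alekseev:1994au}. The paper in fact does not spell out the trace manipulation you sketch; your worry about the ``left-hand side of the fusion computation'' is therefore not something the paper resolves either --- it is deferred to the literature.

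Where you differ is in item~1. For gauge-invariance the paper does not invoke ad-invariance of $\tr_q$ directly; instead it shows explicitly that $c^I_\ell$ matches the generator template \eqref{eq:generator_A04} of $\cA_{0,4}$, by writing $\kappa_J\tr^J_q(\kappa_J\bM^J_{\ell_{m-1}}\bM^J_{\ell_m})=\kappa_J\tr^J_q\big(C[I0|J]\,\kappa_I\bM^I_{\ell_{m-1}}\bM^I_{\ell_m}\,C[I0|J]^*\big)$ via a short Schur-lemma argument on $C[I0|J]$, together with Lemma~\ref{lemma:c_x} to swap the order under the trace. Your ad-invariance route is shorter and more conceptual; the paper's route has the mild advantage that it exhibits $c^I_\ell$ as a visible specialisation of the known $\cA_{0,4}$ generators. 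For non-centrality the paper takes the most direct tack: rather than intersecting $\ell$ with a second two-puncture loop and appealing to the Goldman bracket (which, as you half-acknowledge, is a semiclassical heuristic), it simply notes that $c^I_\ell$ fails to commute with the single-puncture monodromies $\bM^J_{\ell_m}$ and $\bM^K_{\ell_{m-1}}$ that $\ell$ itself encloses, this being checkable from \eqref{eq:loop_exchange_relations} and \eqref{eq:graph_exchange_relations}. Your alternative Schur-lemma argument from the $J$-dependent eigenvalues in Section~\ref{sec:rep_fusion} is clean but forward-references; the paper avoids that.
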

\begin{proof}
The gauge-invariant property can be proved by showing that $c^I_{\ell}$ is in the invariant sub-algebra $\mathcal{A}_{0,4}$. Recall that any element in $\mathcal{A}_{0,4}$ can be expressed as a linear combination of the form \eqref{eq:generator_A04}.
In the following ,we are going to show that  
\be
\kappa_J\tr^J_q(\kappa_J\bM^J_{\ell_{m-1}}\bM^J_{\ell_{m}})
= \kappa_J\tr^J_q(C[I0|J]\kappa_I\bM^I_{\ell_{m-1}}\bM^I_{\ell_{m}}C[I0|J]^*)\,,
\label{eq:c_l_A}
\ee
which is a special case of \eqref{eq:generator_A04} when $\bM_{\ell_m}$ and $\bM_{\ell_{m-1}}$ take the same representation while the other two quantum holonomies take trivial representations.  

The Clebsh-Gordon maps $C[I0|J]$ and $C[I0|L]^*$ are intertwiners and commute with $\rho^I(\xi), \forall \,\xi \in \mathcal{U}_q(\mathfrak{su}(2))$: 
\be
\begin{split}
C[I0|J]\rho^I(\xi)&=C[I0|J](\rho^I\otimes \epsilon)\Delta(\xi)=\rho^J(\xi) C[I0|J]\,,\\
C[I0|L]^*\rho^L(\xi)&=(\rho^I\otimes \epsilon)\Delta'(\xi)C[I0|L]^*=\rho^I(\xi_i^{(2)})\epsilon(\xi_i^{(1)})C[I0|L]^*=\rho^I(\xi_i^{(2)}\epsilon(\xi_i^{(1)}))C[I0|L]^*=\rho^I(\xi)C[I0|L]^*\;.
\end{split}
\ee
By Schur's lemma, both $C[I0|L]$ and $C[I0|L]^*$ are equal to identity $e^I$ up to some complex factor. To be consistent with the normalization condition we set for CG maps, the factor must be compatible with the normalisation condition.
Suppose 
$C[I0|J]=\lambda \delta_{I,J}$ and $C[I0|L]^*=\lambda' \delta_{L,I}$,
\be
C[I0|J](R')^{I0}C[I0|L]^*=C[I0|J]C[I0|L]^*= \delta_{J,L}\;,
\ee
where the first equality is obtained by using quasi-triangularity, i.e. $(id \otimes \epsilon)(R')=e$.
From the derivation above we then have $\lambda \lambda'=1$, which shows
\be
C[I0|J]\kappa_I\bM^I_{\ell_{m-1}}\bM^I_{\ell_{m}}C[I0|J]^*=\kappa_J\bM^J_{\ell_{m-1}}\bM^J_{\ell_{m}}.
\ee
Combining the derivation above with Lemma \ref{lemma:c_x}, it is confirmed that $c^I_{\ell}$ is indeed an element of $\mathcal{A}_{0,4}$.

 It can be proved by using Lemma \ref{lemma:trace_identity} in Appendix \ref{app:detail_calc} that elements $c_\ell^I$ still commute with $\bold{M}^L_{\nu},$ for $\nu\neq m,\;m-1$. However, they are not central elements of $\cL_{0,4}$ and can be verified by the fact that $c^I_\ell =\kappa_I \tr_q^I(\kappa_I \bM_{\ell_m}^I \bM^I_{\ell_{m-1}})$ does not commute with $\bM_{\ell_m}^J$ or $\bM_{\ell_{m-1}}^K$, using the defining relations \eqref{eq:loop_exchange_relations} of a loop algebra and those \eqref{eq:graph_exchange_relations} of a graph algebra. 

Lastly, the matrix elements of $\bM_\ell^{I}=\kappa_I \bM_{\ell_m}^I \bM_{\ell_{m-1}}^I$ are the generators of the loop algebra as shown in Proposition \ref{prop:loop_algebra}. Then the fusion rule can be derived with the help of the functionality condition of $\bM^I_\ell$: 
\be
\MlIl R^{IJ}\MlJr=\sum_K C[IJ|K]^*\bM^K_\ell C[IJ|K]\,.
\ee
The derivation and proof for the fusion rule and $*$-structure of $c^I_{\ell}$ are the same as those shown in \cite{Alekseev:1994au} hence we skip it here.
 \end{proof}
  To construct the quantum character of the loop $\ell$, we use the $S$-matrix defined in \eqref{eq:S-matrix_def} to define the quantum character $\chi^I_{\ell}$:
  \be
  \chi^I_\ell:=\mathcal{N}d_I S_{IJ}c^{\Bar{J}}_{\ell}=\mathcal{N}d_I S_{I\Bar{J}}c^{J}_{\ell}\,.
  \ee
  It also satisfies the projecting property just as the quantum character for quantum monodromies around single loops. That is,
  \be
  \chi^I_\ell  \chi^J_\ell=\delta_{IJ}  \chi^J_\ell \quad,\quad (\chi^I_\ell)^{*}=\chi^{I}_\ell\,.
  \ee  
  The detailed proof can also be found in \cite{Alekseev:1994au}. 
  
\medskip

Let us clarify that, although we are working on $\UQ$ with $q$ a root-of-unity in this paper, the above derivations were done firstly in an {unrealistic} case by assuming that there is no truncation for representations and then truncating away the unphysical representation. 
{This means the representations we use above are for the truncated Hopf algebra $\mathcal{U}^T_q(\mathfrak{su}(2))$,  which is canonically associated with $\UQ$ with $q$ a root-of-unity, ensuring that the irreducible representations precisely correspond to the physical representations of $\UQ$ \cite{Alekseev:1994au,Alekseev:1995rn}.} Thanks to the semi-simplicity of $\mathcal{U}^T_q(\mathfrak{su}(2))$, the derivations are neat as above. 
Nevertheless, one can directly work on the representations of $\UQ$ then the above results are still valid with the so-called substitution rule (see \eqref{eq:substitution_rule}) applied. We refer to Appendix \ref{com_relation_truncated} for this approach. 
 
To end this section, we would like to comment on the case when $q$ is a phase but not a root-of-unity, i.e. $|q|=1, q^{k+2}\neq 1$. Almost all the proofs and defining relations we gave in this section still hold in this case except that the quantum character $\chi^I_{\ell}$ is not well-defined due to the fact that the $S$-matrix is ill-defined for $|q|=1,q^{k+2}\neq 1$. More details about $|q|=1,q^{k+2}\neq 1$ case can be found in \cite{Han:2023wiu}.

\section{Representation of the fusion algebra $\mathcal{V}(\ell)$}
\label{sec:rep_fusion}

Having defined the fusion algebra $\mathcal{V}(\ell)$, we proceed to construct the representation of $\mathcal{V}(\ell)$. To be concrete and with no loss of generality, we set $\ell$ to be the loop that encloses the first and second punctures, i.e. $\ell=\ell_2 \ell_{1}$.  
Then $\bM^I_\ell=\kappa_I\bM^I_{\ell_2}\bM^I_{\ell_1}$. 
As mentioned in the preceding section, $\{c^I_\ell\}_I$ form a fusing algebra $\mathcal{V}(\ell)$ but they are not central elements. This means the commutant of $\mathcal{V}(\ell)$ in the $\mathfrak{M}^{K_1,K_2,K_3,K_4}_{0,4}$ is non-trivial, unlike that of $\mathcal{V}(\ell_\nu)$, and that the representations of $\cV(\ell)$ in the multiplicity space $W^0(K_1,K_2,K_3,K_4)$ are reducible. 
In this section, we describe the representation theory of $\cV(\ell)$ (Theorem \ref{theorem:rep_fusion}) which is based on the {\it first pinching theorem} (Lemma \ref{lemma:first_pinching_theorem}) \cite{Alekseev:1995rn} and study the corresponding representation space. 

\begin{lemma}
  The commutant $\mathcal{C}(\mathcal{V}(\ell),\mathfrak{M}^{K_1,K_2,K_3,K_4}_{0,4})$ of $\mathcal{V}(\ell)$ in $\mathfrak{M}^{K_1,K_2,K_3,K_4}_{0,4}$ can be split into the direct sum of products of two moduli algebras, each corresponding to a $3$-punctured sphere. 
\be
\mathcal{C}(\mathcal{V}(\ell),\mathfrak{M}^{K_1,K_2,K_3,K_4}_{0,4})=\bigoplus_J \mathfrak{M}^{K_1,K_2,J}_{0,3}\otimes \mathfrak{M}^{\Bar{J},K_3,K_4}_{0,3}\,,
\ee
where $J$ runs through all the physical irreducible representations of $\UQ$, \ie $\max \lb |K_1-K_2|,|K_3-K_4| \rb\leq J\leq \min\lb u(K_1,K_2),u(K_3,K_4) \rb$ with spacing 1,  and $\bar{J}$ is the dual representation of $J$. However, {for non-admissible $J$, the algebra is trivial since $N^{K_1K_2}_J=0$.}
\label{lemma:first_pinching_theorem}
\end{lemma}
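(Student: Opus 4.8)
The plan is to turn the assertion into a commutant computation inside a full matrix algebra and then recognise the resulting blocks as the pair-of-pants moduli algebras obtained by cutting $\Sfour$ along the curve $c$ homotopic to $\ell=\ell_2\ell_1$. Since $\fM_{0,4}^{K_\nu}$ is a finite-dimensional $*$-algebra whose \emph{only} irreducible $*$-representation is $W^0(K_1,K_2,K_3,K_4)$ \cite{Alekseev:1995rn}, it is semisimple with a single simple block, hence equals the full algebra $\End\lb W^0(K_1,K_2,K_3,K_4)\rb$; it therefore suffices to describe the commutant of the operators $\{c_\ell^I\}_I$ acting on $W^0(K_1,K_2,K_3,K_4)$. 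By Proposition~\ref{prop:central_element}, $\cV(\ell)$ is a commutative $*$-algebra, and the quantum characters $\{\chi_\ell^J\}_J$ are its minimal orthogonal idempotents with $\sum_J\chi_\ell^J=\Id$ and $c_\ell^I\chi_\ell^J=\lambda_J^I\,\chi_\ell^J$ for scalars $\lambda_J^I$; moreover the tuples $(\lambda_J^I)_I$ are pairwise distinct in $J$, since otherwise two of the $\chi_\ell^J$ would coincide. Hence an element of $\End\lb W^0\rb$ commutes with every $c_\ell^I$ if and only if it commutes with every $\chi_\ell^J$, that is, if and only if it is block-diagonal for the decomposition $W^0=\bigoplus_J\chi_\ell^J W^0$. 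This already yields
\be
\mathcal{C}\lb\mathcal{V}(\ell),\fM_{0,4}^{K_\nu}\rb=\bigoplus_J\chi_\ell^J\,\fM_{0,4}^{K_\nu}\,\chi_\ell^J\,,
\ee
so the remaining task is to identify the $J$-th block.

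For that I would use the geometric pinching of $\Sfour$ along $c$. Cutting along $c$ produces a three-punctured sphere $\Sigma^{(12)}$ carrying the punctures $1,2$ and a new boundary along which the monodromy is $\bM_\ell^I=\kappa_I\bM_{\ell_2}^I\bM_{\ell_1}^I$, together with a three-punctured sphere $\Sigma^{(34)}$ carrying the punctures $3,4$ and the same boundary, now traversed in the opposite orientation so that its monodromy is $\bM_{-\ell}^I=(\bM_\ell^I)^{-1}$, whose conjugacy class is fixed to the \emph{dual} representation $\Bar{J}$ once that of $\bM_\ell^I$ is fixed to $J$. Inside $\cL_{0,4}$ these two pieces correspond to the graph subalgebras generated respectively by the matrix elements of $\{\bM_{\ell_1}^I,\bM_{\ell_2}^I\}_I$ and of $\{\bM_{\ell_3}^I,\bM_{\ell_4}^I\}_I$. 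Using the graph-algebra exchange relations \eqref{eq:graph_exchange_relations} together with the fact that $\cR$-matrix braidings between generators attached to disjoint subsurfaces drop out on gauge-invariant elements, one shows --- this is the fusion/gluing step of combinatorial quantization \cite{Fock:1998nu,Alekseev:1994au,Alekseev:1995rn} (see also \cite{Han:2023wiu}) --- that the gauge-invariant subalgebras $\cA_{0,3}^{(12)}$ and $\cA_{0,3}^{(34)}$ sitting inside $\cA_{0,4}$ mutually commute and, after multiplication by the projector $\chi_\ell^J$ fixing the intermediate channel, jointly generate $\chi_\ell^J\cA_{0,4}$; concretely, $\chi_\ell^J\cA_{0,4}\cong\lb\chi_\ell^J\cA_{0,3}^{(12)}\rb\otimes\lb\chi_{-\ell}^{\Bar{J}}\cA_{0,3}^{(34)}\rb$. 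Inserting this into $\fM_{0,4}^{K_\nu}=\chi_0^0\chi_1^{K_1}\chi_2^{K_2}\chi_3^{K_3}\chi_4^{K_4}\cA_{0,4}$ --- noting that $\chi_\ell^J$ commutes with each $\chi_\nu^{K_\nu}$ because the latter are central in $\cA_{0,4}$ --- and recognising $\fM_{0,3}^{K_1,K_2,J}$ as the moduli algebra built from $\cA_{0,3}^{(12)}$ by the characters $\chi_1^{K_1},\chi_2^{K_2},\chi_\ell^J$ (and likewise $\fM_{0,3}^{\Bar{J},K_3,K_4}$ on the other side), one obtains $\chi_\ell^J\fM_{0,4}^{K_\nu}\chi_\ell^J\cong\fM_{0,3}^{K_1,K_2,J}\otimes\fM_{0,3}^{\Bar{J},K_3,K_4}$, which together with the block-diagonalisation above is exactly the claimed decomposition. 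The range of $J$ and the triviality for non-admissible $J$ then come for free: $\chi_\ell^J\chi_1^{K_1}\chi_2^{K_2}\cA_{0,3}^{(12)}$ is the zero algebra unless $N^{K_1K_2}_J\neq0$, \ie $|K_1-K_2|\le J\le u(K_1,K_2)$, and the $(34)$ factor similarly forces $|K_3-K_4|\le J\le u(K_3,K_4)$, so a summand survives precisely when $\max\lb|K_1-K_2|,|K_3-K_4|\rb\le J\le\min\lb u(K_1,K_2),u(K_3,K_4)\rb$.

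The step I expect to be the real obstacle is the gluing isomorphism $\chi_\ell^J\cA_{0,4}\cong\lb\chi_\ell^J\cA_{0,3}^{(12)}\rb\otimes\lb\chi_{-\ell}^{\Bar{J}}\cA_{0,3}^{(34)}\rb$: the generators attached to the two sides of $c$ do \emph{not} commute in $\cL_{0,4}$ --- their exchange is governed by the $\cR$-matrix, as in \eqref{eq:graph_exchange_relations} --- and the tensor-product structure only emerges after one simultaneously passes to gauge invariants and restricts the conjugacy class of $\bM_\ell$. Carrying out this $\cR$-matrix bookkeeping in detail, rather than quoting it from the combinatorial-quantization literature \cite{Alekseev:1995rn}, is where essentially all of the work of the proof resides.
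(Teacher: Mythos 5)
The paper does not actually prove this lemma: it states it and refers the reader to \cite{Alekseev:1995rn}, so your proposal can only be measured against the reference that both you and the paper ultimately lean on. The first half of your argument is correct and cleanly done: since $\fM_{0,4}^{K_\nu}$ is faithfully and irreducibly represented on $W^0(K_1,K_2,K_3,K_4)$ alone, it is the full matrix algebra on that space; the characters $\chi_\ell^J$ are self-adjoint orthogonal idempotents summing to the unit and span the same linear space as the $c_\ell^I$ by invertibility of the $S$-matrix; hence the commutant of $\cV(\ell)$ is the block algebra $\bigoplus_J\chi_\ell^J\,\fM_{0,4}^{K_\nu}\,\chi_\ell^J$. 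This is consistent with how the lemma is actually exploited later in the paper (completeness of the $\chi_\ell^J$ and the decomposition of the unit in the proof of Theorem \ref{theorem:rep_fusion}).

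The gap is exactly where you locate it yourself: the identification $\chi_\ell^J\,\fM_{0,4}^{K_\nu}\,\chi_\ell^J\cong\fM_{0,3}^{K_1,K_2,J}\otimes\fM_{0,3}^{\Bar{J},K_3,K_4}$ \emph{is} the content of the pinching theorem, and your proposal asserts it rather than proves it. The commutativity of the two three-punctured invariant subalgebras inside $\cA_{0,4}$, and above all the claim that, after inserting $\chi_\ell^J$, they generate the whole block, are precisely the nontrivial $\cR$-matrix statements established in \cite{Alekseev:1995rn}; deferring them to that reference makes your text an outline rather than a proof (which is no worse than the paper's own treatment, but should be stated as a citation rather than as a step ``one shows''). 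If you want a route that closes the gap using only ingredients appearing in this paper, note that the explicit representation \eqref{eq:rep_M_explicit} allows a direct computation of the action of $c_\ell^I$ on $V^{K_1}\otimes V^{K_2}\otimes V^{K_3}\otimes V^{K_4}$ (this is exactly what Theorem \ref{Thm:eigenvalue_c_l} does, without invoking the present lemma); from it one reads off that $\chi_\ell^J$ projects $W^0$ onto the channel subspace $W^J(K_1,K_2)\otimes W^{\Bar{J}}(K_3,K_4)$, which is nonzero precisely for admissible $J$ (giving the stated range and the triviality for $N^{K_1K_2}_J=0$), and the block is then $\End\lb W^J(K_1,K_2)\otimes W^{\Bar{J}}(K_3,K_4)\rb$, identified with $\fM_{0,3}^{K_1,K_2,J}\otimes\fM_{0,3}^{\Bar{J},K_3,K_4}$ by the same full-matrix-algebra property of the genus-zero three-punctured moduli algebras that you already used for $\fM_{0,4}^{K_\nu}$.
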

A proof of Lemma \ref{lemma:first_pinching_theorem} can be found in \cite{Alekseev:1995rn}. According to the first pinching theorem, to study the representation theory of $\cV(\ell)$, one can separate $\Sfour$ into a pair of 3-punctured spheres $\Sigma_{0,3}$'s by doing a pants decomposition as illustrated in fig.\ref{fig:pants}, and study the representation theories of the moduli algebra on a $\Sigma_{0,3}$ individually. We make it precise in the following theorem. 

\begin{theorem}
\label{theorem:rep_fusion}
Given any set of physical representations $K_1,K_2,K_3,K_4$, each labeling a puncture of $\Sfour$, there exists a unique irreducible $*$-representation of the fusion algebra $\mathcal{V}(\ell)$ on the space $W^{J}(K_1,K_2)\otimes W^{\bar{J}}(K_3,K_4)$ for each admissible physical representation $J$ of $\UQ$.  
\end{theorem}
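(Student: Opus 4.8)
The plan is to reduce the statement to the structure theory of finite-dimensional $*$-algebras by combining the first pinching theorem (Lemma \ref{lemma:first_pinching_theorem}) with a double-commutant (dual-pair) argument. The starting point is that, since $\UQ$ at a root of unity is handled through the semisimple truncated Hopf algebra, the moduli algebra $\mathfrak{M}^{K_1,K_2,K_3,K_4}_{0,4}$ is a finite-dimensional $C^*$-algebra whose unique irreducible $*$-representation is the physical Hilbert space $W^0(K_1,K_2,K_3,K_4)$ (recalled above, following \cite{Alekseev:1995rn}); hence it acts on $W^0$ faithfully and surjects onto $\End(W^0)$. Inside $\End(W^0)$, the fusion algebra $\mathcal{V}(\ell)$ is a $*$-subalgebra: it is $*$-closed by Proposition \ref{prop:central_element}, $(c_\ell^I)^*=c_\ell^{\bar I}$, and its $*$ is the operator adjoint. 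By the double commutant theorem for $*$-subalgebras of a matrix algebra, $\mathcal{V}(\ell)$ equals its own bicommutant, so the whole problem is to compute the commutant of $\mathcal{V}(\ell)$ and read off the resulting \emph{dual pair}.

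First I would record the decomposition
\begin{equation}
W^0(K_1,K_2,K_3,K_4)=\bigoplus_J W^J(K_1,K_2)\otimes W^{\bar J}(K_3,K_4)\,,
\end{equation}
which follows from associativity of the truncated tensor product: decompose $V^{K_1}\otimes V^{K_2}=\bigoplus_J V^J\otimes W^J(K_1,K_2)$, substitute into the trivial channel of \eqref{eq:decompose_reps}, and use $\mathrm{Inv}_{\UQ}(V^J\otimes V^{K_3}\otimes V^{K_4})\cong W^{\bar J}(K_3,K_4)$. Write $V_J:=W^J(K_1,K_2)\otimes W^{\bar J}(K_3,K_4)$; the $V_J$ are mutually orthogonal, each of dimension $0$ or $1$ since the $\UQ$ fusion multiplicities are $0$ or $1$, and the nonzero ones are exactly those with $\max(|K_1-K_2|,|K_3-K_4|)\le J\le\min(u(K_1,K_2),u(K_3,K_4))$.

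Next I would feed Lemma \ref{lemma:first_pinching_theorem} into this picture. It gives
\begin{equation}
\mathcal{C}\big(\mathcal{V}(\ell),\mathfrak{M}^{K_1,K_2,K_3,K_4}_{0,4}\big)=\bigoplus_J\mathfrak{M}^{K_1,K_2,J}_{0,3}\otimes\mathfrak{M}^{\bar J,K_3,K_4}_{0,3}\,.
\end{equation}
Each three-punctured moduli algebra $\mathfrak{M}^{A,B,C}_{0,3}$ has as its unique irreducible $*$-representation the at most one-dimensional multiplicity space $W^C(A,B)$, so $\mathfrak{M}^{A,B,C}_{0,3}\cong\mathbb{C}$ for admissible $C$ and is $0$ otherwise — the quantum counterpart of the fact that the moduli space of flat connections on $\Sigma_{0,3}$ with fixed conjugacy classes is a point or empty. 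Hence the commutant is $\bigoplus_{J\text{ adm.}}\mathbb{C}$, realized on $W^0$ as $\bigoplus_J\mathbb{C}\,\mathrm{id}_{V_J}$, i.e.\ the scalars on each block; in particular it contains the orthogonal projectors onto the $V_J$, so each $V_J$ is $\mathcal{V}(\ell)$-invariant and $*$-invariant. Taking the commutant within $\End(W^0)$ once more, $\mathcal{V}(\ell)=\bigoplus_J\End(V_J)$. Therefore the restriction of $\mathcal{V}(\ell)$ to each $V_J$ is the full algebra $\End(V_J)$: the representation on $V_J$ is irreducible, it is a $*$-representation (restriction of the $*$-representation on $W^0$ to an orthogonal invariant subspace), it appears with multiplicity one, and it is the unique irreducible $*$-representation carried by $V_J$ — which is the assertion of the theorem.

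The hard part is not the dual-pair step, which is essentially formal once set up, but justifying the two inputs it treats as black boxes: (i) that $\mathfrak{M}^{K_1,K_2,K_3,K_4}_{0,4}$ really is a semisimple finite-dimensional $*$-algebra with $W^0$ as faithful irreducible module, so that the double commutant machinery applies verbatim — this relies on passing to the truncated Hopf algebra $\mathcal{U}^T_q(\mathfrak{su}(2))$ and on the representation-theoretic results of \cite{Alekseev:1995rn}; and (ii) the identification $\mathfrak{M}^{A,B,C}_{0,3}\cong\mathbb{C}$ or $0$, i.e.\ that the three-punctured-sphere moduli algebra has one-dimensional (or trivial) representation theory, which is also where the admissibility constraint on $J$ and the vanishing $N^{K_1K_2}_J=0$ in Lemma \ref{lemma:first_pinching_theorem} enter. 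With those in hand, the argument closes.
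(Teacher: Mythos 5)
Your proposal is correct, and it reaches the theorem by a route that is organized differently from the paper's, though both hinge on the same two inputs: the first pinching theorem (Lemma \ref{lemma:first_pinching_theorem}) and the Alekseev--Schomerus result that each moduli algebra has a unique, faithful, irreducible $*$-representation on its multiplicity space. The paper proceeds by explicitly establishing the isomorphism $\mathfrak{M}^{K_1,K_2,K_3,K_4}_{0,4}\cong \chi^{K_4}_{0}\prod^{3}_{\nu=1}\chi_\nu^{K_\nu}\cA_{0,3}$ via a dimension count of multiplicity spaces, then decomposing the unit $e^{K_1,K_2,K_3,K_4}_{0,4}=\sum_J e^{K_1,K_2,J}_{0,3}\otimes e^{\bar{J},K_3,K_4}_{0,3}$ using completeness of the characters $\sum_J\chi^J_\ell=\Id$, and finally invoking the centrality of $c^I_\ell$ in $\mathfrak{M}^{K_1,K_2,J}_{0,3}$ (the loop becomes contractible after the pants decomposition) to conclude that $\cV(\ell)$ acts on each block $W^J(K_1,K_2)\otimes W^{\bar J}(K_3,K_4)$. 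You instead identify $\mathfrak{M}^{K_1,K_2,K_3,K_4}_{0,4}$ with $\End(W^0)$ via faithfulness, compute the commutant of $\cV(\ell)$ from Lemma \ref{lemma:first_pinching_theorem} together with $\mathfrak{M}^{A,B,C}_{0,3}\cong\bC$ or $0$, and close with the double commutant theorem; this replaces the paper's unit-decomposition and centrality steps by one structural von Neumann-type argument, at the cost of leaning harder on the fact that $\cV(\ell)$ is a unital $*$-subalgebra (which does hold: $c^0_\ell$ acts as the identity and $(c^I_\ell)^*=c^{\bar I}_\ell$) and on the multiplicity-freeness of $\UQ$ fusion, which makes each $V_J$ at most one-dimensional and keeps $\cV(\ell)=\bigoplus_J\End(V_J)$ consistent with the commutativity of the fusion algebra. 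Both arguments treat the same representation-theoretic facts from \cite{Alekseev:1995rn} as black boxes, so yours is at the same level of rigor; its advantage is transparency of the dual-pair structure, while the paper's version makes explicit where the admissibility constraint and the character projectors enter.
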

\begin{proof}
 The crucial part of the proof is to show that $\mathfrak{M}^{K_1,K_2,K_3,K_4}_{0,4}$ is isomorphic to the algebra $\chi^{K_4}_{0}\prod^{3}_{\nu=1}\chi_\nu^{K_\nu}\cA_{0,3}$, \ie
\be
\mathfrak{M}^{K_1,K_2,K_3,K_4}_{0,4}\cong \chi^{K_4}_{0}\prod^{3}_{\nu=1}\chi_\nu^{K_\nu}\cA_{0,3}\,,
\label{eq:iso_pinching}
\ee
which we now establish. 
Firstly, we realize that $W^{K_4}(K_1,K_2,K_3)$ is isomorphic to $W^{0}(K_1,K_2,K_3,K_4)$ as they are both multiplicity spaces and have the same dimension $\sum_{J}N^{K_1K_2}_J N^{JK_3}_{K_4}$ with $N^{IJ}_K=1$ if $I,J,K$ are all physical representations and they satisfy the triangular inequality while $N^{IJ}_K=0$ otherwise. The former is the representation space of the algebra $\chi^{K_4}_{0}\prod^{3}_{\nu=1}\chi_\nu^{K_\nu}\cA_{0,3}$ while the latter is the representation space of $\mathfrak{M}^{K_1,K_2,K_3,K_4}_{0,4}$. 
The isomorphism of the two representation spaces then leads to the isomorphism of the two algebras because the representations of a moduli algebra are irreducible and faithful. Since the representation of moduli algebra is realized in only one space, the faithfulness and irreducibility of representation do guarantee the isomorphism of moduli algebra \cite{Alekseev:1995rn}. This establishes the validity of \eqref{eq:iso_pinching}.

The unit of the moduli algebra $\mathfrak{M}^{K_1,K_2,K_3,K_4}_{0,4}$ is defined as $\chi^{0}_{0}\prod^{4}_{\nu=1} \chi^{K_\nu}_{\nu}$ \cite{Alekseev:1994au}. Similarly, the unit of $\mathfrak{M}^{K_1,K_2,J}$ and $\mathfrak{M}^{\Bar{J},K_3,K_4}$ is expressed as $\chi^{J}_{0}\chi^{K_1}_1\chi^{K_2}_2$ and $\chi^{\Bar{J}}_{0}\chi^{K_3}_3\chi^{K_4}_4$ respectively. 
By the first pinching theorem, the decomposition of the commutant of the fusion algebra determines the following decomposition of the unit 
\be
e^{K_1,K_2,K_3,K_4}_{0,4}=\sum_J e^{K_1,K_2,J}_{0,3}\otimes e^{\bar{J},K_3,K_4}_{0,3}\,.
\label{eq:decompose}
\ee
 The decomposition is the consequence of the completeness of the characters $\sum_{J}\chi^J_{\ell}=\Id$. 
 {We now show the decomposition \eqref{eq:decompose}.} 
 {Denote the mapping $\phi:\cL_{0,3}\otimes\cL_{0,3}\to \cL_{0,4}$. Then the embedding $\chi^0_0\phi$ as defined in \cite{Alekseev:1995rn} maps every matrix algebra $\mathfrak{M}^{K_1,K_2,J}\otimes\mathfrak{M}^{\Bar{J},K_3,K_4}$ into  $\mathfrak{M}^{K_1,K_2,K_3,K_4}_{0,4}$.} For the {\it l.h.s.} of the decomposition \eqref{eq:decompose}, the unit is $\chi^{0}_{0}\prod^{4}_{\nu=1} \chi^{K_\nu}_{\nu}$, for the {\it r.h.s.} of the decomposition, the embedded unit becomes
\be
\begin{split}
&\sum_{J}\chi^0_0\chi^{J}_{\ell}\chi^{K_1}_1\chi^{K_2}_2 \chi^{\Bar{J}}_{-\ell}\chi^{K_3}_3\chi^{K_4}_4
=\sum_{J}\chi^0_0\chi^{J}_{\ell}\chi^{K_1}_1\chi^{K_2}_2 \chi^{J}_{\ell}\chi^{K_3}_3\chi^{K_4}_4
=\sum_{J}\chi^{J}_{\ell}\chi^{J}_{\ell}\chi^0_0\chi^{K_1}_1\chi^{K_2}_2 \chi^{K_3}_3\chi^{K_4}_4\\
&=\sum_{J}\chi^{J}_{\ell}\chi^0_0\chi^{K_1}_1\chi^{K_2}_2 \chi^{K_3}_3\chi^{K_4}_4
=\chi^0_0\chi^{K_1}_1\chi^{K_2}_2 \chi^{K_3}_3\chi^{K_4}_4\;,
\end{split}
\label{eq:decompose_proof}
\ee
where we have used the completeness property of quantum characters $\sum_J \chi^J_{\ell}=\Id$. This validates \eqref{eq:decompose}.

The element $\chi^J_{\ell}=\mathcal{N}d_JS_{JL}c^{\bar{L}}_{\ell}$ is constructed from $c^I_{\ell}$.
Based on the representation theory of the moduli algebra constructed in \cite{Alekseev:1995rn}, there is a unique irreducible $*$-representation of $\mathfrak{M}^{K_1,K_2,K_3,K_4}_{0,4}$ on the space $W^0(K_1,K_2,K_3,K_4)$, which is the space projected by the unit of $\mathfrak{M}^{K_1,K_2,K_3,K_4}_{0,4}$. 
By the isomorphism \eqref{eq:iso_pinching}, there is also a unique irreducible $*$-representation of the commutant on the space projected by $\chi^{J}_{0}\chi^{K_1}_1\chi^{K_2}_2 \otimes \chi^{\Bar{J}}_{0}\chi^{K_3}_3\chi^{K_4}_4$ for each $J$. This irreducible representation of the commutant is realized on the multiplicity space $W^{J}(K_1,K_2)\otimes W^{\Bar{J}}(K_3,K_4)$ for every $J$. The element $c^I$ is a central element in $\mathfrak{M}^{K_1,K_2,J}$ if it is constructed from the contractible loop, as discussed in \cite{Alekseev:1994au}. In our case, the element $c^I_{\ell}$ is constructed from the contractible loop that encloses the first and second punctures after pants decomposition hence it is a central element in $\mathfrak{M}^{K_1,K_2,J}$. This means $c^I_{\ell}$ commutes with $\sum_J \mathfrak{M}^{K_1,K_2,J}_{0,3}\otimes \mathfrak{M}^{\bar{J},K_3,K_4}_{0,3}$. Therefore, the representation of $c^I_{\ell}$ should also be realized in the multiplicity space $W^{J}(K_1,K_2)\otimes W^{\Bar{J}}(K_3,K_4)$ for every $J$.
\end{proof}

The representation of $\{c_\ell^I\}_I$, as it is realized in the space $\bigoplus_{J}W^{J}(K_1,K_2)\otimes W^{\Bar{J}}(K_3,K_4)$, can be expressed in terms of a basis of the space. We define the basis in the space $W^{J}(K_1,K_2)\otimes W^{\Bar{J}}(K_3,K_4)$ as 
\be
{\mathfrak B}^J\equiv\sum_m e_m^{J}\otimes e_m^{\Bar{J}}:= \sum_{m}(-1)^{J-m}q^{m}e_{m}^{J}\otimes e_{-m}^{J}\,,\quad \max\{|K_1-K_2|,|K_3-K_4|\}\leq J\leq \min\{u(K_1,K_2),u(K_3,K_4)\} \;,
\label{eq:basis_fusion}
\ee
where $e^J_m$ is the orthonormal basis of $V^J$ and $e^{\bar{J}}_m\equiv (-1)^{J-m}q^{m}e_{-m}^{J}$ is the orthonormal basis of $V^{\bar{J}}$\footnote{The definition of $e^{\bar{J}}_m$ is motivated by the scalar product between two orthonormal bases of $V^J$, which is in terms of the $q$-deformed Clebsh-Gordon coefficient $\mat{ccc}{J& K & 0 \\m & n & 0}_q:=\delta_{KJ} \delta_{m,-n} \frac{(-1)^{J-m}q^m}{\sqrt{[2K+1]_q}}$ and maps two vectors to a scalar. 
More precisely, the scalar product of $e^J_m\in V^J$ and $e^K_n\in V^K$ is $\la e^K_n, e^J_m\ra :=  \sqrt{[2J+1]_q} \mat{ccc}{J& K & 0 \\m & n & 0}_q \, e^J_me^K_n = (-1)^{J-m} q^m \delta_{KJ}\delta_{n,-m}$. This motivates us to define $e^{\bar{J}}_m=(-1)^{J-m} q^m e^{J}_{-m}$ so that the scalar product can be expressed as the ``direct product'' of $e^{\bar{K}}_n$ and $e^J_m$. The same way to define a basis in the dual representation was used in \cite{Dupuis:2013lka}.}.

\begin{lemma}
The basis ${\mathfrak B}^J$ of $W^{J}(K_1,K_2)\otimes W^{\Bar{J}}(K_3,K_4)$ as defined in \eqref{eq:basis_fusion} is an invariant state, i.e.
\be
\rho^J\boxtimes \rho^J\lb \xi\rb  {\mathfrak B}^J =  \epsilon(\xi) {\mathfrak B}^J\,,\quad
\forall\,\xi\in\UQ\,,
\label{eq:invariant_basis}
\ee
where $\epsilon$ is the counit of $\UQ$, which is also the trivial representation of $\UQ$, and $\rho^J\boxtimes \rho^J\lb \xi\rb:=\rho^J\otimes \rho^J\lb \Delta(\xi)\rb$.  
\end{lemma}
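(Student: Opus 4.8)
To prove the Lemma, the plan is to reduce the invariance statement to a check on the algebra generators of $\UQ$ and then verify it by a short computation on the weight basis $\{e^J_m\}$. Observe that the set of elements $\xi\in\UQ$ satisfying $\rho^J\boxtimes\rho^J(\xi){\mathfrak B}^J=\epsilon(\xi){\mathfrak B}^J$ is a linear subspace containing the unit, and it is closed under multiplication: if $\xi$ and $\eta$ both obey the identity then, since $\rho^J\boxtimes\rho^J=(\rho^J\otimes\rho^J)\circ\Delta$ and $\epsilon$ are algebra homomorphisms, $\rho^J\boxtimes\rho^J(\xi\eta){\mathfrak B}^J=\rho^J\boxtimes\rho^J(\xi)\big(\epsilon(\eta){\mathfrak B}^J\big)=\epsilon(\xi)\epsilon(\eta){\mathfrak B}^J=\epsilon(\xi\eta){\mathfrak B}^J$. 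Hence it suffices to verify \eqref{eq:invariant_basis} for the Cartan generator $K$ and the raising and lowering generators $E,F$ of $\UQ$, on which $\epsilon(K)=1$ and $\epsilon(E)=\epsilon(F)=0$.

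For the Cartan part one uses $\Delta(K)=K\otimes K$ and $\rho^J(K)e^J_m=q^{m}e^J_m$ (in the paper's convention), so that $(\rho^J\otimes\rho^J)(\Delta K)\big(e^J_m\otimes e^J_{-m}\big)=q^{m}q^{-m}\,e^J_m\otimes e^J_{-m}=e^J_m\otimes e^J_{-m}$; summing against the coefficients $(-1)^{J-m}q^{m}$ gives $\rho^J\boxtimes\rho^J(K){\mathfrak B}^J={\mathfrak B}^J=\epsilon(K){\mathfrak B}^J$. For $E$ one inserts the coproduct (of the schematic form $\Delta(E)=E\otimes K+K^{-1}\otimes E$) together with $\rho^J(E)e^J_m=\sqrt{[J-m]_q[J+m+1]_q}\,e^J_{m+1}$. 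Acting on ${\mathfrak B}^J=\sum_m(-1)^{J-m}q^{m}\,e^J_m\otimes e^J_{-m}$ produces two families of vectors, one proportional to $e^J_{m+1}\otimes e^J_{-m}$ from the first leg and one proportional to $e^J_m\otimes e^J_{-m+1}$ from the second leg. The $q$-weights picked up from the $K^{\pm1}$ legs cancel the explicit factor $q^{m}$ in the coefficient, so after relabelling $m\mapsto m-1$ in the first family both families land on $e^J_m\otimes e^J_{-m+1}$ carrying the same $q$-number $\sqrt{[J-m+1]_q[J+m]_q}$ but with relative sign $(-1)^{J-m+1}:(-1)^{J-m}=-1$; hence they cancel termwise and $\rho^J\boxtimes\rho^J(E){\mathfrak B}^J=0=\epsilon(E){\mathfrak B}^J$. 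The computation for $F$ is identical after $m\mapsto-m$ (or follows from the $*$-structure), which completes the reduction and hence the proof.

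I expect the only real difficulty to be bookkeeping rather than anything conceptual: one must commit to the paper's precise conventions — the form of $\Delta$ on $E$ and $F$, the normalization $\sqrt{[J\mp m]_q[J\pm m+1]_q}$ of the matrix elements and the branch of the square root, and whether the Cartan acts by $q^{m}$, $q^{2m}$ or $q^{m/2}$ — and check that the phase-and-weight factor $(-1)^{J-m}q^{m}$ built into ${\mathfrak B}^J$ is exactly the one for which the $E$- and $F$-contributions cancel; with a different convention the required weight would change accordingly. It is worth recording the conceptual reason behind the cancellation: ${\mathfrak B}^J$ is, up to a scalar, the image of the generator of the trivial representation under the $q$-deformed coevaluation (Clebsch--Gordan) intertwiner $\bC\to V^J\otimes V^{\bar J}$, whose components are precisely the symbols $\mat{ccc}{J&J&0\\m&n&0}_q$ appearing in the footnote to \eqref{eq:basis_fusion}; since Clebsch--Gordan maps are intertwiners by construction, ${\mathfrak B}^J$ is automatically $\UQ$-invariant, and for $q$ a root of unity one reads this inside the truncated algebra $\mathcal{U}^T_q(\mathfrak{su}(2))$, which is legitimate because $J$ is a physical representation. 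Equivalently, \eqref{eq:invariant_basis} is the statement that the $q$-metric $\eta_{mn}=(-1)^{J-m}q^{m}\delta_{m,-n}$ on $V^J$ is $\UQ$-invariant.
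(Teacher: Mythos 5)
Your proposal is correct and follows essentially the same route as the paper: reduce \eqref{eq:invariant_basis} to the generators (Cartan plus raising/lowering, with $\epsilon=1$ and $0$ respectively), observe that the weight factors from the $q^{\pm H/2}$ legs cancel the explicit $q^{m}$ in the coefficient of ${\mathfrak B}^J$, and obtain termwise cancellation of the two families after relabelling the summation index, with the $Y$ case handled symmetrically. Your additional remarks (closure of the invariant set under multiplication, and the interpretation of ${\mathfrak B}^J$ as the $q$-coevaluation/invariant $q$-metric) are sound but do not change the argument.
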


\begin{proof}
To prove \eqref{eq:invariant_basis} for all $\xi\in\UQ$, it is enough to prove its validity when $\xi$ is any generator of $\UQ$, i.e. for any $\xi\in\{q^{\f{H}{2}}, X,Y\}$. We first note that 
\be
\epsilon(q^{\f{H}{2}})=1\,,\quad 
\epsilon(X)=0=\epsilon(Y)\,,
\ee
and that the actions of the generators on the basis read
\be
\rho^J(q^{\frac{H}{2}})e^J_{m}=q^{m}\,e^J_{m}\,,\quad
\rho^{J}(X)e_{m}^{J}=\sqrt{[J-m]_{q}[J+m+1]_{q}}\,e_{m+1}^{J}\,,\quad
\rho^{J}(Y)e_{m}^{J}=\sqrt{[J+m]_{q}[J-m+1]_{q}}\,e_{m-1}^{J} \,.
\label{eq:rho_generators}
\ee

We first consider $\xi=q^{\frac{H}{2}}$. Then 
\be
\begin{split}
\rho^J\boxtimes \rho^J\lb q^{\f{H}{2}} \rb{\mathfrak B}^J &  =\sum_{m}(-1)^{J-m}q^{m}\rho^J( q^{\frac{H}{2}}) e_{m}^{J}\otimes \rho^J( q^{\frac{H}{2}})e_{-m}^{J}\\
&=\sum_{m}q^{m}q^{-m}(-1)^{J-m}q^{m}e_{m}^{J}\otimes e_{-m}^{J}
  \equiv\sum_{m}(-1)^{J-m}q^{m}e_{m}^{J}\otimes e_{-m}^{J} \,, 
\end{split}
\ee   
which satisfies \eqref{eq:invariant_basis}. 
We next consider $\xi=X$:
\be
\begin{split}
\rho^J\boxtimes \rho^J\lb X \rb{\mathfrak B}^J 
  =&\sum_{m}(-1)^{J-m}q^{m}\lb \rho^J(X)\otimes \rho^J( q^{\frac{H}{2}})+\rho^J(q^{-\frac{H}{2}})\otimes\rho^J( X)\rb (e_{m}^{J}\otimes e_{-m}^{J})\\
  =&\sum_{m=-J}^{J-1}\sqrt{[J-m]_q[J+m+1]_q}q^{-m}(-1)^{J-m}q^{m}e_{m+1}^{J}\otimes e_{-m}^{J}\\
  &+\sum_{m=-J+1}^Jq^{-m}\sqrt{[J+m]_q[J-m+1]_q}(-1)^{J-m}q^{m}e_{m}^{J}\otimes e_{-m+1}^{J}\,.
  \end{split}
  \ee

Replacing the second summation for $m$ in the last expression to $m':=m+1$, the two terms in the last expression cancel. Then $\rho^J\boxtimes \rho^J\lb X \rb  {\mathfrak B}^J =0\equiv\epsilon(X) {\mathfrak B}^J$.

The same analysis for the action of $\Delta(Y)\equiv Y\otimes q^{\frac{H}{2}}+q^{-\frac{H}{2}}\otimes Y$ on ${\mathfrak B}$ gives that $\rho^J\boxtimes \rho^J\lb Y \rb  {\mathfrak B}^J =0\equiv\epsilon(Y) {\mathfrak B}^J$. Therefore, \eqref{eq:invariant_basis} is valid for all the generators of $\UQ$ hence any $\xi\in\UQ$.  
\end{proof}
 \begin{theorem}
The representation of $c_\ell^I$ in the subspace $W^J(K_1,K_2)\otimes W^{\Bar{J}}(K_3,K_4)$ of $V^{K_1}\otimes V^{K_2}\otimes V^{K_3}\otimes V^{K_4}$ is expressed as
\be
D^{K_1,K_2,K_3,K_4}(c^I_\ell)|_{W^J(K_1,K_2)\otimes W^{\Bar{J}}(K_3,K_4)}= \frac{[(2I+1)(2J+1)]_q}{d_J} \Id_{W^J(K_1,K_2)}\otimes \Id_{W^{\Bar{J}}(K_3,K_4)}\,.
\ee
Here the range of $J$ is $\max \lb |K_1-K_2|,|K_3,K_4| \rb\leq J\leq \min\lb u(K_1,K_2),u(K_3,K_4) \rb$. 
\label{Thm:eigenvalue_c_l}
\end{theorem}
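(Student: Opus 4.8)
The plan is to reduce the statement to a single computation inside the modular tensor category, using Theorem~\ref{theorem:rep_fusion} for the structural input. By that theorem $\cV(\ell)$ acts irreducibly on $W^J(K_1,K_2)\otimes W^{\bar J}(K_3,K_4)$, and $c^I_\ell$ is a central element of $\mathfrak{M}^{K_1,K_2,J}_{0,3}$ whose generators $\bM^I_{\ell_1},\bM^I_{\ell_2}$ act only on the $V^{K_1}\otimes V^{K_2}$ factors; hence $D^{K_1,K_2,K_3,K_4}(c^I_\ell)$ restricts on this block to a scalar $\mu^I_J$ times $\Id_{W^J(K_1,K_2)}\otimes\Id_{W^{\bar J}(K_3,K_4)}$. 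So the whole content of the statement is the value $\mu^I_J$, and it suffices to evaluate $c^I_\ell$ on one convenient vector of the block: I would take $(C[K_1K_2|J]^*\otimes C[K_3K_4|\bar J]^*)\,{\mathfrak B}^J$, built from the invariant vector ${\mathfrak B}^J$ whose $\UQ$-invariance was just established together with the (for admissible $J$, essentially unique) Clebsch--Gordan maps $C[K_1K_2|J]^*\in W^J(K_1,K_2)$ and $C[K_3K_4|\bar J]^*\in W^{\bar J}(K_3,K_4)$.

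The key observation is that $\ell=\ell_2\ell_1$ encircles exactly the first two punctures, so on the block labelled by the fusion channel $J\subset V^{K_1}\otimes V^{K_2}$ the monodromy $\bM^I_\ell=\kappa_I\bM^I_{\ell_2}\bM^I_{\ell_1}$ acts on the channel space $V^J$ precisely as the single-puncture monodromy of a puncture coloured by $J$. Concretely, pushing $\bM^I_{\ell_2}\bM^I_{\ell_1}$ through the Clebsch--Gordan maps via their intertwining property collapses the action on $V^{K_1}\otimes V^{K_2}$ to the double braiding $(\rho^I\otimes\rho^J)(R'R)$ on $V^J$. Taking $\kappa_I\tr_q^I$ and using the standard fact that the partial quantum trace of the double braiding on a simple object is a scalar,
\be
(\tr_q^I\otimes\Id_{V^J})\big((\rho^I\otimes\rho^J)(R'R)\big)=\f{s_{IJ}}{d_J}\,\Id_{V^J}\,,\qquad s_{IJ}:=(\tr_q^I\otimes\tr_q^J)(R'R)\,,
\ee
and keeping careful track of the ribbon/$\kappa$ normalisations built into \eqref{eq:M_ell} and into $c^I_\ell=\kappa_I\tr_q^I(\bM^I_\ell)$ so that they cancel, one gets $\mu^I_J=s_{IJ}/d_J$. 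It then remains to evaluate $s_{IJ}$ at $q^{k+2}=1$: from $S_{IJ}=\cN s_{IJ}$, $S_{0J}=\cN d_J$ and the $S$-matrix identities \eqref{eq:S_matrix} (or directly the $\SU(2)_k$ formula, of which the introduction's expression for $s_{\f12 J}$ is the $I=\f12$ instance), $s_{IJ}=S_{IJ}/\cN=[(2I+1)(2J+1)]_q$, giving $\mu^I_J=[(2I+1)(2J+1)]_q/d_J$ and hence the claim.

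I expect the main obstacle to be the step in the second paragraph: justifying rigorously that $\bM^I_\ell$ restricted to the channel $J$ equals the single-$J$-puncture monodromy, which requires handling the cilium-induced linear order used in \eqref{eq:M_ell} and checking that the $\kappa_I$ factors and ribbon twists leave no spurious framing factor; the evaluation of $s_{IJ}$ and the final assembly are routine. Two shortcuts are worth noting. First, the normalisation bookkeeping can be bypassed abstractly: $\chi^I_\ell=\cN d_I\sum_L S_{I\bar L}c^L_\ell$ is a complete orthogonal family of projectors which, by the proof of Theorem~\ref{theorem:rep_fusion} (see \eqref{eq:decompose}), projects onto the block $J=I$, so inverting the $S$-matrix through $S=S^{-1}$ gives $c^I_\ell=\sum_J\big(S_{IJ}/\cN d_J\big)\chi^J_\ell$ and hence $\mu^I_J=S_{IJ}/\cN d_J$ at once. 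Second, one can compute only $\mu^{1/2}_J=[2(2J+1)]_q/d_J$ in the fundamental representation, where $\bM^{1/2}_\ell$ is a $2\times2$ operator matrix, and bootstrap to general $I$ using the fusion rule \eqref{eq:c_fusion}, since $I\mapsto[(2I+1)(2J+1)]_q$ obeys the $\SU(2)$ fusion recursion.
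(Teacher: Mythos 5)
Your main argument is correct and, at its core, it is the same computation as the paper's: the paper also reduces $c^I_\ell$ on the channel $J$ to the partial quantum trace of the double braiding, $\tr_q^I[(\rho^I\otimes\rho^J)(R'R)]=\frac{[(2I+1)(2J+1)]_q}{d_J}\Id_{V^J}$. The differences are in packaging. Where you invoke Theorem \ref{theorem:rep_fusion} plus Clebsch--Gordan intertwining to argue the collapse onto the channel, the paper carries out the crux step explicitly: it inserts the concrete graph-algebra representation \eqref{eq:rep_M_explicit} of $\bM^I_{\ell_1},\bM^I_{\ell_2}$, multiplies out the $R$-matrices to land on $(\Id\otimes\Delta)(R'R)$, and then repeats the computation with the substitution rule \eqref{eq:substitution_rule} (Lemmas \ref{lemma:varphi_123} and \ref{lemma:delta_R'R}) so that the coassociator factors of the truncated weak Hopf algebra are seen to drop out at root of unity --- precisely the bookkeeping you flag as the expected obstacle, so be aware it is not just a framing-factor check but a quasi-Hopf one. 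For the numerical value, the paper uses $\rho^I\otimes\rho^J(R'R)=\sum_K\frac{v_Iv_J}{v_K}\Id_K$ and $\sum_K\frac{v_Iv_J}{v_K}\tr_q^K(\Id_K)=[(2I+1)(2J+1)]_q$ (Lemma E.5 of the companion paper) rather than your $s_{IJ}=S_{IJ}/\cN$ route; both are fine, and your bootstrap from $I=\tfrac12$ via the fusion rule \eqref{eq:c_fusion} is also legitimate. One caution: your first ``shortcut'' ($c^I_\ell=\sum_J(S_{IJ}/\cN d_J)\chi^J_\ell$ and reading off $\mu^I_J$ from which block $\chi^J_\ell$ projects onto) is close to circular, since identifying the image of $\chi^J_\ell$ with the fusion channel $J$ of $V^{K_1}\otimes V^{K_2}$ is essentially the content of this theorem; the decomposition \eqref{eq:decompose} by itself does not fix that labelling without an eigenvalue computation of the kind you perform in your main argument.
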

\begin{proof}
 We first derive the representation $D^{K_1, K_2, K_3, K_4}$ of $c^I_\ell$ in the larger space $V^{K_1}\otimes V^{K_2}\otimes V^{K_3}\otimes V^{K_4}$. 
 We make use of the representation of quantum monodromies around single punctures \cite{Alekseev:1995rn} (see also \cite{Han:2023wiu}):
 \be
D^{K_1,K_2,K_3,K_4}(\bM_{\nu}^{I})=\kappa_{I}^{-1} \lb R'_{12}R'_{13}\cdots R'_{1\nu} R'_{1,\nu+1}R_{1,\nu+1}R^{'\,-1}_{1\nu}\cdots R^{'\,-1}_{13}R^{'\,-1}_{12}\rb^{IK_1\cdots K_\nu}\otimes e^{K_{\nu+1}}\otimes\cdots\otimes e^{K_{4}}\,.
\label{eq:rep_M_explicit}
\ee
Then a direct calculation gives
 \be
 \begin{split}
  D^{K_1, K_2, K_3, K_4}(c^I_\ell )&=D^{K_1, K_2, K_3, K_4}\lb \kappa_I \tr_q^I\lb \kappa_I M_{\ell_2}^IM^I_{\ell_1}\rb\rb
  = \kappa_I \tr_q^I\lb \kappa_I D^{K_1, K_2, K_3, K_4}\lb M_{\ell_2}^IM^I_{\ell_1}\rb\rb\\
  &= \kappa_I^2 \tr_q^I\lb  D^{K_1, K_2, K_3, K_4}\lb M_{\ell_2}^I\rb D^{K_1, K_2, K_3, K_4}\lb M^I_{\ell_1}\rb\rb\\
  &=\tr_q^I\left[\lb\lb R_{12}'R'_{13}R_{13}R_{12}'^{-1}\rb^{IK_1K_2}\otimes e^{K_3}\otimes e^{K_4}\rb\lb \lb R_{12}'R_{12}\rb^{IK_1K_2}\otimes e^{K_3}\otimes e^{K_4}\rb\right]\\
  &=\tr_q^I\left[\lb R_{12}'R'_{13}R_{13}R_{12}'^{-1}R_{12}'R_{12}\rb^{IK_1K_2}\otimes e^{K_3}\otimes e^{K_4}\right]\\
 &=\tr_q^I\left[\rho^I\otimes\rho^{K_1}\otimes\rho^{K_2}\lb\Id\otimes \Delta \rb\lb R'R \rb \right]\otimes e^{K_3}\otimes e^{K_4}\\
 &\equiv \tr_q^I\left[\rho^I\otimes\lb \rho^{K_1}\boxtimes\rho^{K_2}\rb\lb R'R \rb \right]\otimes \lb\rho^{K_3}\boxtimes \rho^{K_3}\rb(e)\,.
 \end{split}
 \ee
{This expression is written in a non-truncated, or unrealistic, version. As we are now working on $\UQ$ with $q$ a root-of-unity, we need to use the substitution rule $R\rightarrow \cR$ and $R'\rightarrow \cR'$ (see \eqref{eq:substitution_rule}) from the beginning of the derivation and the result is given \cite{Alekseev:1995rn}:
\be
 D^{K_1, K_2, K_3, K_4}(c^I_\ell )=\tr_q^I\left[\rho^I\otimes\lb \rho^{K_1}\boxtimes\rho^{K_2}\rb\lb R'R \rb \right]\otimes \lb\rho^{K_3}\boxtimes \rho^{K_3}\rb(e)\,,
\ee
where $R',R$ satisfy quasi-Yang Baxter equation instead of Yang-Baxter equation as for the non-truncated case. 
The calculation with substitution rule \eqref{eq:substitution_rule} gives the same result
\be
 \begin{split}
  &\phantom{ = }  D^{K_1, K_2, K_3, K_4}(c^I_\ell )\\
  &=D^{K_1, K_2, K_3, K_4}\lb \kappa_I \tr_q^I\lb \kappa_I M_{\ell_2}^IM^I_{\ell_1}\rb\rb
  = \kappa_I \tr_q^I\lb \kappa_I D^{K_1, K_2, K_3, K_4}\lb M_{\ell_2}^IM^I_{\ell_1}\rb\rb\\
  &= \kappa_I^2 \tr_q^I\lb  D^{K_1, K_2, K_3, K_4}\lb M_{\ell_2}^I\rb D^{K_1, K_2, K_3, K_4}\lb M^I_{\ell_1}\rb\rb\\
  &=\tr_q^I\left[\lb\lb \varphi_{123}R_{12}'\varphi^{-1}_{213}R'_{13}R_{13}\varphi_{213}R_{12}'^{-1}\varphi^{-1}\rb^{IK_1K_2}\otimes e^{K_3}\otimes e^{K_4}\rb\lb \lb \varphi_{123}R_{12}'\varphi^{-1}_{213}\varphi_{213}R_{12}\varphi^{-1}\rb^{IK_1K_2}\otimes e^{K_3}\otimes e^{K_4}\rb\right]\\
  &=\tr_q^I\left[\lb \varphi_{123}R_{12}'\varphi^{-1}_{213}R'_{13}R_{13}\varphi_{213}R_{12}'^{-1}\varphi^{-1}\varphi_{123}R_{12}'\varphi^{-1}_{213}\varphi_{213}R_{12}\varphi^{-1}\rb^{IK_1K_2}\otimes e^{K_3}\otimes e^{K_4}\right]\\
  &=\tr_q^I\left[\lb \varphi_{123}R_{12}'\varphi^{-1}_{213}R'_{13}R_{13}\varphi_{213}R_{12}\varphi^{-1}\rb^{IK_1K_2}\otimes e^{K_3}\otimes e^{K_4}\right]\\
 &=\tr_q^I\left[\rho^I\otimes\rho^{K_1}\otimes\rho^{K_2}\lb\Id\otimes \Delta \rb\lb R'R \rb \right]\otimes e^{K_3}\otimes e^{K_4}\\
 &\equiv \tr_q^I\left[\rho^I\otimes\lb \rho^{K_1}\boxtimes\rho^{K_2}\rb\lb R'R \rb \right]\otimes e^{K_3}\otimes e^{K_4}\,.
 \end{split}
 \label{eq:rep_c_l}
 \ee
 The sixth equality is obtained by using the quasi-inverse property, $\varphi\varphi^{-1}\varphi=\varphi$\footnotemark{}
 \footnotetext{{For the truncated algebra $\TUQ$, the underlying structure is a weak Hopf algebra, in which the coassociator $\varphi$ is no longer invertible, i.e., $\varphi\varphi^{-1} \neq e \otimes e\otimes e$, but is quasi-invertible.}}.}
 The detailed calculation of the fifth and sixth lines can be found in  Lemma \ref{lemma:delta_R'R} and Lemma \ref{lemma:varphi_123}.
 
The representation is now for the truncated algebra $\TUQ$ and the semi-simplicity is admitted:
\be
\rho^{K_1}\boxtimes \rho^{K_2} = \sum_{J=|K_1-K_2|}^{u(K_1,K_2)} \rho^J\,,\quad
\rho^{K_3}\boxtimes \rho^{K_4} = \sum_{J=|K_3-K_4|}^{u(K_3,K_3)} \rho^J\cong \sum_{J=|K_3-K_4|}^{u(K_3,K_3)} \rho^{\bar{J}}\,.
\ee
We also use the following results proven in \cite{Han:2023wiu} (
Lemma E.5 therein):
\begin{align}
\rho^I\otimes \rho^J (R'R) &= \sum^{u(I,J)}_{K=|I-J|} \frac{v_Iv_J}{v_K} \Id_K \,,\quad \\ 
 \sum_{K=|I-J|}^{u(I,J)} \frac{v_Iv_J}{v_K}\tr_q^K(\Id_K)&=[(2I+1)(2J+1)]_q\,.
\end{align}
Combining these facts, it is easy to see that
\be\begin{split}
D^{K_1,K_2,K_3,K_4}(c^I_\ell)|_{W^J(K_1,K_2)\otimes W^{\Bar{J}}(K_3,K_4)}
&=\tr_q^I \left[\lb \rho^I\otimes \rho^J\rb \lb R'R\rb\right] \otimes \Id_{W^{\bar{J}}(K_3,K_4)}\\
&=\frac{[(2I+1)(2J+1)]_q}{d_J} \Id_{W^J(K_1,K_2)}\otimes \Id_{W^{\Bar{J}}(K_3,K_4)}\,.
\end{split}\ee
 \end{proof}
 The Hilbert space $\mathcal{H}$ of moduli algebra on the four-punctured sphere is the multiplicity space $\mathcal{H}\equiv W^0(K_1,K_2,K_3,K_4)$ \cite{Alekseev:1995rn}. When we consider the cycle $\ell=\ell_2\ell_1 $ and the fusion algebra constructed from $\ell$, according to Lemma \ref{lemma:first_pinching_theorem}, the Hilbert space has the following decomposition $\mathcal{H}=\bigoplus_J W^J(K_1,K_2)\otimes W^{\Bar{J}}(K_3,K_4)$, where $J$ runs through all the representations satisfying the triangle inequality, $\max\lb|K_1-K_2|,|K_3-K_4|\rb\leq J \leq \min\lb u(K_1,K_2),u(K_3,K_4)\rb$ with spacing 1. The eigenvalue of element $c^I_\ell$ is calculated in the theorem \ref{Thm:eigenvalue_c_l}.

{If the label $I$ is taken to be the fundamental representation and $q$ being a root of unity, $q=e^{\frac{2\pi i}{k+2}}$, where $k$ is an integer, the case is further simplified. The eigenvalue becomes
\be
\frac{s_{\f12 J}}{d_J}=\frac{[2(2J+1)]_q}{[2J+1]_q}=\frac{\sin{\frac{2\pi}{k+2}(2J+1)}}{\sin{\frac{\pi}{k+2}(2J+1)}}=2\cos{(\frac{\pi}{k+2}(2J+1))}=(e^{\frac{\pi i}{k+2}(2J+1)}+e^{-\frac{\pi i}{k+2}(2J+1)})\;.
\ee
}

\section{Quantization of the length and twist coordinates}
\label{sec:quantize_coordinates}

 We define operators $\Tilde{\bold{x}}$ and $\Tilde{\bold{y}}$ such that their actions on the basis of the Hilbert space $\cH$ are as follows. 
\be
\Tilde{\bold{x}}(\sum_m e^{J}_{m}\otimes e^{\Bar{J}}_m)=e^{\frac{\pi i}{k+2}(2J+1)}(\sum_m e^{J}_{m}\otimes e^{\Bar{J}}_m)\quad,\quad \Tilde{\bold{y}}(\sum_m e^{J}_{m}\otimes e^{\Bar{J}}_m)=\sum_m e^{J+1}_{m}\otimes e^{\Bar{J}+1}_m\;.
\label{eq:true_x_y_action}
\ee
 From the definitions above, we observe that
 the operators $\Tilde{\bold{x}}$ and $\Tilde{\bold{y}}$ satisfy the following commutation relation:
 \be
 \Tilde{\bold{x}}\Tilde{\bold{y}}=q\Tilde{\bold{y}} \Tilde{\bold{x}}\;.
 \label{eq:Commutation_real_x_y}
 \ee
 Eqn.\eqref{eq:Commutation_real_x_y} is equivalent to the following expression:
 \be
 [\ln{\Tilde{\bold{x}}},\ln{\Tilde{\bold{y}}}]=i\frac{2\pi}{k+2}\;.
 \label{eq:commutation_real_lnx_lny}
 \ee 
 
 We now claim that the commutation relation \eqref{eq:Commutation_real_x_y} can be considered as a natural quantization of the length-twist coordinate.  
 To this end, we define an auxiliary Hilbert space $\cH_{\text{aux}}=\bigoplus_J W^J(K_1,K_2)\otimes W^{\Bar{J}}(K_3,K_4)$ where $J\in\N/2$ satisfies the triangular inequality as in $\cH$ but the spacing of neighbouring $J$'s is $1/2$ instead of 1. This means, the dimension of $\cH_{\text{aux}}$ is larger than that of $\cH$. 
 The use of $\mathcal{H}_{\text{aux}}$ will come in handy in the next section for the construction of the coherent state in $\mathcal{H}$. 
 We also define
 operators $\bold{x}$ and $\bold{y}$ that acts the basis of $\mathcal{H}_{\text{aux}}$ by
 \be
\bold{x}(\sum_m e^{J}_{m}\otimes e^{\Bar{J}}_m)=e^{\frac{\pi i}{k+2}(2J+1)}(\sum_m e^{J}_{m}\otimes e^{\Bar{J}}_m)\quad,\quad \bold{y}(\sum_m e^{J}_{m}\otimes e^{\Bar{J}}_m)=\sum_m e^{J+\frac{1}{2}}_{m}\otimes e^{\Bar{J}+\frac{1}{2}}_m\;.
\label{eq:x_y_action}
\ee
Operators $\bold{x}$ and $\bold{y}$ then satisfy the following commutation relation:
\be
\bold{x}\bold{y}=q^{\frac{1}{2}}\bold{yx}\;.
\label{eq:commutatioin_x_y}
\ee
{Comparison of \eqref{eq:Commutation_real_x_y} with \eqref{eq:x_y_action} intuitively motivates us to set $\Tilde{\bold{x}}=\bold{x}$ and $\Tilde{\bold{y}}=\bold{y}^2$. However, the actual relation, which will be clear in the next section, turns out to be $\Tilde{\bold{x}}=P\bold{x}P$ and $\Tilde{\bold{y}}=P\bold{y}^2P$, where $P$ is the projector that maps a state in $\mathcal{H}_{\text{aux}}$ into one in $\mathcal{H}$. The definition of the projector $P$ will be given in Proposition \ref{prop:projector_P} in the next section.}
  
{An immediate observation from \eqref{eq:commutatioin_x_y} is that it is equivalent to the expression:
\be
[\ln{\bold{x}},\ln{\bold{y}}]=i\hbar\;,
\label{eq:commutation_lnx_lny}
\ee
where $\hbar$ is identified with $\frac{\pi}{k+2}$ when $q=e^{i2\hbar}$ is regarded as a quantum deformation parameter.} Therefore, eqn.\eqref{eq:commutatioin_x_y} is natural quantization of the Poisson bracket of length-twist coordinates \eqref{eq:Poisson_length_twist}.

 By the comparison of 
 \eqref{eq:commutatioin_x_y} with
 \eqref{eq:Commutation_real_x_y}, we can identify $\ln{\bold{\Tilde{x}}}=\ln{\bold{x}}$ and $\ln{\bold{\Tilde{y}}}=\ln{\bold{y}^2}$. These identifications provide all the ingredients we need to show that \eqref{eq:Commutation_real_x_y} can be considered as the quantization of the length-twist coordinates. Define classical coordinates $\tilde{x}:=x$ and $\tilde{y}:=y^2$, then the semi-classical limit of \eqref{eq:commutation_real_lnx_lny} 
 recovers the Poisson bracket \eqref{eq:Poisson_length_twist} as
 \be
 \{\ln{\Tilde{x}},\ln{\Tilde{y}}\}=\{\ln{x},\ln{y^2}\}=2\;.
 \ee
 
Let us stress that the Hilbert space $\mathcal{H}$ only allows the existence of operators $\bold{\Tilde{x}}, \bold{\Tilde{y}}$. The reason for the introduction of $\mathcal{H}_{\text{aux}}$ is for our convenience in constructing the coherent state in $\mathcal{H}$, which we will illustrate in the next section.

{The strategy of coherent state's construction is that we first use operators $\bold{x}$ and $\bold{y}$ to construct the coherent state in $\mathcal{H}_{\text{aux}}$ then define a set of projectors that map the coherent state into $\mathcal{H}$.}

The $*$-operation of $\bold{x}$ and $\bold{y}$ can be induced from the $*$-operation of $c^I_\ell $. The action of $\bold{x^*}$ and $\bold{y^*}$ on the basis are given as:
\be
\bold{x^*}(\sum_m e^{J}_{m}\otimes e^{\Bar{J}}_m)=e^{-\frac{\pi i}{k+2}(2J+1)}(\sum_m e^{J}_{m}\otimes e^{\Bar{J}}_m)\quad,\quad \bold{y^*}(\sum_m e^{J}_{m}\otimes e^{\Bar{J}}_m)=(\sum_m e^{J-\frac{1}{2}}_{m}\otimes e^{\Bar{J}-\frac{1}{2}}_m)\;.
\ee

The commutation relation between $\bold{x^*}$ and $\bold{y^*}$ is the same as that between $\bold{x}$ and $\bold{y}$:
\be
\bold{x^*}\bold{y^*}=q\bold{y^*}\bold{x^*}.
\label{eq:commutatioin_xstar_ystar}
\ee
To end this section, let us point out an observation: the eigenvalue of $\bold{x}$ has a periodicity condition. This implies that the quantum states in the auxiliary Hilbert space $\mathcal{H}_{\text{aux}}$ are those defined on the torus. This motivates our construction of coherent states in the next section.

\section{Coherent states in the auxiliary Hilbert space} 
\label{sec:coherent_state}

We list the action of operators $\bold{x}, \bold{y}$ and $\bold{x^*}, \bold{y^*}$ on states in the sub-space of each representation label $J$:
\be
\begin{split}
\bold{x}(\Psi_J)=e^{\frac{\pi i}{k+2}(2J+1)}\Psi_J\quad&,\quad
\bold{y}(\Psi_J)=\Psi_{J+\frac{1}{2}}\;,\\
\bold{x^*}(\Psi_J)=e^{-\frac{\pi i}{k+2}(2J+1)}\Psi_J\quad&,\quad
\bold{y^*}(\Psi_J)=\Psi_{J-\frac{1}{2}}\;,
\end{split}
\ee
  where $\Psi_{J}$ is any state in the sub-space $W^{J}(K_1,K_2)\otimes W^{\Bar{J}}(K_3,K_4)$ for each representation label $J$. It is easy to check that \eqref{eq:commutatioin_x_y} and \eqref{eq:commutatioin_xstar_ystar} are indeed realized. 

{ As mentioned in the end of the preceding section, the auxiliary space is the quantization of the torus as the phase space. Classically, a torus is viewed as a two-dimensional phase space with canonical coordinates $(q,p)\in[0,a)\times [0,b)$. Since they are on the torus, it is naturally equipped them with periodic conditions. In quantum theory, the periodic condition yields the quantization condition for any states on the torus: }

\be
ab=2\pi\hbar N\;,
\label{eq:quantization_condition_torus}
\ee
where $N\in\mathbb{N}$.
For more details on quantum states on the torus, we refer to \cite{Gazeau:2009zz} as well as Appendix \ref{app:Quantum_State_on_the_torus}. 

{For the auxiliary Hilbert space, the irreducible representations $J\in\N/2$ fall within the range of $0$ and $(k+\frac{3}{2})$, as this range corresponds to the periodicity of the eigenvalue of the operator $\bold{x}$ as shown in \eqref{eq:x_y_action}.}

{Following \cite{Gazeau:2009zz}, we define a set of coherent states, denoted as $\psi_{(\Xt_0,\Yt_0)}$, as an $L^2(\R)$ function on a torus ${\mathbb T}_2=S^1\times S^1$ with parameters $(\Xt_0,\Yt_0)\in[0,2\pi)\times [0,2\pi)$ being the angle coordinates on ${\mathbb T}_2$. (The derivation of the coherent states can be found in Appendix \ref{app:Quantum_State_on_the_torus}.) When viewing ${\mathbb T}_2$ as a phase space, $(\Xt_0,\Yt_0)$ is a pair of canonical coordinates. $\psi_{(\Xt_0,\Yt_0)}(x)$ is defined as the superposition of basis $e_l(x)$ with coefficient $\xi_{(\Xt_0,\Yt_0)}(l)$:

\be
\psi_{(\Xt_0,\Yt_0)}(x):=\sum_{l=0}^{2k+3}
\xi_{(\Xt_0,\Yt_0)}(l) e_l(x)\,,
\label{eq:coherent_def}
\ee
where
\be
\begin{aligned}
\xi_{(\Xt_0,\Yt_0)}(l)&:=\left(\frac{1}{k+2}\right)^{\frac{1}{4}}e^{-\frac{i\Yt_0\Xt_0(k+2)}{2\pi}}\sum_{m'\in\mathbb{Z}}e^{-\frac{i(k+2)}{\pi}\Yt_0\left(l\frac{\pi}{k+2}-2\pi m'-\Xt_0\right)}e^{-\frac{k+2}{2\pi}(l\frac{\pi}{k+2}-2\pi m'-\Xt_0)^{2}}\,,\\
e_l(x)&:=\sum_{j\in\mathbb{Z}}\sqrt{\frac{\pi}{k+2}}\delta(x-2\pi\lb\f{l}{2(k+2)}-j\rb)\;.
\end{aligned}
\ee
{Here $\delta(x-2\pi\lb\f{l}{2(k+2)}-j\rb)$ is a Dirac distribution and the basis $e_{l}(x)$ is given by the Dirac combs and is an orthonormal basis in $\mathbb{C}$.}
Therefore, $\psi_{(\Xt_0,\Yt_0)}(x)$ takes non-trivial value only at discrete $x$'s. 
The resolution of identity for coherent states is \cite{Gazeau:2009zz}
\be
\frac{k+2}{2\pi^2}\int_{\mathbb{T}_2} \rd \Xt_0\rd \Yt_0 | \psi_{(\Xt_0,\Yt_0)}\rangle\langle  \psi_{(\Xt_0,\Yt_0)}| = \Id_{\mathcal{H}_{\text{aux}}}\;,
\label{eq:resolution_Haux}
\ee
where $| \psi_{(\Xt_0,\Yt_0)}\rangle$ is defined as in \eqref{eq:coherent_def} and $\langle  \psi_{(\Xt_0,\Yt_0)}|=\sum_{l=0}^{2k+3}
\bar{\xi}_{(\Xt_0,\Yt_0)}(l) e_l(x)$. 
By definition, the inner product of the coherent state is defined as 
\be
\langle\psi_{(\Xt_0,\Yt_0)}|\psi'_{(\Xt_0',\Yt_0')}\rangle=\sum^{2k+3}_{l,\tilde{l}=0} \bar{\xi}_{(\Xt_0,\Yt_0)}(l)\xi_{(\Xt_0',\Yt_0')}(\tilde{l})\,.
\label{eq:coherent_inner_product}
\ee

The final expression of the coherent state $\psi_{(\Xt_0,\Yt_0)}$ in the $\mathcal{H}_{\text{aux}} \simeq \mathbb{C}^{2(k+2)}$ is: 
\be
\sum^{2k+3}_{l=0} \xi_{(\Xt_0,\Yt_0)}(l)|l\rangle\;,
\ee
{where we have used $|l\rangle\equiv e_{\ell}(x)$ to denote the orthonormal basis in $\mathbb{C}^{2(k+2)}$.} This notation will be used in the next section. From our construction of coherent states on the torus, the $2(k+2)-$dimensional auxiliary Hilbert space $\mathcal{H}_{\text{aux}}$ is decomposed as follows: $\mathcal{H}_{\text{aux}}=\bigoplus^{2k+3}_{l=0} V^l$. 
\section{Projection into $\mathcal{H}$ and the expectation value}
\label{sec:projection_and_exp}

In the preceeding section, we have constructed the coherent state in the auxiliary Hilbert space $\mathcal{H}_{\text{aux}}$, which can be decomposed into subspace $V^l$, and where the eigenvector of ${\bf x}$ is associated to the eigenvalue $e^{\frac{i \pi}{k+2}(l+1)}$. However, the coherent state that corresponds to our quantum system are in the $\mathcal{H}$ rather than $\mathcal{H}_{\text{aux}}$. The physical Hilbert space $\mathcal{H}$ is decomposed into $\sum_{J} W^J\otimes W^{\Bar{J}}$, where $J$ can only fall within the range of $\max\lb|K_1-K_2|,|K_3-K_4|\rb\leq J \leq \min\lb u(K_1,K_2),u(K_3,K_4)\rb$ and the difference between closest $J$'s is $1$ instead of $\frac{1}{2}$.

Let us recall that the action of operators $\bold{x}$ and $\bold{y}$ on the sub-space $V^l$, where $l\in\N$ falls within the range of $0$ and $2k+3$, is defined as:
\be
\bold{x}(\Psi_l)=e^{\frac{\pi i}{k+2}(l+1)}\Psi_l\quad,\quad
\bold{y}(\Psi_l)=\Psi_{l+1}\,,\quad
\forall\,\Psi_l\in V^l\,.
\ee
The actions of $\bold{x}$ and $\bold{y}$ indeed admit the commutation relation \eqref{eq:commutatioin_x_y}. 
Denote $u(K_{\nu})\equiv\min\lb u(K_1,K_2),u(K_3,K_4)\rb$ and $m(K_{\nu})\equiv\max\lb |K_1-K_2|,|K_3-K_4|\rb$ . 
{The physical Hilbert space $\cH$ is the subspace of $\cH_{\rm aux}$ as each $J$ subspace of $\cH$ is identical to the subspace $V^{2J}$ of $\cH_{\rm aux}$ due to the fact that $\bigoplus^{u(K_{\nu})}_{J=m(K_{\nu})} V^J$ is isomorphic to $\bigoplus^{2u(K_{\nu})}_{2J=2m(K_{\nu})} V^{2J}$ and both have the same eigenvalue as $e^{\frac{\pi i}{k+2}(2J+1)}$.} We define a projection map $P_n:\cH_{\rm aux}\rightarrow \cH$ 
\be
P_n=\sum^{2u(K_{\nu})}_{2J=2m(K_{\nu})} |2J\rangle\langle2J|\,,
\ee
where $n$ denotes the dimension of $\mathcal{H}$, i.e. $n=2u(K_{\nu})-2m(K_{\nu})+1$ {and $|2J\rangle$ is an orthonormal basis in $\cH_{\rm aux}$.} 

\begin{prop}
  $P_n$ are the projectors that project the space $\mathcal{H}_{\text{aux}}$ onto $\mathcal{H}$, they satisfy the following relations:  
 \be
 P_n P_{n'}=P_{\min(n,n')}\quad,\quad (P_n)^*=P_n\,.
 \ee
\end{prop}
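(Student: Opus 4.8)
The plan is to reduce all three assertions to the orthonormality of the basis $\{|l\rangle\}_{l=0}^{2k+3}$ of $\mathcal{H}_{\text{aux}}$ together with the fact that the index sets defining the $P_n$ form a nested family. First I would fix the dictionary between the two spaces: by the identification noted just above the statement, the subspace $W^J(K_1,K_2)\otimes W^{\bar{J}}(K_3,K_4)$ of $\mathcal{H}$ is the subspace $V^{2J}\subset\mathcal{H}_{\text{aux}}$, and ``consecutive integer $J$'s'' on the $\mathcal{H}$ side correspond to indices differing by $2$ on the $\mathcal{H}_{\text{aux}}$ side. Accordingly I would make the family $\{P_n\}$ explicit as $P_n=\sum_{l\in S_n}|l\rangle\langle l|$, where $S_n=\{2m(K_\nu),\,2m(K_\nu)+2,\,\dots,\,2m(K_\nu)+2(n-1)\}$ is a set of $n$ indices; then $S_n\subseteq S_{n'}$ whenever $n\le n'$, and $S_n$ exhausts the full physical range $\{2m(K_\nu),\dots,2u(K_\nu)\}$ precisely when $n=2u(K_\nu)-2m(K_\nu)+1$, the dimension of $\mathcal{H}$.

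With this in hand the three statements become one-line computations. Self-adjointness $(P_n)^*=P_n$ follows because $(|l\rangle\langle l|)^*=|l\rangle\langle l|$ for a normalized vector and the $*$-operation is additive. For the product I would expand
\be
P_nP_{n'}=\sum_{l\in S_n}\sum_{l'\in S_{n'}}|l\rangle\langle l|l'\rangle\langle l'|=\sum_{l\in S_n\cap S_{n'}}|l\rangle\langle l|\,,
\ee
using $\langle l|l'\rangle=\delta_{ll'}$; since the $S_n$ are nested, $S_n\cap S_{n'}=S_{\min(n,n')}$, which gives $P_nP_{n'}=P_{\min(n,n')}$. In particular $P_n^2=P_n$, so each $P_n$ is an orthogonal projector; and taking $n$ equal to the dimension of $\mathcal{H}$ shows $P_n\mathcal{H}_{\text{aux}}=\mathcal{H}$, i.e.\ $P_n$ indeed projects $\mathcal{H}_{\text{aux}}$ onto $\mathcal{H}$ as claimed.

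I do not expect a genuine obstacle: the content is essentially bookkeeping. The one place that needs care is keeping the labelling of $P_n$ by the integer dimension $n$ consistent with the factor-of-two mismatch in spacing between the $J$-grading of $\mathcal{H}$ and that of $\mathcal{H}_{\text{aux}}$; once that translation is pinned down, nestedness of the index sets and orthonormality of $\{|l\rangle\}$ deliver all three relations immediately.
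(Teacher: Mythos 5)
Your proof is correct and follows essentially the same route as the paper: expand $P_nP_{n'}$ using orthonormality of the basis $\{|2J\rangle\}$ so that only the intersection of the two index sets survives, and note that self-adjointness is immediate term by term. Your explicit remark that the index sets are nested (so the intersection is $S_{\min(n,n')}$) is a useful clarification that the paper's one-line computation leaves implicit, but it is the same argument.
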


\begin{proof}
  \be
  \begin{aligned}
  P_nP_{n'}&=\sum_{2J}\sum_{2J'} |2J\rangle\langle 2J|2J'\rangle\langle 2J'|=\delta_{J,J'}\sum_{2J} |2J\rangle\langle 2J|= P_{\min(n,n')}\,,\\
  (P_n)^{*}&=(\sum_{2J} |2J\rangle\langle 2J|)^*=\sum_{2J} |2J\rangle\langle 2J|=P_n\,.
\end{aligned}
\ee 
\end{proof}
Given a coherent state $|\psi_{(\Xt_0,\Yt_0)}\rangle=\sum_{l=0}^{2k+3} \xi_{(\Xt_0,\Yt_0)}(l)|l\rangle\in\mathcal{H}_{\text{aux}}$, the projected coherent state $|\tpsi_{(\Xt_0,\Yt_0)}\rangle\equiv P_n |\psi_{(\Xt_0,\Yt_0)}\rangle$ is defined as 
\be
|\tpsi_{(\Xt_0,\Yt_0)}\rangle
= \sum_{2J=2m(K_\nu)}^{2u(K_\nu)} |2J\rangle\langle 2J| \psi_{(\Xt_0,\Yt_0)}\rangle
=\sum_{2J=2m(K_\nu)}^{2u(K_\nu)} \xi_{(\Xt_0,\Yt_0)}(2J)\,|2J\rangle\,. 
\ee

\begin{prop}
  Given elements $\bold{x}$ and $\bold{y}$ that form a Weyl algebra that quantizes a torus satisfying $ \bold{xy}=q^{\frac{1}{2}}\bold{yx}$, the projected elements $\bold{\Tilde{x}}:=P_n\bold{x}P_n$ and $\bold{\Tilde{y}}:=P_n\bold{y}^2P_n$ satisfy the same commutation relation of a Weyl algebra with $q^{\frac{1}{2}}\to q$. That is,  
    \be
  \bold{\Tilde{x}\Tilde{y}}=q\bold{\Tilde{y}\Tilde{x}}\,.
  \label{eq:XtYt_comm}
  \ee
  \label{prop:projector_P}
\end{prop}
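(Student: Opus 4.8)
The plan is to prove \eqref{eq:XtYt_comm} by reducing it to three elementary facts about the operators on $\cH_{\rm aux}$: the idempotency $P_n^2=P_n$, the commutation $[\bold{x},P_n]=0$, and the ``squared'' Weyl relation $\bold{x}\bold{y}^2=q\,\bold{y}^2\bold{x}$.

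First I would derive the squared relation from the hypothesis $\bold{x}\bold{y}=q^{\frac12}\bold{y}\bold{x}$ by iterating it once,
\begin{align*}
\bold{x}\bold{y}^2=(\bold{x}\bold{y})\,\bold{y}=q^{\frac12}\bold{y}\,(\bold{x}\bold{y})=q^{\frac12}\bold{y}\,q^{\frac12}\bold{y}\bold{x}=q\,\bold{y}^2\bold{x}\,,
\end{align*}
which is the algebraic mechanism that promotes the $q^{\frac12}$ of the auxiliary Weyl algebra to the $q$ in \eqref{eq:XtYt_comm}. Next I would observe that $\bold{x}$ is diagonal in the basis $\{|l\rangle\}_{l=0}^{2k+3}$ in which $P_n=\sum_{2J=2m(K_\nu)}^{2u(K_\nu)}|2J\rangle\langle 2J|$ projects onto a sum of eigenspaces of $\bold{x}$, so $[\bold{x},P_n]=0$ and hence $\bold{\Tilde{x}}=P_n\bold{x}P_n=\bold{x}P_n=P_n\bold{x}$.

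With these in hand the computation is immediate: using $P_n^2=P_n$ and $[\bold{x},P_n]=0$ to collapse the interior projectors,
\begin{align*}
\bold{\Tilde{x}}\,\bold{\Tilde{y}}&=P_n\bold{x}P_n\bold{y}^2P_n=\bold{x}P_n\bold{y}^2P_n\,,\\
\bold{\Tilde{y}}\,\bold{\Tilde{x}}&=P_n\bold{y}^2P_n\bold{x}P_n=P_n\bold{y}^2\bold{x}P_n=q^{-1}P_n\bold{x}\bold{y}^2P_n=q^{-1}\bold{x}P_n\bold{y}^2P_n\,,
\end{align*}
so that $\bold{\Tilde{x}}\bold{\Tilde{y}}=q\,\bold{\Tilde{y}}\bold{\Tilde{x}}$. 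Equivalently one can run the same check vector by vector on each $|2J\rangle$ with $2m(K_\nu)\le 2J\le 2u(K_\nu)$, treating the interior case $2J\le 2u(K_\nu)-2$ (both products give $q^{\pm1}e^{\frac{i\pi}{k+2}(2J+1)}|2J+2\rangle$) and the boundary case $2J=2u(K_\nu)$ (both products vanish) separately; the truncation built into $P_n$ is harmless precisely because it kills both sides simultaneously.

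The step that deserves comment — and the only one where anything conceptual is at stake — is the appearance of $\bold{y}^2$ rather than $\bold{y}$ in the definition of $\bold{\Tilde{y}}$. Since $\cH$ sits inside $\cH_{\rm aux}$ as the span of the even-labelled vectors $|2J\rangle$, the single shift $\bold{y}$ sends an even label to an odd one and is annihilated by the sandwich, $P_n\bold{y}P_n=0$; it is exactly the square $\bold{y}^2$ that preserves the parity of the sublattice supporting $\cH$, so that $P_n\bold{y}^2P_n$ is nonzero and the manipulation above goes through. Once this is recognized, no genuine obstacle remains: the rest is bookkeeping with idempotents and a diagonal operator.
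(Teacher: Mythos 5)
Your proof is correct, and it reaches the same conclusion by a slightly more structural route than the paper. The paper's own proof expands both $P_n\bold{x}P_n\bold{y}^2P_n$ and $P_n\bold{y}^2P_n\bold{x}P_n$ in the basis $\{|2J\rangle\}$ and compares the resulting matrix elements term by term. You instead isolate the three operator identities that drive that computation --- $P_n^2=P_n$, $[\bold{x},P_n]=0$ (because $\bold{x}$ is diagonal in the basis onto whose span $P_n$ projects), and $\bold{x}\bold{y}^2=q\,\bold{y}^2\bold{x}$ obtained by iterating the auxiliary Weyl relation --- and then the identity $\bold{\Tilde{x}}\bold{\Tilde{y}}=\bold{x}P_n\bold{y}^2P_n=q\,\bold{\Tilde{y}}\bold{\Tilde{x}}$ follows at the operator level with no basis bookkeeping. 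What your version buys is transparency about exactly which structural facts are used (in particular that the only nontrivial input beyond idempotency is the diagonality of $\bold{x}$ on the range of $P_n$), and it sidesteps the boundary/truncation discussion entirely, since the three identities hold globally on $\cH_{\text{aux}}$; your supplementary vector-by-vector check, including the observation that both products annihilate the top basis vector $|2u(K_\nu)\rangle$, is consistent with what the paper's explicit sums implicitly encode. Your closing remark that $P_n\bold{y}P_n=0$ while $P_n\bold{y}^2P_n\neq 0$ correctly identifies why the proposition is stated for $\bold{y}^2$ and why the deformation parameter is promoted from $q^{1/2}$ to $q$.
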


\begin{proof}
We prove it by direct calculation for both sides of \eqref{eq:XtYt_comm}. 
For the {\it l.h.s.}, we have
\be
\begin{aligned}
\xbt\ybt&:=	P_n\xb P_nP_n\yb^2P_n=P_n\xb P_n\yb^2P_n
=\sum_{J,J',J''}|2J\ra\la 2J| \xb|2J'\ra\la 2J'|\yb^2
|2J''\ra\la2J''|\\
&=e^{\f{\pi i}{k+2}\lb 2J'+1\rb}\sum_{J,J',J''}|2J\ra\la2J|2J'\ra\la2J'|2J''+2\ra\la2J''|
=q e^{\f{\pi i}{k+2}\lb 2J'+1\rb}\sum_{J,J',J''}\delta_{J,J'}\delta_{J''+1,J'}|2J''+2\ra\la2J''|\\
&e^{\f{\pi i}{k+2}\lb 2J''+3\rb}\sum_{J''}|2J''+2\ra\la2J''|
\equiv q e^{\f{\pi i}{k+2}\lb 2J''+1\rb}\sum_{J''}|2J''+2\ra\la2J''|\,,
\end{aligned}
  \ee
  where the summations are from $m(K_\nu)$ to $u(K_\nu)$.
For the {\it r.h.s.},  
 \be
\begin{aligned}
q\ybt\xbt&:=q P_n\yb^2P_nP_n\xb P_n
=q P_n\yb^2P_n\xb P_n
=q\sum_{J,J',J''}|2J\ra\la2J| \yb^2 |2J'\ra\la 2J'|\xb |2J''\ra\la2J''|\\
&=qe^{\f{\pi i}{k+2}\lb 2J''+1\rb}\sum_{J,J',J''}|2J\ra\la2J|2J'+2\ra\la 2J'|2J''\ra\la2J''|
=qe^{\f{\pi i}{k+2}\lb 2J''+1\rb}\sum_{J,J',J''}\delta_{J,J'+1}\delta_{J',J''}|2J''+2\ra\la2J''|\\
&=qe^{\f{\pi i}{k+2}\lb 2J''+1\rb}\sum_{J''}|2J''+2\ra\la2J''|\equiv l.h.s.\,.
\end{aligned} 
 \ee
  \end{proof}
This shows that the operators $\xbt$ and $\ybt$ introduced in \ref{sec:quantize_coordinates} can be defined by projection of operators $\xb$ and $\yb$, which is the reason why we use the same notation to denote these operators here. 

The resolution of identity \eqref{eq:resolution_Haux} for coherent states in $\cH_{\rm aux}$ 
induces the resolution of identity for coherent states in $\cH$, which reads
\be
\frac{k+2}{2\pi^2}\int_{\mathbb{T}^2} \rd \Xt_0\rd \Yt_0 | \tpsi_{(\Xt_0,\Yt_0)}\rangle\langle  \tpsi_{(\Xt_0,\Yt_0)}| = \Id_{\mathcal{H}}\,.
\ee 
The coherent state labels $(\Xt_0,\Yt_0)$ encode the classical length-twist coordinates, which can be seen by calculating the expectation values of the operators $\xbt$ and $\ybt$ in the coherent state representation as shown in the following proposition.

\begin{prop}
In the semi-classical limit, where $k$ and the four representations $K_1, K_2, K_3, K_4$ increase at the same rate (i.e., $k=\lambda k$, $K_{\nu}=\lambda K_{\nu}$, and $\lambda\to\infty$), the expectation values of $\xbt$ and $\ybt$ in the projected coherent state representation behave as 
\be
\langle\Tilde{\bold{x}}\rangle=e^{i\Xt_0}+O\lb e^{-\lambda}/\sqrt{\lambda}\rb\,,\quad
\langle\Tilde{\bold{y}}\rangle=e^{2i\Yt_0}+ O \lb e^{-\lambda}/\sqrt{\lambda}\rb \;,
\ee
given that the position coordinate $\Xt_0$ labeling the coherent state $\tpsi_{(\Xt_0,\Yt_0)}$ satisfies the triangle inequality: $2 \max\left(\frac{|K_1-K_2|}{k+2},\frac{|K_3-K_4|}{k+2}\right) \leq \frac{\Xt_0}{\pi} \leq 2 \min\left(\frac{u(K_1,K_2)}{k+2},\frac{u(K_3,K_4)}{k+2}\right)$ with $u(K_1,K_2)= \min(K_1+K_2,k-K_1-K_2)$. Otherwise, the expectation values of $\Tilde{\bold{x}}$ and $\Tilde{\bold{y}}$ exponential decay as 
\be
\langle\Tilde{\bold{x}}\rangle= O \lb e^{-\lambda}/\sqrt{\lambda}\rb \;,\quad
\langle\Tilde{\bold{y}}\rangle= O \lb e^{-\lambda}/\sqrt{\lambda}\rb \;.
\ee
\end{prop}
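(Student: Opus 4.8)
The plan is to reduce the two expectation values to explicit finite sums over the admissible labels and then carry out a Gaussian/theta-function asymptotic analysis in the semiclassical regime. First I would expand $\langle\xbt\rangle=\langle\tpsi_{(\Xt_0,\Yt_0)}|\xbt|\tpsi_{(\Xt_0,\Yt_0)}\rangle$ and $\langle\ybt\rangle=\langle\tpsi_{(\Xt_0,\Yt_0)}|\ybt|\tpsi_{(\Xt_0,\Yt_0)}\rangle$ using Proposition~\ref{prop:projector_P}, the explicit state $|\tpsi_{(\Xt_0,\Yt_0)}\rangle=\sum_{2J=2m(K_\nu)}^{2u(K_\nu)}\xi_{(\Xt_0,\Yt_0)}(2J)\,|2J\rangle$, and the actions $\xbt|2J\rangle=e^{\frac{\pi i}{k+2}(2J+1)}|2J\rangle$, $\ybt|2J\rangle=|2J+2\rangle$ (the top label annihilated by $P_n$), to get
\be
\langle\xbt\rangle=\sum_{2J=2m(K_\nu)}^{2u(K_\nu)}\big|\xi_{(\Xt_0,\Yt_0)}(2J)\big|^{2}\,e^{\frac{\pi i}{k+2}(2J+1)}\,,\qquad
\langle\ybt\rangle=\sum_{2J=2m(K_\nu)}^{2u(K_\nu)-2}\overline{\xi_{(\Xt_0,\Yt_0)}(2J+2)}\;\xi_{(\Xt_0,\Yt_0)}(2J)\,.
\ee
Everything then reduces to the behaviour of the coefficients $\xi_{(\Xt_0,\Yt_0)}(l)$, controlled by the resolution of identity \eqref{eq:resolution_Haux} and the normalisation of the $\cH_{\rm aux}$ coherent state.

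Next I would analyse $\xi_{(\Xt_0,\Yt_0)}(l)$. Writing $\theta_l:=\pi l/(k+2)$, the defining formula exhibits it, up to an $l$-independent phase and the prefactor $(k+2)^{-1/4}$, as the period-$2(k+2)$ periodization of the Gaussian wave packet $\exp\!\big(-\tfrac{k+2}{2\pi}(\theta_l-\Xt_0)^2\big)\exp\!\big(-\tfrac{i(k+2)}{\pi}\Yt_0(\theta_l-\Xt_0)\big)$, i.e.\ a bump in $\theta$ centred at $\Xt_0$ of width $\sqrt{\pi/(k+2)}$ carrying a linear phase of slope proportional to $\Yt_0$. In the scaling $k=\lambda k$, $K_\nu=\lambda K_\nu$, this width shrinks like $1/\sqrt\lambda$, whereas the index window $\{2m(K_\nu),\dots,2u(K_\nu)\}$ corresponds to the $\theta$-interval with endpoints $2\pi m(K_\nu)/(k+2)$ and $2\pi u(K_\nu)/(k+2)$ — precisely the interval in the stated triangle inequality for $\Xt_0/\pi$ — while the periodizing terms ($m'\neq0$) sit at distance $O(\lambda)$ from the principal bump, contributing only $O(e^{-c\lambda})$. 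This is where the semiclassical scaling is used essentially: it forces the packet width below the period.

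Then comes the case split, carried out by completing the square in the exponents and replacing the sums by Gaussian integrals, with Poisson summation (or Euler--Maclaurin) controlling the remainders and the truncation to the window. If $\Xt_0/\pi$ lies strictly inside the admissible interval, the bump is contained in the window up to tails at fixed positive $\theta$-distance, hence at index distance $\gtrsim\lambda$, which contribute $O(e^{-c'\lambda})$; on the (now effectively unrestricted) Gaussian sums the slowly varying factor $e^{\frac{\pi i}{k+2}(2J+1)}$ freezes at its peak value $e^{i\Xt_0}$, while in $\langle\ybt\rangle$ the two-unit shift in the index against the linear phase of slope $\propto\Yt_0$ produces the factor $e^{2i\Yt_0}$; the $\sqrt{k+2}\sim\sqrt\lambda$ normalisations appearing in numerator and denominator match to leading order, leaving $e^{i\Xt_0}$ and $e^{2i\Yt_0}$ together with the residual tail error $O(e^{-\lambda}/\sqrt\lambda)$. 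If instead $\Xt_0/\pi$ is outside the admissible interval, the bump is centred outside the window, every summand is bounded by $(k+2)^{-1/2}e^{-c\lambda}$, the summands decay geometrically away from the nearest endpoint, and summing them gives $\langle\xbt\rangle=O(e^{-\lambda}/\sqrt\lambda)$ and likewise $\langle\ybt\rangle=O(e^{-\lambda}/\sqrt\lambda)$.

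The hard part will be making the last two steps uniform and quantitative: bounding the periodization (theta) corrections and the Poisson/Euler--Maclaurin remainders uniformly as the window slides relative to the peak, and in particular confirming that the boundary truncation together with the Gaussian broadening really produce only the exponentially suppressed error $O(e^{-\lambda}/\sqrt\lambda)$ quoted, rather than a power-law-in-$1/\lambda$ correction. This is where the specific Gaussian profile of the Gazeau torus coherent states (Appendix~\ref{app:Quantum_State_on_the_torus}, \cite{Gazeau:2009zz}) must be used essentially, together with careful bookkeeping of the boundary term dropped from $\langle\ybt\rangle$ by the projector $P_n$ and of the overall normalisation of $|\tpsi_{(\Xt_0,\Yt_0)}\rangle$.
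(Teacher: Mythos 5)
Your proposal is correct and follows essentially the same route as the paper: both reduce $\langle\xbt\rangle$ and $\langle\ybt\rangle$ to the diagonal and index-shifted sums of the coefficients $\xi_{(\Xt_0,\Yt_0)}(2J)$ over the admissible window, then exploit the periodized-Gaussian structure of the torus coherent state, Poisson resummation, and peak localization (dominant term from the trivial winding/periodization indices) to obtain $e^{i\Xt_0}$ and $e^{2i\Yt_0}$ when the peak lies in the window fixed by the triangle inequality, and exponential suppression otherwise. The paper carries out the quantitative step you flag as the hard part via the explicit error-function asymptotics of the truncated Gaussian integrals, which is exactly the uniform control your sketch calls for.
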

\begin{proof}
We first calculate that 
\be\begin{split}
\la \xbt \ra:=\la\widetilde{\psi}_{(\Xt_0,\Yt_0)}|\xbt|\widetilde{\psi}_{(\Xt_0,\Yt_0)}\ra
&=\sum_{2J,2J'}\bar{\xi}_{(\Xt_0,\Yt_0)}(2J)\xi_{(\Xt_0,\Yt_0)}(2J')\langle2J|\tilde{{\bf x}}|2J'\rangle
=\sum_{2J}e^{\frac{\pi i}{k+2}(2J+1)}\bar{\xi}_{(\Xt_0,\Yt_0)}(2J)\xi_{(\Xt_0,\Yt_0)}(2J)\\
&=\sqrt{\frac{1}{k+2}}\sum_{s,s'\in\mathbb{Z}}e^{-i2(k+2)\Yt_0(s-s')}\sum_{2J}e^{\frac{\pi i}{k+2}(2J+1)}e^{-\frac{k+2}{2\pi}\left[\left(\pi\frac{2J}{k+2}-2\pi s-\Xt_0\right)^{2}+\left(\pi\frac{2J}{k+2}-2\pi s'-\Xt_0\right)^{2}\right]}\,,
\end{split}
\label{eq:exp_x_1}
\ee	
where the summation over $2J$ is from $2m(K_\nu)$ to $2u(K_\nu)$ as before. To proceed, we first apply the Poisson resummation formula
\be
\sum_{l=l_{i}}^{l_{f}}f(l)=\sum_{n\in\mathbb{Z}}\int_{l_{i}-\delta}^{l_{f}+1-\delta}{\rm d}l\,f(l)e^{2\pi inl}\,,\quad 
\delta>0 \text{ arbitrarily small}
\ee
to rewrite $\sum_{2J}$ in \eqref{eq:exp_x_1} in into an integral. 
Denote $m:={2m(K_\nu)}$ and $u:={2u(K_\nu)+1}$ for simplicity.  
 Then \eqref{eq:exp_x_1} becomes 
\be
\langle\tilde{{\bf x}}\rangle
=\f{1}{\sqrt{(k+2)}}\sum_{s,s'\in\mathbb{Z}}e^{-i2(k+2)\Yt_0(s-s')}\sum_{n\in\mathbb{Z}}\int_{{m}-\delta}^{u-\delta}{\rm d}(2J)\,e^{\frac{\pi i}{k+2}(2J+1)+4\pi iJn}e^{-\frac{k+2}{2\pi}\left[\left(\pi\f{2J}{k+2}-2\pi s-\Xt_0\right)^{2}+\left(\pi\f{2J}{k+2}-2\pi s'-\Xt_0\right)^{2}\right]}\,.
\label{eq:exp_x_2}
\ee

We observe that the integration can be written in terms of the error function whose asymptotic expansion behaves as
\be
{\rm erf}(x):=\frac{2}{\sqrt{\pi}}\int_{0}^{x}e^{-t^{2}}{\rm d}t=1-\frac{e^{-x^{2}}}{\sqrt{\pi}x}\left(1+O(1/x^{2})\right),\quad x\gg1\,.
\ee
Therefore, the asymptotic expansion of the integration $\int_{b}^{a}e^{-cx^{2}}{\rm d}x$ with $a> b$ and $c>0$ decays exponentially unless $a,b$ have different signs, or equivalently, the integrand $e^{-cx^{2}}$ peaks within the range $[a,b]$. More precisely,
\be
\int_{a}^{b}e^{-cx^{2}}{\rm d}x
= \begin{cases}
\sqrt{\frac{\pi}{4c}}\left({\rm erf}(b\sqrt{c})-{\rm erf}((a\sqrt{c})\right)\,\xrightarrow[\lambda\rightarrow\infty]{a\rightarrow\lambda a,\ b\rightarrow\lambda b}\ O\left(\frac{e^{-\lambda^{2}}}{\sqrt{\pi}\lambda}\right) & {\rm if\ }0\leq a<b\\
\sqrt{\frac{\pi}{4c}}\left({\rm erf}(b\sqrt{c})+{\rm erf}((a\sqrt{c})\right)\:\xrightarrow[\lambda\rightarrow\infty]{a\rightarrow\lambda a,\ b\rightarrow\lambda b}\ \sqrt{\frac{\pi}{c}}+O\left(\frac{e^{-\lambda^{2}}}{\sqrt{\pi}\lambda}\right) & {\rm if\ }a<0<b\\
-\sqrt{\frac{\pi}{4c}}\left({\rm erf}(b\sqrt{c})-{\rm erf}((a\sqrt{c})\right)\:\xrightarrow[\lambda\rightarrow\infty]{a\rightarrow\lambda a,\ b\rightarrow\lambda b}\ O\left(\frac{e^{-\lambda^{2}}}{\sqrt{\pi}\lambda}\right) & {\rm if\ }a<b\leq0
\end{cases}
\label{eq:approx_error}\,. 
\ee
The integrand of \eqref{eq:exp_x_2} peaks at $\f{2J}{k+2}=\frac{i(n(k+2)+\frac{1}{2})}{k+2}+(s+s')+\frac{x}{\pi}$, where it reads
\be
\exp\left[(k+2)\left(-\pi n^{2}-\pi(s-s')^{2}+2in(\pi(s+s')+\Xt_0)\right)-\frac{\pi\left(\frac{1}{4}-i\frac{1}{2}\right)}{k+2}+(-\pi n+i(\pi(s+s')+\Xt_0))\right]\,. 
\ee
At large $k$, the dominant contribution comes from the term $n=0,s'=s$ of the summation. In this case, the peak is $\f{2J}{k+2}=\f{\Xt_0}{\pi}+2s+i\f{1}{2(k+2)}\xrightarrow[\lambda\rightarrow\infty]{k\rightarrow\lambda k} \f{\Xt_0}{\pi}+2s$, which is within the range $[\tilde{m},\tilde{u}]$ only if $s=0$. Therefore, \eqref{eq:exp_x_2} can be simplified at the $\lambda\rightarrow\infty$ approximation (taking into account that the constant $c$ takes the value $\pi (k+2)$ when using \eqref{eq:approx_error})
\be
\langle\tilde{{\bf x}}\rangle \xrightarrow[\lambda\rightarrow\infty]{K_\nu\rightarrow\lambda K_\nu,k\rightarrow \lambda k }e^{i\Xt_0}+O\lb e^{-\lambda}/\sqrt{\lambda} \rb\,
\ee
where $2 \max\left(\frac{|K_1-K_2|}{k+2},\frac{|K_3-K_4|}{k+2}\right) \leq \frac{\Xt_0}{\pi} \leq 2 \min\left(\frac{u(K_1,K_2)}{k+2},\frac{u(K_3,K_4)}{k+2}\right)$. Otherwise, $\langle\tilde{{\bf x}}\rangle=O\lb e^{-\lambda}/\sqrt{\lambda} \rb$.

The expectation value of $\ybt$ is calculated in the same way:
\be\begin{split}
\langle\tilde{{\bf y}}\rangle&:=\langle\tilde{\psi}_{(\Xt_0,\Yt_0)}|\tilde{{\bf y}}|\tilde{\psi}_{(\Xt_0,\Yt_0)}\rangle=\sum_{2J,2J'}\bar{\xi}_{(\Xt_0,\Yt_0)}(2J')\xi_{(\Xt_0,\Yt_0)}(2J)\langle2J|\tilde{{\bf y}}|2J'\rangle=\sum_{2J}\bar{\xi}_{(\Xt_0,\Yt_0)}(2J+2)\xi_{(\Xt_0,\Yt_0)}(2J)\\
&=\left(\frac{1}{k+2}\right)^{\frac{1}{2}}\sum_{s,s'\in\mathbb{Z}}e^{-i2(k+2)\Yt_0(s-s')}e^{i2\Yt_0}\sum_{2J}e^{-\frac{k+2}{2\pi}\left[\left(\pi\frac{2J+2}{k+2}-2\pi s-\Xt_0\right)^{2}+\left(\pi\frac{2J}{k+2}-2\pi s'-\Xt_0\right)^{2}\right]}\\
&=\left(\frac{1}{k+2}\right)^{\frac{1}{2}}\sum_{s,s'\in\mathbb{Z}}e^{-i2(k+2)\Yt_0(s-s')+i2\Yt_0}\sum_{n\in\mathbb{Z}}\int_{2m-\delta}^{2u-\delta}{\rm d}(2J)\,e^{-\frac{k+2}{2\pi}\left[\left(\pi\frac{2J+2}{k+2}-2\pi s-\Xt_0\right)^{2}+\left(\pi\frac{2J}{k+2}-2\pi s'-\Xt_0\right)^{2}\right]}\,,
\end{split}\ee
where we have used the Poisson resummation in the third line. 
The peak of the integrand in the last line is at $\f{2J}{k+2}=(s+s')+\frac{\Xt_0}{\pi}-\frac{1}{k+2}\xrightarrow[\lambda\rightarrow \infty]{k\rightarrow\lambda k}(s+s')+\frac{\Xt_0}{\pi}$, which is within the range $[\f{m}{k+2},\f{u}{k+2}]$ only if $s+s'=0$ then the peak is at $\f{2J}{k+2}= \f{\Xt_0}{\pi}$. At the peak, the integrand takes the form
\be
\begin{split}
e^{-\frac{k+2}{2\pi}\left[\left(\pi\f{2J}{k+2}+\frac{2\pi}{k+2}-2\pi s-\Xt_0\right)^{2}+\left(\pi\f{2J}{k+2}-2\pi s'-\Xt_0\right)^{2}\right]}	
&\xrightarrow{\f{2J}{k+2}=\frac{\Xt_0}{\pi}}e^{-\frac{k+2}{2\pi}\left[\left(\Xt_0+\frac{2\pi}{k+2}-2\pi s-\Xt_0\right)^{2}+\left(\Xt_0-2\pi s'-\Xt_0\right)^{2}\right]}\\
&	\xrightarrow{k\rightarrow\infty}e^{-\frac{k+2}{2\pi}\left[\left(2\pi s\right)^{2}+\left(2\pi s'\right)^{2}\right]}\,,
\end{split}\ee
which decays exponentially at large $k$ unless $s=s'=0$. We therefore conclude that 
\be
\langle\tilde{{\bf y}}\rangle \xrightarrow[\lambda\rightarrow\infty]{K_\nu\rightarrow\lambda K_\nu,k\rightarrow \lambda k }e^{2i\Yt_0}+O\lb e^{-\lambda}/\sqrt{\lambda} \rb,
\ee
when $2 \max\left(\frac{|K_1-K_2|}{k+2},\frac{|K_3-K_4|}{k+2}\right) \leq \frac{\Xt_0}{\pi} \leq 2 \min\left(\frac{u(K_1,K_2)}{k+2},\frac{u(K_3,K_4)}{k+2}\right)$. Otherwise, $\langle\tilde{{\bf y}}\rangle=O\lb e^{-\lambda}/\sqrt{\lambda} \rb$.
\end{proof}

So far, we have only calculated the expectation values of $\Tilde{\bold{x}}$ and $\Tilde{\bold{y}}$. However, in general, one can find the expectation value of a polynomial in $\Tilde{\bold{x}}$ and $\Tilde{\bold{y}}$. The fact that the sum $\sum^{2u(K_{\nu})}_{2J=2m(K_{\nu})}$ has an upper limit restricts the power order $\mu$ that $\Tilde{\bold{y}}$ can have. Acting $\ybt^\mu$ on $|\tpsi_{(\Xt_0,\Yt_0)}\ra$, we have
\be
  \ybt^\mu |\tpsi_{(\Xt_0,\Yt_0)}\ra =P_n\yb^{2\mu}P_n|\psi_{(\Xt_0,\Yt_0)}\rangle
  =\sum_{2J,2J'}|2J'\rangle\langle2J'| \yb^{2\mu} |2J\rangle\langle2J|\psi_{(\Xt_0,\Yt_0)}\rangle
  =\sum_{2J,2J'}|2J'\rangle\langle2J'| 2J+2\mu\rangle\langle2J|\psi_{(\Xt_0,\Yt_0)}\rangle\;,
\ee
where sums are from $m(K_{\nu})\equiv\max\lb |K_1-K_2|,|K_3-K_4|\rb$ to $u(K_{\nu})\equiv\min\lb u(K_1,K_2),u(K_3,K_4)\rb$. 
From the calculation above, we obtain the range that $\mu$ must fall within: 
\be
0\leq \mu\leq u(K_{\nu})-m(K_{\nu})\;.
\ee
The state is set to $0$ if $\mu$ is greater than the allowed value.

We have, therefore, seen that the coherent state $|\tpsi_{(\Xt_0,\Yt_0)}\ra$ provides a complete basis spanning the intertwiner space of a quantum curved tetrahedron as a Hilbert space, and that the operators $\xbt$ and $\ybt$ quantize the length coordinate $x$ and twist coordinate $y$ respectively as multiplication and derivative operators on the Hilbert space. Let the coherent state labels $\Xt_0=\theta,\,\Yt_0=\phi$ defined in \eqref{eq:def_theta_phi}. {Then the expectation values of $\xbt$ and $\ybt$ correspond to their classical counterparts, $e^{i\theta}\equiv x$ and $e^{i2\phi}\equiv y^2$ respectively. Both $\Xt_0$ and $\Yt_0$ indeed fall within the range from $0$ to $\pi$, considering that the triangle inequality is satisfied. This inequality, $2\pi \max\left(\frac{|K_1-K_2|}{k+2},\frac{|K_3-K_4|}{k+2}\right) \leq \Xt_0 \leq 2\pi \min\left(\frac{u(K_1,K_2)}{k+2},\frac{u(K_3,K_4)}{k+2}\right)$, where $u(K_1,K_2)= \min(K_1+K_2,k-K_1-K_2)$, is considered as the quantum counterpart of \eqref{eq:range_theta_phi}.} 


\section{Conclusion and discussion}
In this work, we have constructed the algebra generated by the quantum monodromies $\bM^I_{\ell}$, where the loop $\ell$ encloses a pair of punctures. We have proved that the algebra generated by $\bM^I_{\ell}$ forms a loop algebra and that the $q$-deformed Wilson loop operators $c^I_{\ell}$ constructed from the $\bM^I_{\ell}$ form a fusion algebra. The quantum diagonal length operator is obtained from $c^{\frac{1}{2}}_\ell$. A set of coherent states is constructed directly in the intertwiner space and the expectation value of length and twist operators in the semi-classical limit peak at points of the phase space, each describing the shape of a constantly curved tetrahedron.
We have also shown that not all coherent states are geometrical states for technical reasons. Firstly, the coherent state labels $\Xt_0$ and $\Yt_0$ label a point on the torus, but the phase space of the shape of a tetrahedron is only the subspace of the torus phase space. Furthermore, the integration of the resolution of the identity is over the entire torus, which means that the coherent intertwiner is not in one-to-one correspondence with the classical tetrahedron. 

The coherent states constructed in this paper may be adapted to construct the spinfoam model with a non-zero cosmological constant $\Lambda$ similar to the one introduced in \cite{Han:2021tzw}, where the coherent state labels were chosen to be the Fock-Goncharov coordinates. These coordinates are also coordinates of the tetrahedron shape phase space, but are not Darboux coordinates and do not have natural holonomy interpretation as the length variables used in this paper, which makes it more difficult to connect to the canonical quantization approach of LQG (with $\Lambda$). We expect that the adjustment of coherent state coupling to the partition function in building the spinfoam model could lead to a more feasible model, which is easier for applications. 

The Guillemin-Sternberg theorem \cite{guillemin1982geometric} guarantees that the quantization commutes with reduction as illustrated in the following commuting diagram: 
\be\ba{ccc}
(\SU(2))^{\otimes 4}\quad & \xrightarrow{\quad \text{ quantization }\quad}\quad & 
\ba{c}V^{K_1}\otimes V^{K_2}\otimes V^{K_3}\otimes V^{K_4}\ea\\
\phantom{\text{quantization}}\left\downarrow\rule{0cm}{1cm}\right.\text{\small symplectic reduction} &&
\phantom{\text{quantization}}\left\downarrow\rule{0cm}{1cm}\right.\text{\small quantum reduction}\\\cM_{\text{flat}}(\Sigma_{0,4},\SU(2))\quad & \xrightarrow{\quad \text{ quantization }\quad}\quad & \ba{c}\text{Inv}_q(K_1,K_2,K_3,K_4)\ea\,.
\ea
\label{eq:commut_diag_quan_redu}
\ee
In our approach, we proceed the quantization first, and the reduction after. However, we may try the other route as in \cite{Conrady:2009px}, i.e. to proceed the geometric quantization directly on $\cM_{\text{flat}}(\Sigma_{0,4},\SU(2))$. The Hilbert space should be the conformal blocks of the WZW model, which is identified with the Hilbert space constructed by the combinatorial quantization \cite{Alekseev:1995rn}. One may define the coherent intertwiners in this manner. It would be interesting to investigate the isomorphism between those states and the coherent states we constructed in this paper.

\begin{acknowledgements}
The authors would like to acknowledge Muxin Han for various helpful discussions. This work receives support from the National Science Foundation through grants PHY-2207763, PHY-2110234, the Blaumann Foundation and the Jumpstart Postdoctoral Program at Florida Atlantic University.

\end{acknowledgements}

\appendix
\renewcommand\thesection{\Alph{section}}

 \section{Defining relations for $\bM^I_{\ell}$ with the substitution rules}
 \label{com_relation_truncated}
 
 For the truncated case, $\TUQ$ is a weak Hopf algebra, we apply the substitution rule over all defining relations of the lattice algebra $\mathcal{B}$, which is generated by the matrix elements of quantum holonomies of oriental edges in a lattice. The substitution rules are listed below \cite{Alekseev:1994au,Alekseev:1995rn}.
\be
\begin{split}
&{C}[IJ|K] \rightarrow  \widetilde{C}[IJ|K] := C[IJ|K](\varphi^{-1})^{IJ}\,, \qquad
{C}[IJ|K]^{*}\rightarrow  \widetilde{C}[IJ|K]^{*} := (\varphi_{213})^{IJ}C[IJ|K]^{*}\,, \\
&R^{IJ} \rightarrow \mathcal{R}^{IJ} := (\rho^{I}\otimes \rho^{J}\otimes \Id)(\varphi_{213}R\varphi^{-1})\,,\qquad
 d_I \rightarrow \tilde{d}_I := \tr^I(\rho^I(gS(\beta)\alpha))\,,\qquad
 R^I  \rightarrow  R^I \equiv(\rho^I\otimes \Id)R\,,\\
 &\tr_{q}^{I}(X)\rightarrow \widetilde{\tr}_q^I(X):=\tr^{I}(m^{I}Xw^{I}g^{I})\,,\quad
 \text{with }\; m^{I}=\rho^{I}(S(\phi^{(1)})\alpha \phi^{(2)})\phi^{(3)}\,,\quad  w^{I}=\rho^{I}(\varphi^{(2)}S^{-1}(\varphi^{(1)}\beta))\varphi^{(3)}\,,
\end{split}  
\label{eq:substitution_rule}
\ee
where $\varphi,\phi\equiv\varphi^{-1}\in\TUQ\otimes\TUQ\otimes\TUQ$ and $\alpha,\beta\in\TUQ$ are the defining elements for $\TUQ$. Note that $\phi\varphi\neq e\otimes e\otimes e$ but $\phi\varphi=(\Delta\otimes\Id)\Delta(e)$.

 By using the exchange relations of the quantum holonomies, we can derive all the commutation relations of the quantum monodromies $\bM^I_{\ell}$ from the commutation relations of the quantum holonomies. The quantum monodromies $\bM^I_{\ell}$ can be expressed as matrix product of quantum holonomies, e.g. $\bM^I_{\ell}=\kappa^{-1}_I U^I_{\mathcal{C}}U^I_{\mathcal{C'}}$.
 
 The functoriality condition of $\bM^I_{\ell}$ can be derived from the functoriality condition of quantum holonomies, i.e. $\stackrel{1}{U^I_{\mathcal{C}}}\stackrel{2}{U^J_{\mathcal{C}}}=\sum_K \widetilde{C}[IJ|K]^{*}U^K_{\mathcal{C}}\widetilde{C}[IJ|K]$ and the commutation relation $\stackrel{1}{U^I_{\mathcal{C'}}}\mathcal{R}_{x}^{IJ}\stackrel{2}{U^J_{\mathcal{C}}}=\stackrel{2}{U^J_{\mathcal{C}}}\mathcal{R'}_{y}^{IJ}\stackrel{1}{U^I_{\mathcal{C'}}}$:
 \be
 \begin{aligned}   
 \stackrel{1}{\bM^I_{\ell}}\mathcal{R}_{x}^{IJ}\stackrel{2}{\bM^J_{\ell}}&=\kappa^{-1}_I\kappa^{-1}_J\stackrel{1}{U^I_{\mathcal{C}}}\stackrel{1}{U^I_{\mathcal{C'}}}\mathcal{R}_{x}^{IJ}\stackrel{2}{U^J_{\mathcal{C}}}\stackrel{2}{U^J_{\mathcal{C'}}}\\
 &=\kappa^{-1}_I\kappa^{-1}_J\stackrel{1}{U^I_{\mathcal{C}}}\stackrel{2}{U^J_{\mathcal{C}}}\mathcal{R'}_{y}^{IJ}\stackrel{1}{U^I_{\mathcal{C'}}}\stackrel{2}{U^J_{\mathcal{C'}}}\\
 &=\kappa^{-1}_I\kappa^{-1}_J\sum_{K,K'} \widetilde{C}[IJ|K]^{*}U^K_{\mathcal{C}}\widetilde{C}[IJ|K]\mathcal{R'}_{x}^{IJ} \widetilde{C}[IJ|K']^{*}U^{K'}_{\mathcal{C'}}\widetilde{C}[IJ|K']\\
 &=\kappa^{-1}_K\sum_{K} \widetilde{C}[IJ|K]^{*}U^K_{\mathcal{C}}U^K_{\mathcal{C'}}\widetilde{C}[IJ|K]\\
 &=\sum_{K} \widetilde{C}[IJ|K]^{*}\bM^K_{\ell}\widetilde{C}[IJ|K]\;.
 \end{aligned}
 \ee
 The commutation relation of quantum monodromy $\bM^I_{\ell}$ can be derived from exchange relation of quantum holonomies:
 \be
 \begin{aligned}
 (\mathcal{R}^{-1})_{x}^{IJ}\stackrel{1}{\bM^I_{\ell}}\mathcal{R}_{x}^{IJ}\stackrel{2}{\bM^J_{\ell}}&=\kappa^{-1}_I\kappa^{-1}_J(\mathcal{R}^{-1})_{x}^{IJ}\stackrel{1}{U^I_{\mathcal{C}}}\stackrel{1}{U^I_{\mathcal{C'}}}\mathcal{R}_{x}^{IJ}\stackrel{2}{U^J_{\mathcal{C}}}\stackrel{2}{U^J_{\mathcal{C'}}}\\
 &=\kappa^{-1}_I\kappa^{-1}_J(\mathcal{R}^{-1})_{x}^{IJ}\stackrel{1}{U^I_{\mathcal{C}}}\stackrel{2}{U^J_{\mathcal{C}}}\mathcal{R'}_{y}^{IJ}\stackrel{1}{U^I_{\mathcal{C'}}}\stackrel{2}{U^J_{\mathcal{C'}}}\\
 &=\kappa^{-1}_I\kappa^{-1}_J(\mathcal{R}^{-1})_{x}^{IJ}\mathcal{R}_{x}^{IJ}\stackrel{2}{U^J_{\mathcal{C}}}\stackrel{1}{U^I_{\mathcal{C}}}(\mathcal{R'}^{-1})_{y}^{IJ}\mathcal{R'}_{y}^{IJ}\stackrel{1}{U^I_{\mathcal{C'}}}\stackrel{2}{U^J_{\mathcal{C'}}}\\
 &=\kappa^{-1}_I\kappa^{-1}_J\stackrel{2}{U^J_{\mathcal{C}}}\stackrel{1}{U^I_{\mathcal{C}}} (\mathcal{R})_{y}^{IJ}\stackrel{2}{U^J_{\mathcal{C'}}}\stackrel{1}{U^I_{\mathcal{C'}}}(\mathcal{R'}^{-1})_{x}^{IJ}\\
&=\kappa^{-1}_I\kappa^{-1}_J\stackrel{2}{U^J_{\mathcal{C}}}\stackrel{2}{U^J_{\mathcal{C'}}}(\mathcal{R'})_{x}^{IJ}\stackrel{1}{U^I_{\mathcal{C}}}\stackrel{1}{U^I_{\mathcal{C'}}}(\mathcal{R'}^{-1})_{x}^{IJ}\\
 &=\stackrel{2}{\bM^I_{\ell}}\mathcal{R'}_{x}^{IJ}\stackrel{1}{\bM^J_{\ell}}(\mathcal{R'}^{-1})_{x}^{IJ}\;.
 \end{aligned}
 \ee
 For any cycle $\ell_{\mu}$ such that $\ell \prec \ell_{\mu}$, the monodromy along the cycle $\ell_{\mu}$ can also be expressed in terms of quantum holonomies i.e. $\bM^I_{\ell_{\mu}}=\kappa^{-1}_{I}U^I_{\mathcal{\Tilde{C}}}U^I_{\mathcal{\Tilde{C'}}}$ by separating $\ell_\mu$ into $\tilde{C}\circ\tilde{C}'$. The commutation relation between $\bM^I_{\ell}$ and $\bM^I_{\ell_{\mu}}$ can be derived from the commutation relations of quantum holonomies:
 \be
 \begin{aligned}
 (\mathcal{R}^{-1})_{x}^{IJ}\stackrel{1}{\bM^I_{\ell}}\mathcal{R}_{x}^{IJ}\stackrel{2}{\bM^J_{\ell_{\mu}}}&=\kappa^{-1}_{I}\kappa^{-1}_{J}(\mathcal{R}^{-1})_{x}^{IJ}\stackrel{1}{U^I_{\mathcal{C}}}\stackrel{1}{U^I_{\mathcal{C'}}}\mathcal{R}_{x}^{IJ}\stackrel{2}{U^J_{\mathcal{\Tilde{C}}}}\stackrel{2}{U^J_{\mathcal{\Tilde{C'}}}}\\
 &=\kappa^{-1}_{I}\kappa^{-1}_{J}(\mathcal{R}^{-1})_{x}^{IJ}\stackrel{1}{U^I_{\mathcal{C}}}\stackrel{2}{U^J_{\mathcal{\Tilde{C}}}}\stackrel{1}{U^I_{\mathcal{C'}}}\stackrel{2}{U^J_{\mathcal{\Tilde{C'}}}}\\
 &=\kappa^{-1}_{I}\kappa^{-1}_{J}(\mathcal{R}^{-1})_{x}^{IJ}\mathcal{R}_{x}^{IJ}\stackrel{2}{U^J_{\mathcal{\Tilde{C}}}}\stackrel{1}{U^I_{\mathcal{C}}}\stackrel{2}{U^J_{\mathcal{\Tilde{C'}}}}\stackrel{1}{U^I_{\mathcal{C'}}}\mathcal{R}_{x}^{IJ}\\
 &=\kappa^{-1}_{I}\kappa^{-1}_{J}\stackrel{2}{U^J_{\mathcal{\Tilde{C}}}}\stackrel{2}{U^J_{\mathcal{\Tilde{C'}}}}(\mathcal{R}^{-1})_{x}^{IJ}\stackrel{1}{U^I_{\mathcal{C}}}\stackrel{1}{U^I_{\mathcal{C'}}}\mathcal{R}_{x}^{IJ}\\
 &=\stackrel{2}{\bM^J_{\ell_{\mu}}}(\mathcal{R}^{-1})_{x}^{IJ}\stackrel{1}{\bM^I_{\ell}}\mathcal{R}_{x}^{IJ}\;.
  \end{aligned}
 \ee
 On the other hand, for any cycle $\ell_{\mu}$ such that $\ell \succ \ell_{\mu}$, we have
 \be
 \begin{aligned}
 \mathcal{R'}_{x}^{IJ}\stackrel{1}{\bM^I_{\ell}}(\mathcal{R'}^{-1})_{x}^{IJ}\stackrel{2}{\bM^J_{\ell_{\mu}}}&=\kappa^{-1}_{I}\kappa^{-1}_{J}\mathcal{R'}_{x}^{IJ}\stackrel{1}{U^I_{\mathcal{C}}}\stackrel{1}{U^I_{\mathcal{C'}}}(\mathcal{R'}^{-1})_{x}^{IJ}\stackrel{2}{U^J_{\mathcal{\Tilde{C}}}}\stackrel{2}{U^J_{\mathcal{\Tilde{C'}}}} \\
 &=\kappa^{-1}_{I}\kappa^{-1}_{J}\mathcal{R'}_{x}^{IJ}\stackrel{1}{U^I_{\mathcal{C}}}\stackrel{2}{U^J_{\mathcal{\Tilde{C}}}}\stackrel{1}{U^I_{\mathcal{C'}}}\stackrel{2}{U^J_{\mathcal{\Tilde{C'}}}}\\
 &=\kappa^{-1}_{I}\kappa^{-1}_{J}\mathcal{R'}_{x}^{IJ}(\mathcal{R'}^{-1})_{x}^{IJ}\stackrel{2}{U^J_{\mathcal{\Tilde{C}}}}\stackrel{1}{U^I_{\mathcal{C}}}\stackrel{2}{U^J_{\mathcal{\Tilde{C'}}}}\stackrel{1}{U^I_{\mathcal{C'}}}(\mathcal{R'}^{-1})_{x}^{IJ}\\
 &=\kappa^{-1}_{I}\kappa^{-1}_{J}\stackrel{2}{U^J_{\mathcal{\Tilde{C}}}}\stackrel{2}{U^J_{\mathcal{\Tilde{C'}}}}\mathcal{R'}_{x}^{IJ}\stackrel{1}{U^I_{\mathcal{C}}}\stackrel{1}{U^I_{\mathcal{C'}}}(\mathcal{R'}^{-1})_{x}^{IJ}\\
 &=\stackrel{2}{\bM^J_{\ell_{\mu}}}\mathcal{R'}_{x}^{IJ}\stackrel{1}{\bM^I_{\ell}}(\mathcal{R'}^{-1})_{x}^{IJ}\;.
 \end{aligned}
 \ee
 The element $c^I_{\ell}$ is subject to change with the substitution rule \eqref{eq:substitution_rule} applied and they satisfy fusion rule \eqref{eq:c_fusion} \cite{Alekseev:1994au}. The element $c^I_{\ell}$ is expressed as 
 \be
 \kappa_I\widetilde{\tr}_q^I(\bM^I_{\ell})=\kappa_I\tr^{I}(m^{I}\bM^I_{\ell}w^{I}g^{I})\;,
 \ee
 where $m^I$ and $w^I$ can be found in \eqref{eq:substitution_rule}. The quantum character $\chi^I_{\ell}$ can be constructed by using S-matrix and elements $c^I_{\ell}$ \cite{Alekseev:1994au}:
 \be
 S_{IJ}=\mathcal{N}(\widetilde{\tr}_q^I\otimes \widetilde{\tr}_q^J)(R'R)\;,
 \ee
where $R',R$ satisfy the quasi-Yang-Baxter equation. The $S$-matrix satisfies the properties listed in \eqref{eq:S_matrix}.
 Then the quantum character $\chi^I_{\ell}$ is defined as:
 \be
 \chi^I_\ell=\mathcal{N}d_I S_{IJ}c^{\Bar{J}}=\mathcal{N}d_I S_{I\Bar{J}}c^{J}\,.
 \ee
 \section{Some detailed calculation}
 \label{app:detail_calc}
 In this appendix, we collect some detailed calculations used in the main text, which are formulated into lemmas. Lemmas \ref{lemma:commu_linear_order} and \ref{lemma:trace_identity} are used in the proof of the Propositions \ref{prop:loop_algebra} and \ref{prop:central_element} respectively. Lemmas \ref{lemma:varphi_123} and \ref{lemma:delta_R'R} are used in the proof of the Theorem \ref{Thm:eigenvalue_c_l} as the substitution rule \eqref{eq:substitution_rule} applied. 
 \begin{lemma}
Suppose $M_\ell^{I}$ is the monodromy around the $m$-th and the $(m-1)$-th punctures, where $m=2,3,4$. The quantum trace of $M_\ell^{I}$ has the same two identities as the quantum trace of monodromies around one puncture, i.e.
\begin{equation}
\tr_q^I(M_\ell^I)=\tr_q^I\left((R^{-1})^{IJ}M_\ell^IR^{IJ}\right)=\tr_q^I\left((R')^{IJ}M_\ell^I((R')^{-1})^{IJ}\right).
\end{equation}
\label{lemma:trace_identity}
\end{lemma}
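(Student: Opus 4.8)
The plan is to obtain both identities directly from Proposition \ref{prop:loop_algebra}, with an explicit computation supplied afterwards in the spirit of this appendix. The monodromy around a single puncture obeys the stated trace identities because they are a standard consequence of the loop-algebra defining relations \eqref{eq:loop_exchange_relations} together with the cyclicity of the quantum trace up to conjugation by the group-like element $g$ (the intertwining property \eqref{eq:g_intertwining}); cf. \cite{Alekseev:1994au,Alekseev:1995rn}. By Proposition \ref{prop:loop_algebra} the matrix elements of $\{\bM_\ell^I\}_I$ generate a loop algebra $\cL_{0,1}$ obeying exactly those relations, so the same identities hold verbatim with $\bM_\nu^I$ replaced by $\bM_\ell^I$. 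That is already a complete argument; the explicit version is as follows.

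First I would write $\bM_\ell^I=\kappa_I\bM_{\ell_m}^I\bM_{\ell_{m-1}}^I$ and distribute the conjugation over the product by inserting $R^{IJ}(R^{IJ})^{-1}=\Id$,
\be
(R^{-1})^{IJ}\MlIl R^{IJ}
=\kappa_I\lb(R^{-1})^{IJ}\stackrel{1}{\bM_{\ell_m}^I}R^{IJ}\rb\lb(R^{-1})^{IJ}\stackrel{1}{\bM_{\ell_{m-1}}^I}R^{IJ}\rb\,.
\ee
Then I would rewrite each conjugated factor using the single-puncture loop-algebra relation \eqref{eq:M_ell_property_2}, applied to $\bM_{\ell_m}$ and to $\bM_{\ell_{m-1}}$ separately, and use the graph-algebra exchange relations \eqref{eq:commutation_rep_1}--\eqref{eq:commutation_rep_2} between $\bM_{\ell_m}$ and $\bM_{\ell_{m-1}}$ to move the remaining spectator $R$-matrices into positions where they annihilate in pairs; finally I would take $\tr_q^I$, using that $\tr_q^I$ is cyclic up to conjugation by $g^I$ so that the $R$-matrices left at the ends of the trace can be shifted through and cancelled, leaving $\tr_q^I(\bM_\ell^I)\otimes\Id_{V^J}$. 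The third expression, with $(R')^{IJ}$ in place of $(R^{-1})^{IJ}$, is treated identically with the companion single-puncture identity and \eqref{eq:commutation_rep_1}, \eqref{eq:commutation_rep_2} interchanged. A cleaner repackaging uses the functoriality relation \eqref{eq:M_ell_property_1}: once $V^J$ is made a genuine spectator, the right-hand side is a quantum trace over $V^I$ of the fused monodromy $\sum_K C[IJ|K]^*\bM_\ell^K C[IJ|K]$ contracted against the normalization/completeness relations of the Clebsch--Gordan maps, which collapses the $V^J$-dependence to $\Id_{V^J}$.

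I expect the only genuine obstacle to be the bookkeeping caused by the fact that the quantum trace is not multiplicative, $\tr_q^I(XY)\neq\tr_q^I(X)\,\tr_q^I(Y)$, so the two conjugated factors in the displayed product cannot be traced separately. Overcoming this is exactly where the loop- and graph-algebra exchange relations and the $g$-twisted cyclicity of $\tr_q^I$ have to be invoked, in order to commute the spectator $R$-matrices past $\bM_{\ell_m}$ and $\bM_{\ell_{m-1}}$ and cancel them before the trace is evaluated; once that rearrangement is carried out the two identities are immediate, and the manipulations are entirely routine.
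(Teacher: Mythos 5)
Your conclusion is true, but the route you propose is genuinely different from the paper's and, as written, does not establish it. The paper proves the lemma by a direct Sweedler-notation computation that treats $\bM^I_\ell$ as a \emph{single} operator-valued matrix: it writes $(R^{-1})^{IJ}$ using $(S\otimes\Id)(R)=R^{-1}$ (and $((R')^{-1})^{IJ}$ using $(S^{-1}\otimes\Id)(R')=(R')^{-1}$), cycles the numerical $V^I$-legs around the ordinary trace hidden in $\tr_q^I(\cdot)=\tr^I(\cdot\,\rho^I(g))$, trades $S$ for $S^{-1}$ via the intertwining property \eqref{eq:g_intertwining} of $g$, and the spectator $R$-factors then recombine and cancel. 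Nothing about $\ell$ being a composite loop, nor any relation satisfied by $\bM_\ell$, enters: the identities hold for an \emph{arbitrary} matrix in $\End(V^I)\otimes\cL_{0,4}$. This also shows why your headline argument misfires: the identities are linear in the monodromy, so they cannot be consequences of the quadratic defining relations \eqref{eq:loop_exchange_relations}, and transferring them through Proposition \ref{prop:loop_algebra} (``$\bM_\ell$ obeys the same loop-algebra relations, hence the same trace identities'') invokes a mechanism that is not the one at work. The single-puncture case in the literature is proved by exactly the generic computation above, which is why it extends verbatim --- because it is matrix-independent, not because of Proposition \ref{prop:loop_algebra}.

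The explicit computation you offer as backup would fail at several of its described steps. First, \eqref{eq:M_ell_property_2} cannot be used to rewrite a lone conjugated factor $(R^{-1})^{IJ}\lb \bM^I_{\ell_m}\otimes\Id_{V^J}\rb R^{IJ}$: that relation holds only in the presence of the companion factor $\Id_{V^I}\otimes\bM^J_{\ell_m}$, so applying it ``separately'' forces you to insert $\Id_{V^I}\otimes\bM^J_{\ell_m}$ and its inverse, which never cancel downstream. Second, the graph exchange relations \eqref{eq:commutation_rep_1}--\eqref{eq:commutation_rep_2} relate a monodromy in the $V^I$ slot to a monodromy in the $V^J$ slot; after your factorization both $\bM_{\ell_m}$ and $\bM_{\ell_{m-1}}$ sit in the $V^I$ slot, and what you actually need to move are the numerical matrices $R^{IJ}$, about which those relations say nothing. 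Third, the step where ``the $R$-matrices left at the ends of the trace can be shifted through and cancelled'' is the entire nontrivial content of the lemma; it requires the antipode identities $(S\otimes\Id)(R)=R^{-1}$ and $(S^{-1}\otimes\Id)(R')=(R')^{-1}$ in addition to the $g$-twisted cyclicity, and neither appears in your sketch. Finally, the obstacle you single out --- that $\tr_q^I$ is not multiplicative --- is a red herring: there is no need to trace the two factors separately, and no need for the decomposition $\bM^I_\ell=\kappa_I\bM^I_{\ell_m}\bM^I_{\ell_{m-1}}$ at all; treating $\bM^I_\ell$ as one matrix, the paper's short computation closes the argument.
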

\begin{proof}
 Let's prove the two quantum trace identities with the help of Sweedler's notation and quasi-triangularity: 
\begin{equation}
\begin{aligned}
\tr_q^I\left((R^{-1})^{IJ}M_\ell^IR^{IJ}\right)\quad
&=\tr^I((\sum_jS(R_{j}^{1})\otimes R_{j}^{2})^{IJ}M_\ell^I (\sum_iR_{i}^{1}\otimes R_{i}^{2})^{IJ}\rho^I(g))\\
  &=\tr^I(M_\ell^I (\sum_{ij}(R_{i}^{1})^I(S^{-1}(R_{j}^{1}))^{I}(R_{j}^{2})^J(R_{i}^{2})^{J}\rho^I(g))\\
  &=\tr^I(M_\ell^I (S^{-1}\otimes \Id)(\sum_{ij}(R_{j}^{1})^{I}(S(R_{i}^{1}))^I(R_{j}^{2})^J(R_{i}^{2})^{J}\rho^I(g))
  \\
  &=\tr^I(M_\ell^I (S^{-1}\otimes \Id)(\sum_{j}(R_{j}^{1})^{I}(R_{j}^{2})^J\sum_{i}(S(R_{i}^{1}))^I(R_{i}^{2})^{J}\rho^I(g)))\\
  &=\tr^I(M_\ell^I (S^{-1}\otimes \Id)(R)^{IJ}(R^{-1})^{IJ}\rho^I(g)))
  =\tr_q^I(M_\ell^I)\,,
\end{aligned}  
\end{equation}
\begin{equation}
\begin{aligned}
\tr_q^I((R')^{IJ}M_\ell^I((R')^{-1})^{IJ})&= \tr^I((\sum_i R_i^{1}\otimes R_i^{2})^{IJ}M_\ell^I(\sum_{j}S^{-1}(R_j^{1})\otimes R_{j}^{2})^{IJ}\rho^I(g))\\
  &=\tr^I((\sum_{ij} ((S(R_j^{1}))^{I}R_i^{1})^{I} (R_i^{2})^{J}(R_{j}^{2})^{J}M_\ell^I \rho^I(g)))\\
  &=\tr^I((S\otimes \Id)(\sum_{ij} ((S^{-1}(R_i^{1}))^{I}(R_j^{1})^{I} (R_i^{2})^{J}(R_{j}^{2})^{J}M_\ell^I \rho^I(g))))\\
  &=\tr^I((S\otimes \Id)(\sum_{i} ((S^{-1}(R_i^{1}))^{I} (R_i^{2})^{J})(\sum_{j}(R_j^{1})^{I}(R_{j}^{2})^{J})M_\ell^I \rho^I(g)))\\
  &=\tr^I((S\otimes \Id)((R')^{-1})^{IJ}(R')^{IJ}M_\ell^I \rho^I(g))
  =\tr_q^I(M_\ell^I).
\end{aligned}  
\end{equation}   
\end{proof}
The quantum trace identities still hold in the truncated case with substitution rule \eqref{eq:substitution_rule} applied \cite{Alekseev:1994au}.
\begin{lemma} 
Given a linear order at the base point of the standard graph, the exchange relation of monodromies for $\ell_{e}\succ \ell_{e'}$ is given as
\begin{equation}
 (R')^{IJ}\MleIl ((R')^{-1})^{IJ}\MlepJr = \MlepJr (R')^{IJ}\MleIl \left((R')^{-1}\right)^{IJ}\,.
 \end{equation}
 \label{lemma:commu_linear_order}
\end{lemma}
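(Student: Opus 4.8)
The plan is to reduce the statement to the elementary exchange relations of the lattice (graph) algebra and then repeat verbatim the bookkeeping used in the proof of Proposition~\ref{prop:loop_algebra}. First I would fix a ciliated graph realizing the standard generators and let $b$ be the common base point of the loops $\ell_e$ and $\ell_{e'}$; the cilium at $b$ induces the linear order, and $\ell_e\succ\ell_{e'}$ means that every edge of the graph entering the matrix product which represents $\bM^I_{\ell_e}$ comes after every edge entering $\bM^J_{\ell_{e'}}$ in that order. If the two loops overlap along more than $b$, I would first add an auxiliary vertex as in fig.~\ref{fig_c}, exactly as in the proof of Lemma~\ref{lemma:c_x}, so that the loops share only $b$ and each is written as a matrix product of edge holonomies, $\bM^I_{\ell_e}=\kappa_I^{-1}\,\stackrel{1}{U^I_{\mathcal{C}}}\cdots$ and $\bM^J_{\ell_{e'}}=\kappa_J^{-1}\,\stackrel{2}{U^J_{\mathcal{C}'}}\cdots$ for suitable curves.

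Next I would invoke the known exchange relation between two holonomies sharing only the endpoint $b$, with the first one later in the order, namely $(R')^{IJ}_b\stackrel{1}{U^I_{\mathcal{C}}}((R')^{-1})^{IJ}_b\stackrel{2}{U^J_{\mathcal{C}'}}=\stackrel{2}{U^J_{\mathcal{C}'}}(R')^{IJ}_b\stackrel{1}{U^I_{\mathcal{C}}}((R')^{-1})^{IJ}_b$, together with the functoriality (concatenation) rule for composing holonomies along a path and the fact that an $R'$-matrix attached at a vertex other than $b$ drops out when the relevant holonomy does not pass through it. Substituting the holonomy decompositions and moving the $(R')^{\pm1}$ through the product one edge at a time — each move being one application of the elementary relation, with the scalars $\kappa_I^{-1},\kappa_J^{-1}$ passing through freely — the computation telescopes to $(R')^{IJ}\MleIl ((R')^{-1})^{IJ}\MlepJr = \MlepJr (R')^{IJ}\MleIl ((R')^{-1})^{IJ}$. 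This is literally the mirror of the manipulation turning \eqref{eq:commutation_rep_1} into the monodromy relations in \cite{Alekseev:1994au,Alekseev:1995rn,Han:2023wiu}, and for $q$ a root of unity it goes through unchanged once the substitution rule \eqref{eq:substitution_rule} is applied, since those rules are built precisely to preserve functoriality, the quasi-Yang--Baxter identity and the intertwining property of $g$. An alternative I would keep in reserve is to deduce the $\ell_e\succ\ell_{e'}$ relation from the $\ell_e\prec\ell_{e'}$ one: interchanging the two edges at the cilium reverses the order and simultaneously trades $R$ for $R'$, so conjugating the $\ell_{e'}\prec\ell_e$ version of \eqref{eq:commutation_rep_1} by the flip and relabeling yields the assertion, the consistency with the $*$-structure being immediate from \eqref{eq:M_ell_property_3}.

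I expect the only genuine obstacle to be combinatorial rather than algebraic: one must arrange the holonomy decompositions so that (i) $\ell_e$ and $\ell_{e'}$ meet only at $b$, (ii) the relative order of \emph{all} constituent edges agrees with the global $\succ$ at $b$, and (iii) every $R'$-matrix that is inserted or removed in the intermediate steps sits at $b$ and is never ``trapped'' at an auxiliary vertex. Once the ciliated-graph setup is pinned down, each remaining step is a one-line invocation of an already-established lattice-algebra relation, so no essentially new computation is required.
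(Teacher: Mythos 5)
Your proposal follows essentially the same route as the paper's proof: the paper likewise adds auxiliary vertices so that $\bM^I_{\ell_e}=\kappa_I^{-1}\bold{U}^I_{C}\bold{U}^I_{C'}$ and $\bM^J_{\ell_{e'}}=\kappa_J^{-1}\bold{U}^J_{\widetilde{C}}\bold{U}^J_{\widetilde{C}'}$ meet only at the base point with the linear order $\widetilde{C}'\prec-\widetilde{C}\prec C'\prec-C$, and then pushes the $(R')^{\pm1}$ matrices through the product using the elementary holonomy exchange relations at that vertex, exactly the telescoping you describe. The strategy, the key lemma invoked, and the combinatorial caveats you flag all match the paper's argument, so this is the same proof in outline.
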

\begin{proof}
One can break loops $\ell_{e}$ and $\ell_{e'}$ into curves $C,C'$ and $\widetilde{C},\widetilde{C}'$ respectively by adding two additional vertices $y'$ an $y$ to the loops $\ell_e$ and $\ell_{e'}$ respectively as illustrated in fig.\ref{fig_decomp_add_twovertices}.
At the vertex $x$, the linear order is given as $\widetilde{C}'\prec -\widetilde{C}\prec C' \prec -C$.
The quantum monodromy $\bM^I_{\ell_{e}}$ can be decomposed into the product of quantum holonomies $\bold{U}_C^I$ and $\bold{U}_C'^I$ as $\bM^I_{\ell_{e}}=\kappa_{I}^{-1}\bold{U}_C^I\bold{U}_{C'}^I$. In the same way, the quantum monodromy $\bM^J_{\ell_{e'}}$ can be expressed in terms of quantum holonomies $\bold{U}_{\widetilde{C}}^J$ and $\bold{U}_{\widetilde{C}'}^J$ as $\bM^J_{\ell_{e'}}=\kappa_{J}^{-1}\bold{U}_{\widetilde{C}}^J\bold{U}_{\widetilde{C}'}^J$. The commutation relation of $\bM^I_{\ell_e}$ and $\bM^I_{\ell_{e'}}$ can be derived from the commutation relation of quantum holonomies.
\be
\begin{split}
\MleIl \left((R')^{-1}\right)_{x}^{IJ}\MlepJr &= \kappa_{I}^{-1}\UCl \UCpl \left((R')^{-1}\right)_{x}^{IJ}\kappa_{I}^{-1}\UCr \UCpr \\  
 &=\kappa_{I}^{-1}\kappa_{I}^{-1}\UCl \UCr \UCpl \UCpr \\
 &=\kappa_{I}^{-1}\kappa_{I}^{-1}\left((R')^{-1}\right)_{x}^{IJ}\UCr \UCl \UCpr \UCpl \left((R')^{-1}\right)_{x}^{IJ}\\
 &=\kappa_{I}^{-1}\kappa_{I}^{-1}\left((R')^{-1}\right)_{x}^{IJ}\UCr \UCpr \left(R'\right)_{x}^{IJ}\UCl \UCpl \left((R')^{-1}\right)_{}x^{IJ}\\
 &=\left((R')^{-1}\right)_{x}^{IJ}\MlepJr \left(R'\right)_{x}^{IJ}\MleIl \left((R')^{-1}\right)_{x}^{IJ}\;.
\end{split}
\ee
Using commutation relations of holonomies at vertex $x$, the derivation above is straightforward.
\end{proof}
\begin{lemma}
For $R$ and $R'$ satisfying the quasi-Yang-Baxter equation and $\varphi\equiv \varphi_{123}$ being quasi-invertible, the following relation holds
 \be
 \varphi_{123}R_{12}'\varphi^{-1}_{213}R'_{13}R_{13}\varphi_{213}R_{12}'^{-1}\varphi^{-1}\varphi_{123}R_{12}'\varphi^{-1}_{213}\varphi_{213}R_{12}\varphi^{-1}=\varphi_{123}R_{12}'\varphi^{-1}_{213}R'_{13}R_{13}\varphi_{213}R_{12}\varphi^{-1}.
 \ee
\label{lemma:varphi_123}
\end{lemma}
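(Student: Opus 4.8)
The plan is to isolate the single sub-word of the left-hand side that is not already present in the right-hand side and collapse it using only the quasi-invertibility of $\varphi$ together with quasi-triangularity of $R,R'$. Reading the left-hand side and grouping it as
\be
\Big(\varphi_{123}R_{12}'\varphi^{-1}_{213}R'_{13}R_{13}\varphi_{213}\Big)\,\Big(R_{12}'^{-1}\,(\varphi^{-1}\varphi)\,R_{12}'\Big)\,\big(\varphi^{-1}_{213}\varphi_{213}\big)\,R_{12}\,\varphi^{-1}\,,
\ee
one sees that the parenthesised prefix $\varphi_{123}R_{12}'\varphi^{-1}_{213}R'_{13}R_{13}\varphi_{213}$ is exactly the prefix of the right-hand side, including its trailing $\varphi_{213}$ (and here $\varphi^{-1}\varphi_{123}=\varphi^{-1}\varphi$ since $\varphi_{123}\equiv\varphi$). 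Hence it suffices to prove the local identity $\varphi_{213}\cdot R_{12}'^{-1}(\varphi^{-1}\varphi)R_{12}'\cdot(\varphi^{-1}_{213}\varphi_{213})=\varphi_{213}$, after which multiplying by $R_{12}\varphi^{-1}$ on the right gives the claim.

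First I would record the two idempotents produced by quasi-invertibility. By the relation quoted in Appendix~\ref{com_relation_truncated}, $\varphi^{-1}\varphi=(\Delta\otimes\Id)\Delta(e)=:P$ is an idempotent of $\TUQ^{\otimes 3}$, and transposing the first two tensor legs gives $\varphi^{-1}_{213}\varphi_{213}=(\Delta'\otimes\Id)\Delta(e)=:P'$, again idempotent. Next I would compute the conjugate $R_{12}'^{-1}\,P\,R_{12}'$ using quasi-triangularity in the form $\Delta(\xi)R'=R'\Delta'(\xi)$ for all $\xi\in\TUQ$ (equivalently $R$ intertwines $\Delta$ and $\Delta'$): writing $\Delta(e)=\sum e_{(1)}\otimes e_{(2)}$ one obtains $R_{12}'^{-1}PR_{12}'=\sum\big(R'^{-1}R'\,\Delta'(e_{(1)})\big)\otimes e_{(2)}$, and since in the truncated algebra $R'^{-1}R'$ equals the idempotent $\Delta'(e)$ on the first two legs while $\Delta'$ is an algebra homomorphism with $\Delta'(e)\Delta'(e_{(1)})=\Delta'(e_{(1)})$, this collapses to $R_{12}'^{-1}PR_{12}'=P'$.

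Assembling the pieces: the middle block becomes $P'\,P'=P'$ by idempotency, and then $\varphi_{213}\,P'=\varphi_{213}\,\varphi^{-1}_{213}\varphi_{213}=\varphi_{213}$ by the transposed quasi-invertibility identity $\varphi\varphi^{-1}\varphi=\varphi$. This reduces the left-hand side to $\varphi_{123}R_{12}'\varphi^{-1}_{213}R'_{13}R_{13}\varphi_{213}R_{12}\varphi^{-1}$, which is precisely the right-hand side. The step I expect to require the most care is the identity $R_{12}'^{-1}PR_{12}'=P'$ in the weak Hopf algebra setting, where $R'^{-1}R'$ is not the unit but the idempotent $\Delta'(e)$ on the relevant legs; one must check explicitly that this leftover idempotent is absorbed into $P'$ rather than obstructing the collapse (this is where the homomorphism property of $\Delta'$ and the counit axioms enter). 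Everything else is routine bookkeeping with the pentagon/quasi-invertibility relation $\varphi\varphi^{-1}\varphi=\varphi$ and its image under the transposition of the first two tensor factors.
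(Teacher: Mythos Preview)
Your proposal is correct and follows essentially the same route as the paper: both isolate the subword $\varphi_{213}R_{12}'^{-1}(\varphi^{-1}\varphi)R_{12}'(\varphi^{-1}_{213}\varphi_{213})$, use quasi-triangularity $\Delta(\xi)R'=R'\Delta'(\xi)$ to commute $R_{12}'$ past the idempotent $(\Delta\otimes\Id)\Delta(e)$, and then absorb the resulting $(\Delta'\otimes\Id)\Delta(e)$ into $\varphi_{213}$ via $\varphi\varphi^{-1}\varphi=\varphi$. Your packaging of the middle step as the single identity $R_{12}'^{-1}PR_{12}'=P'$ is a mild reorganization of the paper's line-by-line computation, and your explicit caveat about $R'^{-1}R'$ in the weak setting is more careful than the paper, which simply writes $\varphi_{213}R_{12}'^{-1}R_{12}'R_{12}\varphi^{-1}=\varphi_{213}R_{12}\varphi^{-1}$ without comment.
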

\begin{proof}
 \be
 \begin{split}
 \varphi_{213}R_{12}'^{-1}\varphi^{-1}\varphi_{123}R_{12}'\varphi^{-1}_{213}\varphi_{213}R_{12}\varphi^{-1}&= \varphi_{213}R_{12}'^{-1}\varphi^{-1}\varphi_{123}R_{12}'(\Delta'\otimes\Id)\Delta(e) R_{12}\varphi^{-1} \\
 &= \varphi_{213}R_{12}'^{-1}\varphi^{-1}\varphi_{123}(\Delta\otimes\Id)\Delta(e)R_{12}' R_{12}\varphi^{-1}\\
 &=\varphi_{213}R_{12}'^{-1}\varphi^{-1}\varphi_{123}R_{12}' R_{12}\varphi^{-1}\\
 &=\varphi_{213}R_{12}'^{-1}(\Delta\otimes\Id)\Delta(e)R_{12}' R_{12}\varphi^{-1}\\
 &=\varphi_{213}(\Delta'\otimes\Id)\Delta(e)R_{12}'^{-1}R_{12}' R_{12}\varphi^{-1}\\
 &=P_{12}[\varphi_{123}(\Delta\otimes\Id)\Delta(e)]R_{12}'^{-1}R_{12}' R_{12}\varphi^{-1}\\
 &=\varphi_{213}R_{12}'^{-1}R_{12}' R_{12}\varphi^{-1}\\
 &=\varphi_{213}R_{12}\varphi^{-1}\;.
 \end{split}
 \label{eq:proof_lemma_B3}
 \ee
 The first and fifth lines are obtained by quasi-triangularity, and the third and seventh lines are derived by the quasi-inverse property, $\varphi\varphi^{-1}\varphi=\varphi$:
 \be
 (\Id\otimes \Delta )\Delta(e)\varphi=
 \varphi(\Delta\otimes \Id)\Delta(e)=\varphi\;.
 \ee
 The notation $P_{12}$ used in the sixth equality of \eqref{eq:proof_lemma_B3} represents the permutation of the first and second elements. 
\end{proof}
\begin{lemma}
For $R$ and $R'$ obeying the quasi-Yang-Baxter equation, the following relation holds
 \be
 (id\otimes \Delta)(R'R)=\varphi_{123}R_{12}'\varphi^{-1}_{213}R'_{13}R_{13}\varphi_{213}R_{12}\varphi^{-1}.
 \ee
 \label{lemma:delta_R'R}
\end{lemma}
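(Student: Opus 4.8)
The plan is to reduce the identity to the two hexagon axioms of the quasi-triangular (weak) quasi-Hopf algebra $\TUQ$ and then to collapse the coassociators produced in the middle of the resulting product. Since $\Delta$ is an algebra map, so is $\Id\otimes\Delta$, hence $(\Id\otimes\Delta)(R'R)=(\Id\otimes\Delta)(R')\,(\Id\otimes\Delta)(R)$, and it suffices to compute the two factors separately. For the second factor I would invoke the hexagon identity $(\Id\otimes\Delta)(R)=\varphi_{231}^{-1}R_{13}\varphi_{213}R_{12}\varphi^{-1}$, which already exhibits the trailing block $R_{13}\varphi_{213}R_{12}\varphi^{-1}$ of the claimed right-hand side.

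For the first factor I would use the companion hexagon $(\Delta\otimes\Id)(R)=\varphi_{312}R_{13}\varphi_{132}^{-1}R_{23}\varphi_{123}$. This is an identity in $\TUQ^{\otimes 3}$, so relabelling the tensor legs by the cyclic permutation $1\to 2\to 3\to 1$ again produces a valid identity; on the left-hand side this relabelling sends $(\Delta\otimes\Id)(R)$ to $(\Id\otimes\Delta)(R')$ (because $R'=\sum_a R_a^{(2)}\otimes R_a^{(1)}=R_{21}$), and leg by leg on the right-hand side one finds $\varphi_{312}\mapsto\varphi_{123}$, $R_{13}\mapsto R_{21}=R'_{12}$, $\varphi_{132}^{-1}\mapsto\varphi_{213}^{-1}$, $R_{23}\mapsto R_{31}=R'_{13}$, $\varphi_{123}\mapsto\varphi_{231}$, so that $(\Id\otimes\Delta)(R')=\varphi_{123}R'_{12}\varphi_{213}^{-1}R'_{13}\varphi_{231}$. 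Multiplying the two factors then gives $\varphi_{123}R'_{12}\varphi_{213}^{-1}R'_{13}\,\big(\varphi_{231}\varphi_{231}^{-1}\big)\,R_{13}\varphi_{213}R_{12}\varphi^{-1}$, which is the desired expression once the central factor $\varphi_{231}\varphi_{231}^{-1}$ is removed.

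I expect the removal of this central factor to be the only real obstacle. For $\TUQ$ a weak Hopf algebra the coassociator is merely quasi-invertible, so $\varphi\varphi^{-1}=(\Id\otimes\Delta)\Delta(e)\neq e\otimes e\otimes e$, and one must show that the permuted idempotent $(\varphi\varphi^{-1})_{231}$ is absorbed between $R'_{13}$ and $R_{13}$, i.e.\ $R'_{13}\,(\varphi\varphi^{-1})_{231}\,R_{13}=R'_{13}R_{13}$. This is handled exactly as in the proof of Lemma \ref{lemma:varphi_123}: one rewrites the idempotent through the mixed counit axioms $(\epsilon\otimes\Id)(R)=e=(\Id\otimes\epsilon)(R)$ in their weak-Hopf form ($R$ times the appropriate idempotent equals $R$), using the pentagon relation to slide the coassociators into position. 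Assembling these steps yields the stated identity. (Alternatively one can bypass the relabelling argument and derive the formula for $(\Id\otimes\Delta)(R')$ directly from $R\Delta(a)=\Delta'(a)R$ together with the pentagon, in the same style as Lemma \ref{lemma:varphi_123}; this is longer but entirely parallel, and in the non-truncated limit where $\varphi=e\otimes e\otimes e$ it collapses to the familiar computation with the ordinary hexagon and Yang--Baxter relations.)
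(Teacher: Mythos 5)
Your proof is correct, and it arrives at the identity by a genuinely different route from the paper. The paper's proof is diagrammatic: it states the two hexagon identities $(\Id\otimes\Delta)(R)=\varphi^{-1}_{231}R_{13}\varphi_{213}R_{12}\varphi^{-1}_{123}$ and $(\Delta\otimes\Id)(R)=\varphi_{312}R_{13}\varphi^{-1}_{132}R_{23}\varphi_{123}$ by reading them off the braid pictures in figs.~\ref{fig:R_13R_12} and \ref{fig:R_13R_23}, and then reads the full double-braiding composite $\varphi_{123}R'_{12}\varphi^{-1}_{213}\cdot R'_{13}\cdot R_{13}\cdot\varphi_{213}R_{12}\varphi^{-1}$ directly off fig.~\ref{fig:RR'}, without performing the algebraic multiplication. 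You instead factor $(\Id\otimes\Delta)(R'R)=(\Id\otimes\Delta)(R')(\Id\otimes\Delta)(R)$, obtain $(\Id\otimes\Delta)(R')=\varphi_{123}R'_{12}\varphi^{-1}_{213}R'_{13}\varphi_{231}$ by cyclically relabelling the second hexagon (your leg-by-leg bookkeeping $\varphi_{312}\mapsto\varphi_{123}$, $R_{13}\mapsto R'_{12}$, $\varphi^{-1}_{132}\mapsto\varphi^{-1}_{213}$, $R_{23}\mapsto R'_{13}$, $\varphi_{123}\mapsto\varphi_{231}$ checks out), and multiply. Both arguments ultimately rest on the same two hexagon axioms, but your algebraic route has the merit of exposing a point the graphical reading glosses over: the residual factor $\varphi_{231}\varphi^{-1}_{231}$ between $R'_{13}$ and $R_{13}$, which is only an idempotent $(\Id\otimes\Delta)\Delta(e)$ (permuted) in the truncated weak-Hopf setting and must be absorbed using the quasi-invertibility and intertwining relations, exactly in the style of the paper's Lemma \ref{lemma:varphi_123}. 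You flag this and sketch the absorption rather than carrying it out in full, which is the one place your write-up is a sketch rather than a complete computation; the paper sidesteps it entirely by working at the level of morphisms between (truncated) representation spaces, where the idempotent acts as the identity.
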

\begin{proof}
We first check the quasi-triangularity of the quasi-Hopf algebra. One can directly read out the isomorphisms of representations from the graphical illustrations shown in fig.\ref{fig:RR'} and fig.\ref{fig:enter-label}. Firstly, from fig.\ref{fig:R_13R_12}, the expression for $(\rho^I\otimes \rho^J\otimes \rho^K)(\Id\otimes \Delta)(R)$ that is the isomorphism between $(V^I\otimes(V^J\otimes V^K))$ and $((V^J\otimes V^K)\otimes V^I) $ is read as
\be
(\rho^I\otimes \rho^J\otimes \rho^K)(\varphi^{-1}_{231}R_{13}\varphi_{213}R_{12}\varphi^{-1}_{123})\;.
\ee
From the fig.\ref{fig:R_13R_23}, the expression for $(\rho^I\otimes \rho^J\otimes \rho^K)(\Delta\otimes \Id)(R)$ that is the isomorphism between $((V^I\otimes V^J)\otimes V^K))$ and $(V^K\otimes (V^I\otimes V^J)) $ is read as
\be
(\rho^I\otimes \rho^J\otimes \rho^K)(\varphi_{312}R_{13}\varphi^{-1}_{132}R_{23}\varphi_{123})\;,
\ee
where $\varphi^{IJK}$ maps $((V^I\otimes V^J)\otimes V^K)$ to $(V^I \otimes (V^J\otimes V^K))$ and $(\varphi^{-1})^{IJK}$ maps $(V^I \otimes (V^J\otimes V^K))$ to $((V^I\otimes V^J)\otimes V^K)$ for both cases. $\varphi^{KIJ}\equiv (\varphi_{312})^{IJK}$ maps $(V^K\otimes V^I)\otimes V^J)$ to $(V^K \otimes (V^I\otimes V^J))$ and $(\varphi^{-1})^{IKJ}\equiv (\varphi^{-1}_{132})^{IJK}$ maps $(V^I \otimes (V^K\otimes V^J))$ to $((V^I\otimes V^K)\otimes V^J)$

The isomorphism $(\rho^I \otimes \rho^J \boxtimes  \rho^K)(R'R)\equiv (\rho^I\otimes \rho^J \otimes \rho^K)(id\otimes \Delta)(R'R)$ maps $(V^I \otimes (V^J\otimes V^K))$ to $(V^I \otimes (V^J\otimes V^K))$. From the fig.\ref{fig:RR'} , the expression for $(\rho^I\otimes \rho^J \otimes \rho^K)(\Id\otimes \Delta)(R'R)$ is read as
\be
(\rho^I\otimes \rho^J \otimes \rho^K)(\varphi_{123}R_{12}'\varphi^{-1}_{213}R'_{13}R_{13}\varphi_{213}R_{12}\varphi^{-1})\;.
\ee
\end{proof}
\begin{figure}[h!]
    \centering
    \includegraphics[width=0.45\textwidth]{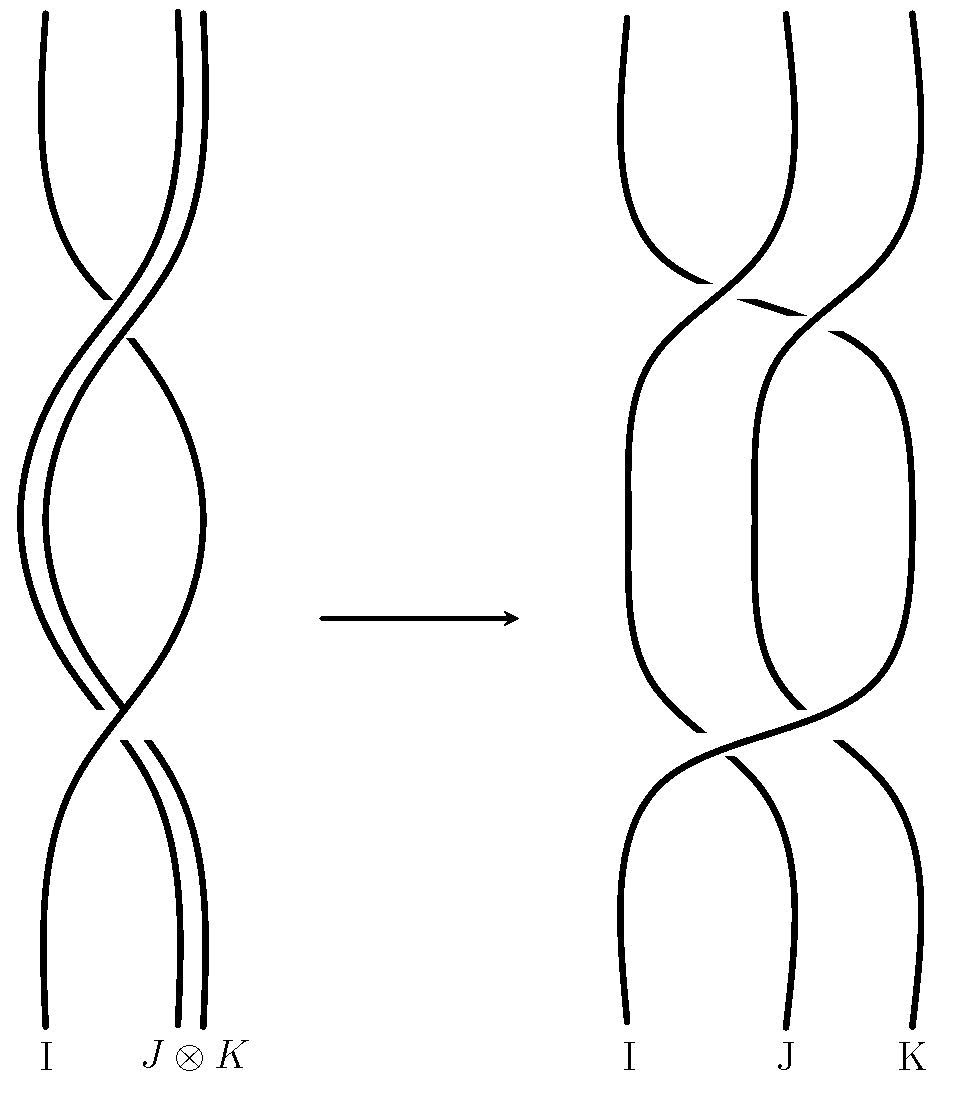}
    \caption{The action of the graph on both sides is read from bottom to top. On the left-hand side, the first action is given by $R$ and the second one is given by $R'$. On the right-hand side, the actions are given by $\varphi_{213}R\varphi^{-1}$, $R_{13}$, $R'_{13}$, and $\varphi_{123}R\varphi^{-1}_{213}$ respectively.}
    \label{fig:RR'}
\end{figure}

  \begin{figure}[th!]
      \begin{subfigure}[t]{0.38\linewidth}
      \begin{tikzpicture}[scale=1.2]
      \coordinate (A) at (0,-2.0);
\coordinate (B) at (1.0,-2.0);
\coordinate (C) at (1.1,-2.0);
\coordinate (D) at (1.0,2.0);
\coordinate (E) at (1.1,2.0);
\coordinate (F) at (1.1,2.0);
\coordinate (G) at (0.0,-0.6);
\coordinate (H) at (0.1,-0.6);
 \coordinate (O) at (0,-1.6);
 \coordinate (P) at (1.0,-1.6);
\coordinate (Q) at (1.1,-1.6);
 \coordinate (R) at (0,-0.8);
 \coordinate (S) at (0.1,-0.8);
 \coordinate (T) at (1.0,-0.8);
\coordinate (U) at (0.0,-0.2);
 \coordinate (V) at (0.1,-0.2);
\coordinate (W) at (1.0,-0.2);
 \coordinate (X) at (0.5,-1.21);
 \coordinate (Y) at (0.6,-1.17);
 \coordinate (Z) at (1.0,-0.6);
 \coordinate (I) at (0.4,-1.05);
 \coordinate (J) at (0.3,-1.1);
 \coordinate (K) at (4,-1.84);
 \draw[thick] (A) -- (O) ;
 \draw[thick] (B) -- (P) ;
 \draw[thick] (C) -- (Q) ;
 \draw[thick] (G) -- (U) ;
 \draw[thick] (H) -- (V) ;
 \draw[thick] (Z) -- (W) ;
 \draw (A) node[anchor=north]{$I$};
 \draw (1.05,-2.0) node[anchor=north]{$J\otimes K$};
 \draw (3.5,-2.0) node[anchor=north]{$I$};
 \draw (4.0,-2.0) node[anchor=north]{$J$};
 \draw (4.5,-2.0) node[anchor=north]{$K$};
\draw[thick] (P) to[out=90,in=333] 
  (X);
  \draw[thick] (Q) to[out=90,in=333] 
  (Y);
  \draw[thick] (O) to[out=90,in=270] 
  (Z);
  \draw[thick] (I) to[out=150,in=270] 
  (H);
  \draw[thick] (J) to[out=150,in=270] 
  (G);
          \draw[->, thick, >=stealth] (1.5,-1.25) -- (3,-1.25);
          \coordinate (a) at (3.5,-2.0);
\coordinate (b) at (4.0,-2.0);
\coordinate (c) at (4.5,-2.0);
\coordinate (o) at (3.5,-1.6);
 \coordinate (p) at (4.0,-1.6);
\coordinate (q) at (4.5,-1.6);
\coordinate (x) at (3.8,-1.29);
 \coordinate (y) at (4.2,-1.10);
 \coordinate (z) at (4.5,-0.6);

\coordinate (i) at (3.7,-1.19);
 \coordinate (j) at (4.1,-1.0);
 \coordinate (h) at (3.89,-0.6);
 \coordinate (g) at (3.5,-0.6);
\coordinate (u) at (3.5,-0.2);
 \coordinate (v) at (3.89,-0.2);
\coordinate (w) at (4.5,-0.2);
 \draw[thick] (a) -- (o) ;
 \draw[thick] (b) -- (p) ;
 \draw[thick] (c) -- (q) ;
 \draw[thick] (g) -- (u) ;
 \draw[thick] (h) -- (v) ;
 \draw[thick] (z) -- (w) ;

 \draw[thick] (p) to[out=90,in=333] 
  (x);
  \draw[thick] (q) to[out=90,in=333] 
  (y);
  \draw[thick] (o) to[out=90,in=270] 
  (z);
  \draw[thick] (i) to[out=150,in=270] 
  (g);
  \draw[thick] (j) to[out=150,in=270] 
  (h);
      \end{tikzpicture}
      \caption{}
\label{fig:R_13R_12}
      \end{subfigure}
      \begin{subfigure}[t]{0.38\linewidth}
      \begin{tikzpicture}[scale=1.2]
      \coordinate (A) at (0,-2.0);
\coordinate (B) at (0.1,-2.0);
\coordinate (C) at (1.1,-2.0);
\coordinate (D) at (1.0,2.0);
\coordinate (E) at (1.1,2.0);
\coordinate (F) at (1.1,2.0);
\coordinate (G) at (1.1,-0.6);
\coordinate (H) at (0.1,-0.6);
 \coordinate (O) at (0,-1.6);
 \coordinate (P) at (0.1,-1.6);
\coordinate (Q) at (1.1,-1.6);
 \coordinate (R) at (0,-0.8);
 \coordinate (S) at (0.1,-0.8);
 \coordinate (T) at (1.0,-0.8);
\coordinate (U) at (1.1,-0.2);
 \coordinate (V) at (0.1,-0.2);
\coordinate (W) at (1.0,-0.2);
 \coordinate (X) at (0.5,-1.21);
 \coordinate (Y) at (0.67,-1.23);
 \coordinate (Z) at (1.0,-0.6);
 \coordinate (I) at (0.4,-1.05);
 \coordinate (J) at (0.3,-1.1);
 \coordinate (K) at (4,-1.84);
 \draw[thick] (A) -- (O) ;
 \draw[thick] (B) -- (P) ;
 \draw[thick] (C) -- (Q) ;
 \draw[thick] (G) -- (U) ;
 \draw[thick] (H) -- (V) ;
 \draw[thick] (Z) -- (W) ;
 \draw (0.05,-2.0) node[anchor=north]{$I\otimes J$};
 \draw (C) node[anchor=north]{$K$};
  \draw (3.5,-2.0) node[anchor=north]{$I$};
  \draw (4.0,-2.0) node[anchor=north]{$J$};
  \draw (4.5,-2.0) node[anchor=north]{$K$};
\draw[thick] (P) to[out=90,in=210] 
  (X);
  \draw[thick] (Q) to[out=90,in=333] 
  (Y);
  \draw[thick] (O) to[out=90,in=270] 
  (Z);
  \draw[thick] (I) to[out=150,in=270] 
  (H);
  \draw[thick] (X) to[out=28,in=270] 
  (G);
  
  \draw[->, thick, >=stealth] (1.5,-1.25) -- (3,-1.25);
\coordinate (a) at (3.5,-2.0);
\coordinate (b) at (4.0,-2.0);
\coordinate (c) at (4.5,-2.0);

\coordinate (o) at (3.5,-1.6);
 \coordinate (p) at (4.0,-1.6);
\coordinate (q) at (4.5,-1.6);

\coordinate (x) at (4.5,-0.2);
 \coordinate (y) at (4.2,-1.10);
 \coordinate (z) at (3.95,-0.2);

\coordinate (i) at (3.75,-0.75);
 \coordinate (j) at (4.1,-1.0);
 \coordinate (h) at (3.89,-0.8);
 \coordinate (g) at (3.5,-0.6);
\coordinate (u) at (3.5,-0.2);
 \coordinate (v) at (3.89,-0.2);
\coordinate (w) at (4.5,-0.2);


 \draw[thick] (a) -- (o) ;
 \draw[thick] (b) -- (p) ;
 \draw[thick] (c) -- (q) ;
 
 \draw[thick] (p) to[out=90,in=270] 
  (x);
  \draw[thick] (q) to[out=90,in=313] 
  (y);
  \draw[thick] (o) to[out=90,in=270] 
  (z);
  \draw[thick] (i) to[out=150,in=270] 
  (u);
  \draw[thick] (j) to[out=140,in=340] 
  (h);
      \end{tikzpicture}
      \caption{}
\label{fig:R_13R_23}
      \end{subfigure}
      \caption{{\it (a)} The actions of the graph on both sides are read from bottom to top. On the left-hand side, the action is given by $R$. On the right-hand side, the actions are given by $\varphi_{213}R\varphi^{-1}$, $R_{13}$, and $\varphi^{-1}_{231}$ respectively.  {\it (b)} The actions of the graph on both sides are read from bottom to top. On the left-hand side, the action is given by $R$. On the right-hand side, the actions are given by $\varphi^{-1}_{132}R_{23}\varphi$, $R_{13}$, and $\varphi_{312}$ respectively.}
      \label{fig:enter-label}
  \end{figure}
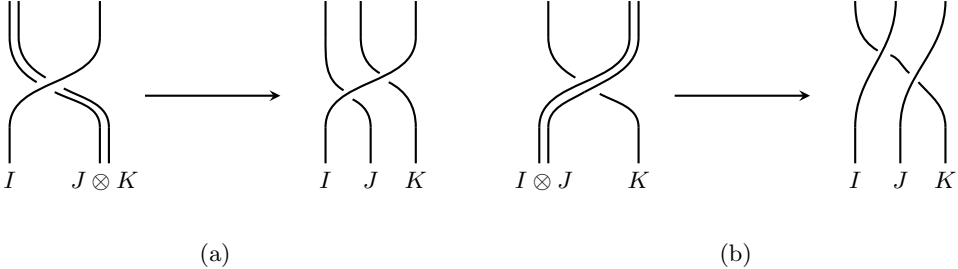  

\section{Quantum State on the torus and coherent state}
\label{app:Quantum_State_on_the_torus}
In this appendix, we briefly review the construction and features of the generic quantum states and coherent states on the torus \cite{Gazeau:2009zz} and give the derivation explicitly. The auxiliary space $\mathcal{H}_{\text{aux}}$ is the quantization of the torus as the phase space. Suppose that we have a pair of operators $(\bold{Q},\bold{P})$ with the commutation relation $[\bold{Q},\bold{P}]=i\hbar$. Since they are on the torus it is natural to impose the periodic condition. The periodic condition is defined in \cite{Gazeau:2009zz} as
\be
U(a,0)\Psi=e^{-i\kappa_1a}\Psi\quad,\quad U(0,b)\Psi=e^{-i\kappa_2b}\Psi, 
\ee
where the operator is defined as $U(\alpha,\beta)=e^{\frac{i}{\hbar}(\beta \bold{Q}-\alpha \bold{P})}$ and $(\kappa_1,\kappa_2)\in$ $[0,\frac{2\pi}{a})  \times [0,\frac{2\pi}{b})$. 
A state on the torus can be obtained from arbitrary Schwarz function by the map $\mathrm{P}_{\kappa}$ \cite{Gazeau:2009zz}. Given any Schwarz function $\psi(x),\;x\in\mathbb{R}$, one can construct a state on the torus from it as \cite{Gazeau:2009zz}:
\be
\begin{aligned}
\psi(x)\mapsto\mathrm{P}_{\kappa}\psi(x)&=\sum_{m,n\in\mathbb{Z}}(-1)^{Nmn}e^{i\left(\kappa_{1}ma-\kappa_{2}nb\right)}U(ma,nb)\psi(x)=\sum_{m,n\in\mathbb{Z}}(-1)^{Nmn}e^{i\left(\kappa_{1}ma-\kappa_{2}nb\right)}e^{-\frac{i}{2\hbar}mnab}e^{\frac{i}{\hbar}nbx}\psi(x-ma)\\
&=\sum_{m\in\mathbb{Z}}e^{i\kappa_{1}ma}\psi(x-ma)\sum_{n\in\mathbb{Z}}e^{\frac{i}{\hbar}nb\left(x-\kappa_{2}\hbar-\frac{1}{2}ma\right)+i\pi Nmn}\\
&=\sum_{m\in\mathbb{Z}}e^{i\kappa_{1}ma}\psi(x-ma)\sum_{n\in\mathbb{Z}}e^{i2\pi n\left(\frac{N}{a}(x-\kappa_{2}\hbar-\frac{1}{2}ma)+\frac{1}{2}Nm\right)}\\
&=\sum_{m\in\mathbb{Z}}e^{i\kappa_{1}ma}\psi(x-ma)\sum_{l\in\mathbb{Z}}\delta\left(\frac{N}{a}(x-\kappa_{2}\hbar-\frac{1}{2}ma)+\frac{1}{2}Nm-l\right)\\
&=\sum_{m\in\mathbb{Z}}e^{i\kappa_{1}ma}\psi(x-ma)\sum_{l\in\mathbb{Z}}\frac{a}{N}\delta\left(x-\kappa_{2}\hbar+\frac{1}{2}ma-\frac{1}{2}ma-l\frac{a}{N}\right)\\
&=\sum_{l\in\mathbb{Z}}\sum_{m\in\mathbb{Z}}e^{i\kappa_{1}ma}\psi(\frac{ab}{2\pi N}\kappa_{2}+l\frac{a}{N}-ma)\frac{a}{N}\delta\left(x-\frac{ab}{2\pi N}\kappa_{2}-l\frac{a}{N}\right)\\
&=\sum_{l\in\mathbb{Z}}\left(\sum_{m\in\mathbb{Z}}\sqrt{\frac{a}{N}}e^{i\kappa_{1}ma}\psi(\frac{ab}{2\pi N}\kappa_{2}+l\frac{a}{N}-ma)\right)\left(\sqrt{\frac{a}{N}}\delta\left(x-\frac{ab}{2\pi N}\kappa_{2}-l\frac{a}{N}\right)\right)\\
&=\sum_{l\in\mathbb{Z}}c_{l}(\psi)\delta_{l}(x)\,,
\end{aligned}
\ee
where $c_{l}(\psi):=\sum_{m\in\mathbb{Z}}\sqrt{\frac{a}{N}}e^{i\kappa_{1}ma}\psi(\frac{ab}{2\pi N}\kappa_{2}+l\frac{a}{N}-ma)$ and $\delta_{l}(x):=\sqrt{\frac{a}{N}}\delta\left(x-\frac{ab}{2\pi N}\kappa_{2}-l\frac{a}{N}\right)$. 
The state $\mathrm{P}_{\kappa}\psi(x)$ is in an $N$ dimensional vector space denoted as $S'(\mathbf{\kappa}, N)$ \cite{Gazeau:2009zz}. By definition, the inner product is defined as
\be
\langle\mathrm{P}_{\kappa}\psi|\mathrm{P}_{\kappa}\psi'\rangle=\sum^{N-1}_{l=0} \bar{c_l}(\psi)c_l(\psi').
\ee
 By applying the periodicity condition on the state, one has the periodicity condition for the coefficients $c_{l}(\psi)$, which we now show. Firstly, 
\be
\begin{aligned}
U(a,0)\delta_{l}(x)&=U(a,0)\sqrt{\frac{a}{N}}\delta\left(x-\frac{ab}{2\pi N}\kappa_{2}-l\frac{a}{N}\right)=e^{-\frac{i}{\hbar}aP}\sqrt{\frac{a}{N}}\delta\left(x-\frac{ab}{2\pi N}\kappa_{2}-l\frac{a}{N}\right)\\
&=\sqrt{\frac{a}{N}}\delta\left(x-\frac{ab}{2\pi N}\kappa_{2}-(l+N)\frac{a}{N}\right)\,.
\end{aligned}
\ee
Let $l'=l+N$, then
\be
\begin{aligned}
U(a,0)\delta_{l}(x)&=\sum_{l\in\mathbb{Z}}\left(\sum_{m\in\mathbb{Z}}\sqrt{\frac{a}{N}}e^{i\kappa_{1}ma}\psi(\frac{ab}{2\pi N}\kappa_{2}+l\frac{a}{N}-ma)\right)\left(\sqrt{\frac{a}{N}}\delta\left(x-\frac{ab}{2\pi N}\kappa_{2}-(l+N)\frac{a}{N}\right)\right)\\
&=\sum_{l'\in\mathbb{Z}}\left(\sum_{m\in\mathbb{Z}}\sqrt{\frac{a}{N}}e^{i\kappa_{1}ma}\psi(\frac{ab}{2\pi N}\kappa_{2}+(l'-N)\frac{a}{N}-ma)\right)\left(\sqrt{\frac{a}{N}}\delta\left(x-\frac{ab}{2\pi N}\kappa_{2}-(l')\frac{a}{N}\right)\right)\\
&=\sum_{l\in\mathbb{Z}}c_{l-N}(\psi)\delta_{l}(x)\,.
\end{aligned}
\ee
Therefore, 
\be
 U(a,0)\psi=e^{-i\kappa_{1}a}\psi=\sum_{l\in\mathbb{Z}}e^{-i\kappa_{1}a}c_{l}(\psi)\delta_{l}(x)\,.
\ee
We then obtain the periodicity condition on the coefficients themselves:
\be
c_{l-N}(\psi)=e^{-i\kappa_{1}a}c_{l}(\psi)\quad ,\quad
c_{l+N}(\psi)=e^{i\kappa_{1}a}c_{l}(\psi)\,.
\ee
Since we are dealing with the $N$-dimensional Hilbert space, one can establish a one-to-one correspondence between $S'(\mathbf{\kappa}, N)$ and $\mathbb{C}^N$ with the canonical basis $e^{\mathbf{\kappa}}_{j}=\sqrt{\frac{a}{N}}\sum_{l\in\mathbb{Z}} e^{-ima\kappa_1}\delta(x-x^j_l)\in\bC^N$, where $x^j_l=\frac{ab}{2\pi N}\kappa_2+j\frac{a}{N}-la$ \cite{Gazeau:2009zz}.

The coherent states on the torus can be defined by the displacement operator $U(\alpha,\beta)$ acting on the ground state. To construct the displacement operator, it is necessary to define the Weyl-Heisenberg group. In our setup, we can choose $(\ln{\bold{x}},\ln{\bold{y}},i\operatorname{id})$ to be the generators of the Weyl-Heisenberg Lie algebra, with the Lie bracket defined in \eqref{eq:commutation_lnx_lny}, i.e.  $[\ln{\bold{x}},\ln{\bold{y}}]=i\hbar\equiv i \frac{\pi}{k+2}$. Since $\ln{\bold{x}},\ln{\bold{y}}$ are anti-self-adjoint, one can set $\ln{\bold{x}}=i\bold{Q}$ and $\ln{\bold{y}}=
i\bold{P}$, where $\bold{Q}$ and $\bold{P}$ are self-adjoint operators with the commutation as $[\bold{Q},\bold{P}]=-i\hbar$. This commutation relation is derived from the commutation relation of $\ln{\bold{x}},\ln{\bold{y}}$. We then have $(i\bold{Q},-i\bold{P},i\operatorname{id})$ as generators of the Weyl-Heisenberg Lie algebra.  To follow the construction in \cite{Gazeau:2009zz}, we define $\bold{\Xt_0}=\bold{Q}-\frac{\pi}{k+2}, \bold{\Yt_0}=\bold{P}-p_0$, maintaining the same commutation relation $[\bold{\Xt_0},\bold{\Yt_0}]=-i\hbar$, and the position $\Xt_0$ is now in the range of $(0,2\pi)$.} 

 The exponential map of them gives us the Weyl-Heisenberg group elements, which are unitary operators and defines unitary irreducible representations.
Since the quantum states are defined on the torus, they must satisfy the quantization condition \eqref{eq:quantization_condition_torus}.
 Following \cite{Gazeau:2009zz}, the displacement operator in our case is defined as $U(\alpha, \beta)=e^{\frac{i}{\hbar}(\beta \bold{\Xt_0}+\alpha \bold{\Yt_0})}$.
\medskip

  The Schwartz function we choose is the standard coherent state for harmonic oscillator in the position representation, where the ground state is defined as 
\be
\psi^{\mathcal{Z}}_{(0,0)}(x)=\left(\frac{\mathfrak{I}\mathcal{Z}}{\pi\hbar}\right)^{\frac{1}{4}} 
e^{-\frac{\mathcal{JZ}}{2\hbar}x^2}\;.
\ee
Here we use the same notation as in \cite{Gazeau:2009zz} and $\mathfrak{I}$ and $\mathcal{Z}$ are $-i$ and $i$ respectively. The regular coherent state is then obtained from the action of the displacement operator on the ground state:
\be
\begin{split}
 \psi^{\mathcal{Z}}_{(\Xt_0,\Yt_0)}(x):=e^{\frac{i}{\hbar}(\Yt_0\bold{\Xt_0}+\Xt_0\bold{\Yt_0})} \psi^{\mathcal{Z}}_{(0,0)}(x)=\left(\frac{\mathfrak{I}\mathcal{Z}}{\pi\hbar}\right)^{\frac{1}{4}}e^{-\frac{i\Yt_0\Xt_0}{2\hbar}}e^{\frac{i}{\hbar}\Yt_0 x}e^{i\frac{\mathcal{Z}}{2\hbar}(x-\Xt_0)^{2}}\;. 
 \end{split}
\ee
The coherent state on the torus is defined as \cite{Gazeau:2009zz}:
\be
\begin{split}
 \psi^{\mathcal{Z},\mathbf{\kappa}}_{(\Xt_0,\Yt_0)}(x)&:= P_k \psi^{\mathcal{Z}}_{(\Xt_0,\Yt_0)}(x) =\sum_{m,n\in\mathbb{Z}}(-1)^{Nmn}e^{i\left(\kappa_{1}ma-\kappa_{2}nb\right)}U(ma,nb)\psi_{(\Xt_0,\Yt_0)}^{\mathcal{Z}}(x)\\
 &=\sum_{m,n\in\mathbb{Z}}(-1)^{Nmn}e^{i\left(\kappa_{1}ma-\kappa_{2}nb\right)}e^{\frac{i}{\hbar}(nb \bold{\Xt_0}+ ma \bold{\Yt_0})}(\left(\frac{\mathfrak{I}\mathcal{Z}}{\pi\hbar}\right)^{\frac{1}{4}}e^{-\frac{i\Yt_0\Xt_0}{2\hbar}}e^{\frac{i}{\hbar}\Yt_0x}e^{i\frac{\mathcal{Z}}{2\hbar}(x-\Xt_0)^{2}})\\
 &=\sum_{m,n\in\mathbb{Z}}(-1)^{Nmn}e^{i\left(\kappa_{1}ma-\kappa_{2}nb\right)}e^{-\frac{i}{2\hbar}(mnab )}e^{\frac{i}{\hbar}(nb x)}(\left(\frac{\mathfrak{I}\mathcal{Z}}{\pi\hbar}\right)^{\frac{1}{4}}e^{-\frac{i\Yt_0\Xt_0}{2\hbar}}e^{\frac{i}{\hbar}\Yt_0(x-ma)}e^{i\frac{\mathcal{Z}}{2\hbar}(x-ma-\Xt_0)^{2}})\\
 &=\left(\frac{\mathfrak{I}\mathcal{Z}}{\pi\hbar}\right)^{\frac{1}{4}} e^{-\frac{i\Yt_0\Xt_0}{2\hbar}}\sum_m e^{i(\kappa_{1}ma)} e^{\frac{i}{\hbar}\Yt_0(x-ma)}e^{i\frac{\mathcal{Z}}{2\hbar}(x-ma-\Xt_0)^{2}}\sum_n e^{\frac{i}{\hbar}nb(x-\frac{1}{2}ma-\hbar\kappa_2)+i\pi Nmn}\\
 &=\left(\frac{\mathfrak{I}\mathcal{Z}}{\pi\hbar}\right)^{\frac{1}{4}} e^{-\frac{i\Yt_0\Xt_0}{2\hbar}}\sum_m e^{i(\kappa_{1}ma)} e^{\frac{i}{\hbar}\Yt_0(x-ma)}e^{i\frac{\mathcal{Z}}{2\hbar}(x-ma-\Xt_0)^{2}}\sum_n e^{i2\pi n(\frac{N}{a}(x-\frac{1}{2}ma-\hbar\kappa_2)+\frac{1}{2} Nm)}\\
 &=\left(\frac{\mathfrak{I}\mathcal{Z}}{\pi\hbar}\right)^{\frac{1}{4}} e^{-\frac{i\Yt_0\Xt_0}{2\hbar}}\sum_m e^{i(\kappa_{1}ma)} e^{\frac{i}{\hbar}\Yt_0(x-ma)}e^{i\frac{\mathcal{Z}}{2\hbar}(x-ma-\Xt_0)^{2}} \frac{a}{N} \sum_l \delta(x-\frac{ab}{2\pi N}\kappa_2-l\frac{a}{N})\\
 &=\left(\frac{\mathfrak{I}\mathcal{Z}}{\pi\hbar}\right)^{\frac{1}{4}} e^{-\frac{i\Yt_0\Xt_0}{2\hbar}}\sum_m \sum_l e^{i(\kappa_{1}ma)} e^{\frac{i}{\hbar}\Yt_0(\frac{ab}{2\pi N}\kappa_2+l\frac{a}{N}-ma)}e^{i\frac{\mathcal{Z}}{2\hbar}(\frac{ab}{2\pi N}\kappa_2+l\frac{a}{N}-ma-\Xt_0)^{2}} \frac{a}{N}  \delta(x-\frac{ab}{2\pi N}\kappa_2-l\frac{a}{N})\;.
 \end{split}
\ee
In the fourth equality, we obtain the result by inserting $(-1)^{Nmn}=(e^{i\pi})^{Nmn}$. We can choose the sum of $l$ to be within one fixed periodicity by introducing another sum over $j\in \mathbb{Z}$.
\be
\begin{split}
 =&\left(\frac{\mathfrak{I}\mathcal{Z}}{\pi\hbar}\right)^{\frac{1}{4}}\sqrt{\frac{a}{N}}e^{-\frac{i\Yt_0\Xt_0}{2\hbar}}\sum_{l=0}^{N-1}\sum_{j\in\mathbb{Z}}\sum_{m\in\mathbb{Z}}e^{\frac{i}{\hbar}\Yt_0\left(\frac{ab}{2\pi N}\kappa_{2}+\left(jN+l\right)\frac{a}{N}-ma\right)}e^{i\frac{\mathcal{Z}}{2\hbar}(\frac{ab}{2\pi N}\kappa_{2}+\left(jN+l\right)\frac{a}{N}-ma-\Xt_0)^{2}}\\ 
 &\sqrt{\frac{a}{N}}e^{i\kappa_{1}ma}\delta\left(x-\frac{ab}{2\pi N}\kappa_{2}-\left(jN+l\right)\frac{a}{N}\right)\\
 =&\left(\frac{\mathfrak{I}\mathcal{Z}}{\pi\hbar}\right)^{\frac{1}{4}}\sqrt{\frac{a}{N}}e^{-\frac{i\Yt_0\Xt_0}{2\hbar}}\sum_{l=0}^{N-1}\sum_{j\in\mathbb{Z}}\sum_{m\in\mathbb{Z}}e^{\frac{i}{\hbar}\Yt_0\left(\frac{ab}{2\pi N}\kappa_{2}+\left(l\right)\frac{a}{N}-(m-j)a\right)}e^{i\frac{\mathcal{Z}}{2\hbar}(\frac{ab}{2\pi N}\kappa_{2}+\left(l\right)\frac{a}{N}-(m-j)a-\Xt_0)^{2}}\\ 
 &\sqrt{\frac{a}{N}}e^{i\kappa_{1}(m-j)a}e^{i\kappa_{1}ja}\delta\left(x-\frac{ab}{2\pi N}\kappa_{2}-\left(jN+l\right)\frac{a}{N}\right)\\
 =&\left(\frac{\mathfrak{I}\mathcal{Z}}{\pi\hbar}\right)^{\frac{1}{4}}\sqrt{\frac{a}{N}}e^{-\frac{i\Yt_0\Xt_0}{2\hbar}}\sum_{l=0}^{N-1}\sum_{m'\in\mathbb{Z}}e^{\frac{i}{\hbar}\Yt_0\left(\frac{ab}{2\pi N}\kappa_{2}+\left(l\right)\frac{a}{N}-(m')a\right)}e^{i\frac{\mathcal{Z}}{2\hbar}(\frac{ab}{2\pi N}\kappa_{2}+\left(l\right)\frac{a}{N}-(m')a-\Xt_0)^{2}}e^{i\kappa_{1}(m')a}\\ 
 &\sum_{j\in\mathbb{Z}}\sqrt{\frac{a}{N}}e^{i\kappa_{1}ja}\delta\left(x-\frac{ab}{2\pi N}\kappa_{2}-\left(jN+l\right)\frac{a}{N}\right)\\
 =&\left(\frac{\mathfrak{I}\mathcal{Z}}{\pi\hbar}\right)^{\frac{1}{4}}\sqrt{\frac{a}{N}}e^{-\frac{i\Yt_0\Xt_0}{2\hbar}}\sum_{l=0}^{N-1}\sum_{m'\in\mathbb{Z}}e^{\frac{i}{\hbar}\Yt_0\left(\frac{ab}{2\pi N}\kappa_{2}+\left(l\right)\frac{a}{N}-(m')a\right)}e^{i\frac{\mathcal{Z}}{2\hbar}(\frac{ab}{2\pi N}\kappa_{2}+\left(l\right)\frac{a}{N}-(m')a-\Xt_0)^{2}}e^{i\kappa_{1}(m')a}\\ 
 &\sum_{j\in\mathbb{Z}}\sqrt{\frac{a}{N}}e^{-i\kappa_{1}ja}\delta\left(x-\frac{ab}{2\pi N}\kappa_{2}-\left(-jN+l\right)\frac{a}{N}\right)\\
 =&\left(\frac{\mathfrak{I}\mathcal{Z}}{\pi\hbar}\right)^{\frac{1}{4}}\sqrt{\frac{a}{N}}e^{-\frac{i\Yt_0\Xt_0}{2\hbar}}\sum_{l=0}^{N-1}\sum_{m'\in\mathbb{Z}}e^{\frac{i}{\hbar}\Yt_0\left(\frac{ab}{2\pi N}\kappa_{2}+\left(l\right)\frac{a}{N}-(m')a\right)}e^{i\frac{\mathcal{Z}}{2\hbar}(\frac{ab}{2\pi N}\kappa_{2}+\left(l\right)\frac{a}{N}-(m')a-\Xt_0)^{2}}e^{i\kappa_{1}(m')a}\\ 
 &\sum_{j\in\mathbb{Z}}\sqrt{\frac{a}{N}}e^{-i\kappa_{1}ja}\delta\left(x-\frac{ab}{2\pi N}\kappa_{2}-l\frac{a}{N}+ja\right)\\
 =&\sum_{l=0}^{N-1} c_{(q',p')}(l)e^{\mathbf{\kappa}}_{l}(x) \;.
\end{split}
\ee

In our convention, we set: $a=2\pi=b$, $\; \kappa_1=0=\kappa_2,\; N=2(k+2),\; \Xt_0=\Xt_0,\; \Yt_0=\Yt_0,\; \hbar=\frac{\pi}{k+2},\; \mathfrak{I}=-i,\,\mathcal{Z}=i$. The final expressions become:
\be
\begin{split}
\xi_{(q',-p')}(l)&=\overline{\xi_l(q',p')}=c_{(q',p')}(l)=\left(\frac{1}{k+2}\right)^{\frac{1}{4}}e^{\frac{ip'q'(k+2)}{2\pi}}\sum_{m'\in\mathbb{Z}}e^{\frac{i(k+2)}{\pi}p'\left(l\frac{2\pi}{k+2}-2\pi m'-q'\right)}e^{-\frac{k+2}{2\pi}(l\frac{2\pi}{k+2}-2\pi m'-q')^{2}}\\
\xi_{(q',p')}(l)&=\overline{c_{(q',p')}(l)}=\left(\frac{1}{k+2}\right)^{\frac{1}{4}}e^{\frac{-ip'q'(k+2)}{2\pi}}\sum_{m'\in\mathbb{Z}}e^{\frac{-i(k+2)}{\pi}p'\left(l\frac{2\pi}{k+2}-2\pi m'-q'\right)}e^{-\frac{k+2}{2\pi}(l\frac{2\pi}{k+2}-2\pi m'-q')^{2}}\\
e_l(x)&=\sum_{j\in\mathbb{Z}}\sqrt{\frac{\pi}{k+2}}\delta\left(x-l\frac{\pi}{k+2}+j2\pi\right)\;.
\end{split}
\ee

\nocite*{}
\bibliographystyle{bib-style} 
\bibliography{QI.bib}

\end{document}